\newtheorem{theorem}{Theorem}
\newtheorem{lemma}[theorem]{Lemma}
\newtheorem{proposition}[theorem]{Proposition}
\newenvironment{proof}[1][Proof]{\noindent\textbf{#1}\ \ }{\hfill\rule{0.5em}{0.5em}}
\def\Tr{\operatorname{Tr}}
\def\1{\openone}
\def\swap{\operatorname{SWAP}}
\begin{document}
\title{Optimal input states for quantifying the performance of continuous-variable unidirectional and
bidirectional teleportation}%

\author{Hemant K. Mishra}
\affiliation{Hearne Institute for Theoretical Physics, Department of Physics and Astronomy,
 and Center for Computation and Technology, Louisiana State University, Baton Rouge, Louisiana 70803, USA}
 \affiliation{School of Electrical and Computer Engineering, Cornell University, Ithaca, New York 14850, USA}
\author{Samad Khabbazi Oskouei}
\affiliation{Department of Mathematics, Varamin-Pishva Branch, Islamic Azad University, Varamin, 33817-7489, Iran}
\author{Mark M. Wilde}
\affiliation{Hearne Institute for Theoretical Physics, Department of Physics and Astronomy,
 and Center for Computation and Technology, Louisiana State University, Baton Rouge, Louisiana 70803, USA}
 \affiliation{School of Electrical and Computer Engineering, Cornell University, Ithaca, New York 14850, USA}



\date{\today}

\begin{abstract}
Continuous-variable (CV) teleportation  is a fundamental protocol in quantum information science. A number of experiments have been designed to simulate  ideal teleportation under realistic conditions. In this paper, we detail an analytical approach for determining  optimal input states for quantifying the performance of CV unidirectional and bidirectional teleportation. The metric that we consider for quantifying performance is the energy-constrained channel fidelity between ideal teleportation and its experimental implementation, and along with this, our focus is on determining optimal input states for distinguishing the ideal process from the experimental one.
We prove that, under certain energy constraints, the optimal input state in unidirectional, as well as bidirectional, teleportation is a finite entangled superposition of twin-Fock states saturating the energy constraint. Moreover, we also prove that, under the same constraints, the optimal states are unique; that is, there is no other optimal finite entangled superposition of twin-Fock states.
\end{abstract}

\maketitle

\section{Introduction}

Quantum teleportation  is a foundational protocol in quantum information science that has no classical analogue \cite{bennett1993} (see also \cite{furusawa2007quantum}).
It consists of transmitting an unknown quantum state from one place to another by using shared entanglement and local operations and classical communication (LOCC). 
Quantum teleportation plays an important role in quantum technologies such as quantum information processing protocols \cite{nielsen2001quantum}, quantum computing \cite{gottesman1999demonstrating, knill2001scheme}, and quantum networks~\cite{hermans2022qubit}.
Since the invention of this protocol, various modifications have been proposed, such as probabilistic teleportation \cite{feng2006probabilistic, pati2007probabilistic, yan2010probabilistic}, controlled teleportation \cite{deng2005symmetric, xi2007controlled, zhou2007multiparty, man2007genuine}, and bidirectional teleportation \cite{vaidman1994teleportation, huelga2001quantum, zha2013bidirectional, li2013bidirectional, yan2013bidirectional, fu2014general, hassanpour2016bidirectional, yang2017bidirectional, chen2020bidirectional}.
There has also been significant progress in implementing quantum teleportation in laboratories around the world in the last three decades \cite{liu2020applications}.
Several experiments have implemented the teleportation protocol for simple quantum systems \cite{bouwmeester1997experimental, riebe2004deterministic, ursin2004quantum, wang2015quantum, ren2017, feng2020}, and attempts are being made to extend them to more complex quantum systems \cite{hu2020experimental, luo2019quantum, ma2012experimental, jin2010experimental}.

The first theoretical proposal for quantum teleportation was for two-level quantum systems, also commonly called qubits \cite{bennett1993}. Later, continuous-variable (CV) teleportation was devised as an extension of the original protocol to quantum systems described by infinite-dimensional Hilbert spaces  \cite{vaidman1994teleportation, braunstein1998teleportation}.
This was followed by many experimental implementations of CV teleportation, which include teleportation of collective spins of atomic ensembles \cite{krauter2013deterministic, sherson2006quantum}, polarisation states of photon beams \cite{jin2010experimental}, coherent states \cite{yukawa2008high},  etc.
In standard CV teleportation, the  entangled resource state shared between the sender and receiver, respectively Alice and Bob, is a two-mode squeezed vacuum (TMSV) state. The protocol begins with Alice mixing an unknown input state with her share of the entanglement on a balanced beamsplitter and then performing homodyne detection of complementary quadratures. Based on the classical measurement outcomes received by Alice and subsequently transmitted to Bob, he then performs displacement operations on his share of the TMSV state and recovers an approximation of the original state \cite{braunstein1998teleportation}.

An ideal implementation of CV teleportation in principle allows for perfect transmission of quantum states and hence simulates an ideal quantum channel. However, an ideal implementation also demands the unphysical conditions of noiseless homodyne detection and infinite squeezing in the TMSV state, which is not possible in practice because both noiseless homodyne detection and infinite squeezing require infinite energy. 
Any experimental implementation of CV teleportation accounts for an unideal detection and finite squeezing, which results in an imperfect transmission of quantum states, and hence simulates a noisy quantum channel \cite{braunstein1998teleportation}.
It is therefore important for experimentalists to employ performance metrics, as well as quantify the performance, for any experimental simulation of ideal teleportation.

Several works on characterising the performance of experimental implementations of the teleportation protocol have been conducted for finite-dimensional quantum systems in the past few years \cite{christandl2021asymptotic, bang2018fidelity, roy2020rating, studzinski2022efficient, wagner2009performance, wagner2009performance2,HSW22}, including a more recent work on bidirectional teleportation, which benchmarks the performance in terms of normalised diamond distance and channel infidelity for transmission of arbitrary quantum states \cite{siddiqui2020quantifying}.
There have also been many theoretical and experimental works on quantifying the performance of experimental implementations of the CV teleportation protocol. However, most of them study the performance by evaluating specific classes of quantum states, such as coherent states  \cite{hammerer2005quantum, johnson2002continuous, grosshans2001quantum, braunstein2000criteria, johnson2002continuous}, pure single-mode Gaussian states \cite{chiribella2014quantum,kogias2014continuous}, squeezed states \cite{adesso2008quantum}, cat states \cite{seshadreesan2015non}, etc. All such evaluations  are incomplete, in the sense that they test the performance by transmitting specific states rather than arbitrary unknown states.
A true quantifier for CV unidirectional teleportation was given in \cite{SSW20}, which benchmarks the performance of an experimental implementation in terms of the energy-constrained channel fidelity between ideal teleportation and its experimental implementation. We also note here that \cite{SSW20} is foundational for the present paper.

In this paper, we quantify the performance of any experimental implementation of CV unidirectional, as well as bidirectional, teleportation, under certain energy constraints. The performance metric that we consider is the energy-constrained channel fidelity between an ideal teleportation and its experimental implementation. We explicitly find {\it optimal} input states, i.e., quantum states whose output fidelity corresponding to the ideal channel and its experimental approximation is the same as the energy-constrained channel fidelity between the two channels. Our method is purely analytical, employing optimization techniques from multivariable calculus. The optimal states for unidirectional,  as well as bidirectional, teleportation are finite entangled superpositions of twin-Fock states saturating the energy constraint. Furthermore, we prove that the optimal input states are unique; i.e., there is no other optimal finite entangled superposition of twin-Fock states. 

Our results on bidirectional teleportation are also related to one of the most interesting mathematical problems in quantum information theory: the study of additive and multiplicative properties of measures associated with quantum channels \cite{amosov2000additivity, holevo2006additivity, holevo2006multiplicativity, holevo2015gaussian}. Much progress has been made in addressing these additivity issues \cite{cubitt2008counterexamples, hastings2009superadditivity, aubrun2011hastings, fukuda2014revisiting, wilde2011classical}, settling some of the questions posed in \cite{krueger2005some, ruskai2007some}.
The fidelity of quantum states is well known to be multiplicative for tensor-product quantum states \cite{wilde2011classical}.
This induces an inequality for energy-constrained channel fidelity between two tensor-product channels.
As a consequence of our work, we give examples where the induced inequality is strict; that is, our results also imply that the energy-constrained fidelity between the identity channel and an additive-noise channel is strictly sub-multiplicative. 

The rest of our paper is organized as follows. In Section~\ref{prel}, we  review some definitions. We present a derivation of the optimal input state for CV unidirectional teleportation in Section~\ref{unidirectional}, and for CV bidirectional teleportation in Section~\ref{bidirectional}. We show in Section~\ref{mult} that the energy-constrained fidelity between the ideal swap channel and the tensor product of two additive-noise channels  is strictly sub-multiplicative. We then discuss possible extensions and generalizations of the present work in Section~\ref{discussion}. Finally, in Section~\ref{con} we summarize our results and outline questions for future work.

The appendices contain necessary calculations for deriving the results. In Appendix~\ref{appendix:onedirectional}, we provide proofs of some preliminary results required to derive the optimal input state for CV unidirectional teleportation. Similarly, we prove some preliminary results in Appendix~\ref{app:bidirectional} that are used to derive the optimal input state for CV bidirectional teleportation.

\section{Preliminaries}\label{prel}

Let $\mathcal{H}$ be a separable Hilbert space, and let $T$ be an operator acting on $\mathcal{H}$.
The adjoint of $T$ is the unique operator $T^{\dagger}$ acting on $\mathcal{H}$ defined by
$\langle \phi \vert T \vert \psi \rangle = \overline{\langle  \psi \vert T^{\dagger} \vert \phi \rangle}$ for all $\vert \phi \rangle,\vert \psi \rangle \in \mathcal{H};$
 $T$ is said to be self-adjoint if $T=T^{\dagger}$.
 If $\Tr (\sqrt{T^{\dagger}T}) < \infty$ then $T$ is said to be a trace-class operator, and its trace norm is defined as $\|T\|_1 \coloneqq \Tr (\sqrt{T^{\dagger}T})$.
A quantum state is a positive semi-definite, trace-class operator with trace norm equal to one.
We denote by $\mathcal{D}(\mathcal{H})$ the set of all quantum states or density operators acting on $\mathcal{H}$. 
Let $\rho, \sigma \in \mathcal{D}(\mathcal{H})$. The fidelity between $\rho$ and $\sigma$ is defined by \cite{uhlmann1976transition}
\begin{equation}\label{new8_eqn1}
    F(\rho, \sigma) \coloneqq \left\|\sqrt{\rho} \sqrt{\sigma}\right\|_1^2.
\end{equation}
If one of the quantum states is pure, i.e., say $\rho=\vert \psi \rangle\!\langle \psi \vert,$ then $F(\rho, \sigma)=\Tr(\rho \sigma)$. The
sine distance between $\rho$ and $\sigma$ is given by \cite{R02,R03,GLN04,rastegin2006sine}
\begin{equation}
    C(\rho, \sigma) \coloneqq \sqrt{1-F(\rho, \sigma)}.
\end{equation}
The following inequalities relate the fidelity, sine distance, and trace distance \cite[Theorem~1]{fuchs1999cryptographic} 
\begin{equation}\label{new_eqn50}
    1-\sqrt{F(\rho, \sigma)} \leq \frac{1}{2} \left\|\rho - \sigma \right\|_1 \leq C(\rho, \sigma).
\end{equation}

The set of bounded operators on $\mathcal{H}$ forms a $C^*$-algebra under the operator norm, and we denote it by $\mathcal{L}(\mathcal{H})$.
Let $\mathcal{H}_A$ denote the Hilbert space corresponding to a quantum system $A$.
A quantum channel from a quantum system $A$ to a quantum system $B$ is a completely positive, trace preserving linear map from $\mathcal{L}(\mathcal{H}_A)$ to $\mathcal{L}(\mathcal{H}_B)$.
Let $\mathcal{M}_{A \to B}$ and $\mathcal{N}_{A \to B}$ be quantum channels. 
Let $H_A$ be a Hamiltonian corresponding to the quantum system $A,$
and let $R$ denote a reference system.
The energy-constrained channel fidelity between $\mathcal{M}_{A \to B}$ and $\mathcal{N}_{A \to B}$ for $E \in [0, \infty)$ is defined by \cite{Sh19, SWAT18}
\begin{align}\label{new6_eqn1}
    &F_E(\mathcal{M}_{A \to B}, \mathcal{N}_{A \to B}) \coloneqq \nonumber\\
    &\inf_{\substack{\rho_{RA} : \Tr(H_A \rho_A) \leq E}} F(\mathcal{M}_{A \to B}(\rho_{RA}), \mathcal{N}_{A \to B}(\rho_{RA})),
\end{align}
where $\rho_{RA} \in \mathcal{D}(\mathcal{H}_R \otimes \mathcal{H}_A),$ $\rho_A=\Tr_R(\rho_{RA}),$ and it is  implicit that the identity channel $\mathcal{I}_R$ acts on the reference system $R$. Furthermore, the optimization in \eqref{new6_eqn1} is taken over every possible reference system $R$.
Similarly, the energy-constrained sine distance between $\mathcal{M}_{A \to B}$ and $\mathcal{N}_{A \to B}$ for $E \in [0, \infty)$ is defined by \cite{Sh19, SWAT18}
\begin{align}\label{new6_eqn2}
    &C_E(\mathcal{M}_{A \to B}, \mathcal{N}_{A \to B}) \coloneqq \nonumber\\
    &\sup_{\substack{\rho_{RA} : \Tr(H_A \rho_A) \leq E}} C(\mathcal{M}_{A \to B}(\rho_{RA}), \mathcal{N}_{A \to B}(\rho_{RA})).
\end{align}
Although the optimizations in \eqref{new6_eqn1} and \eqref{new6_eqn2} are over arbitrary mixed states and arbitrary reference systems, it suffices to restrict the optimization over pure states such that the reference system $R$ is isomorphic to the channel input system $A$. This is a consequence of purification, the Schmidt decomposition, and data processing \cite[Section 3.5.4]{khatri2020principles}. We thus have
\begin{align}\label{new6_eqn3}
    &F_E(\mathcal{M}_{A \to B}, \mathcal{N}_{A \to B}) = \nonumber\\
    &\inf_{\substack{\phi_{RA} : \Tr(H_A \phi_A) \leq E}} F(\mathcal{M}_{A \to B}(\phi_{RA}), \mathcal{N}_{A \to B}(\phi_{RA})),
\end{align}
\begin{align}\label{new6_eqn4}
    &C_E(\mathcal{M}_{A \to B}, \mathcal{N}_{A \to B}) = \nonumber\\
    &\sup_{\substack{\phi_{RA} : \Tr(H_A \phi_A) \leq E}} C(\mathcal{M}_{A \to B}(\phi_{RA}), \mathcal{N}_{A \to B}(\phi_{RA})),
\end{align}
where the optimizations \eqref{new6_eqn3} and \eqref{new6_eqn4} are taken over pure states $\phi_{RA}$ with reference system $R$ isomorphic to system~$A$.

\section{Optimal input state for CV unidirectional teleportation}\label{unidirectional}

The CV quantum teleportation protocol describes how to transmit an unknown quantum state from Alice to Bob when their systems are in CV modes and they share a prior entangled state known as a {\it resource state} \cite{braunstein1998teleportation}.
In this protocol, Alice mixes the unknown quantum state with her share of
the resource state (TMSV state) and performs homodyne detection.
The homodyne detection destroys the input state on Alice's end.
Alice then communicates the classical outcomes of the detection to Bob, based on which
he performs unitary operations on his share of the resource state to generate an approximation of
the input state.
Let $A$ denote the input mode, and let $B$ denote the output mode.
An ideal teleportation protocol requires noiseless homodyne detection and infinite squeezing in the TMSV state, and it induces the identity channel $\mathcal{I}_{A \to B}$ on the input states \cite{bennett1993,braunstein1998teleportation} (see \cite{PhysRevA.97.062305} for further clarification of the convergence of the protocol to the identity channel).
However, an experimental implementation of CV teleportation has a noisy detector and finite squeezing in the resource state which makes the experimental implementations of teleportation perform less than ideal. It realizes an additive-noise channel $\mathcal{T}_{A\to B}^\xi,$ where the noise parameter $\xi > 0$ encodes unideal detection
and finite squeezing  \cite{braunstein1998teleportation, braunstein1998error}.
The additive-noise channel $\mathcal{T}^{\xi}$
is a composition of the quantum-limited amplifier $\mathcal{A}^{1/\eta}$ with gain parameter $1/\eta$ and the pure-loss channel $ \mathcal{L}^{\eta}$ with transmissivity $\eta,$ where $\eta = 1/(1+\xi)$ \cite{LSHC06, CGH06}.    See \cite[Section~II.B]{sharma2020characterizing} for more details.

By taking the performance metric to be the energy-constrained channel fidelity between ideal teleportation and the additive-noise channel, 
the performance of experimental implementations has been studied in \cite{SSW20}.
By choosing the Hamiltonian $H_A$ to be the  photon number operator $\hat{n}_A= \sum_{n=0}^{\infty} n \vert n\rangle\!\langle n\vert_A$, the energy-constrained channel fidelity in \eqref{new6_eqn3} for the identity channel $\mathcal{I}_{A \to B}$ and the additive-noise channel $\mathcal{T}_{A\to B}^\xi$ can be further simplified, as a consequence of phase averaging and joint phase covariance of these channels \cite{SWAT18, SSW20}, as 
\begin{multline}\label{neweqn2}
  F_E(\mathcal{I}_{A\to B}, \mathcal{T}_{A\to B}^\xi)=\\
  \inf_{\psi_{RA}} F(\mathcal{I}_{A\to B}(\psi_{RA}), \mathcal{T}_{A\to B}^\xi(\psi_{RA})),
\end{multline}
where the infimum is taken over pure and entangled superpositions of twin-Fock states $\psi_{RA}=\vert \psi\rangle\!\langle \psi\vert_{RA}$ such that
\begin{equation}
  \vert \psi\rangle_{RA}=\sum_{n=0}^{\infty} \lambda_n \vert n\rangle_R \vert n\rangle_A,
\end{equation}
 $\lambda_n\in \mathbb{R}^+$ for all $n,$ $\sum_{n=0}^{\infty} \lambda_n^2=1,$ and $\sum_{n=0}^{\infty} n\lambda_n^2 \leq E$.
An analytical solution to the energy-constrained channel fidelity in \eqref{neweqn2}, using Karush–Kuhn–Tucker conditions, was given in \cite{SSW20} for small values of $\xi$ and arbitrary values of $E$.
The optimal input state so obtained was
\begin{align}\label{eq1}
   \vert \psi\rangle_{RA}= \sqrt{1-\{E\}} \vert \lfloor E \rfloor\rangle_R \vert \lfloor E \rfloor\rangle_A+\sqrt{\{E\}} \vert \lceil E \rceil\rangle_R \lceil E \rceil\rangle_A,
\end{align}
where $\{E\}\coloneqq E-\lfloor E \rfloor$. Another contribution of \cite{SSW20} was to provide a method, using a combination of numerical and analytical techniques, for finding optimal input states to test the performance of unidirectional CV teleportation under the energy-constrained channel fidelity measure.

In this section, we show that an optimal input state for the energy-constrained channel fidelity \eqref{neweqn2} is a finite entangled superposition of twin-Fock states saturating the energy constraint for arbitrary values of $\xi$ and $E$ satisfying $E \leq (1+\xi)/(1+3\xi),$  and it is given by
\begin{equation}\label{new7_eqn1}
   \vert \psi\rangle_{RA}= \sqrt{1-E} \vert 0\rangle_R \vert 0\rangle_A+\sqrt{E} \vert 1\rangle_R \vert 1\rangle_A.
\end{equation}
Observe that the optimal state in \eqref{new7_eqn1} is the same as that in \eqref{eq1} under the common conditions of $E \leq (1+\xi)/(1+3\xi)$ and small $\xi$. 
Our method also shows that the optimal state in \eqref{new7_eqn1} is unique; i.e., there is no other optimal finite entangled superposition of twin-Fock states for \eqref{neweqn2}.
We emphasize that our method is purely analytical. For, we use optimization techniques from multivariable calculus, and the constraint $E \leq (1+\xi)/(1+3\xi)$ is needed in our analysis in the proof of Proposition~\ref{new3_prop1} that plays a major role in establishing the result. We also note that it is still an open problem to find the optimal state for larger values of $E$ analytically.

In order to compute the energy-constrained channel fidelity between the ideal channel $\mathcal{I}_{A \to B}$ and its experimental implementation $\mathcal{T}_{A \to B}^{\xi},$ we define the {\it $M$-truncated} energy-constrained channel fidelity between $\mathcal{I}_{A \to B}$ and $\mathcal{T}^{\xi}_{A \to B}$ as
\begin{align}\label{new_eqn39}
    F_{E,M}(\mathcal{I}_{A\to B}, \mathcal{T}^{\xi}_{A \to B}) &\coloneqq \inf_{\psi_{RA}} F(\mathcal{I}_{A\to B}(\psi_{RA}), \mathcal{T}_{A\to B}^\xi(\psi_{RA})),
\end{align}
where the infimum is taken over pure states $\psi_{RA}=\vert \psi \rangle\!\langle\psi \vert_{RA}$ of the form
\begin{align}\label{new9_eqn9}
    \vert \psi\rangle_{RA}= \sum_{n=0}^{M} \sqrt{p_n} \vert n\rangle_R \vert n\rangle_A,
\end{align}
such that $p_n \geq 0$ for all $n,$     $\sum_{n=0}^{M} p_n=1,$ and $\sum_{n=0}^{M} n p_n \leq E$.
In Proposition~\ref{new_appendix_B_prop1} in Appendix~\ref{appendix:onedirectional}, we show that 
\begin{multline}\label{new9_eqn10}
    F(\mathcal{I}_{A\to B}(\psi_{RA}), \mathcal{T}^{\xi}_{A \to B}(\psi_{RA})) \\
    \geq \dfrac{1}{(1+\xi)} \left[ \left(\sum_{n=0}^{M}\dfrac{p_n}{(1+\xi)^n}\right)^2+ \left( \sum_{n=1}^{M} \dfrac{p_n\xi}{(1+\xi)^n} \right)^2 \right],
\end{multline}
for every state of the form in \eqref{new9_eqn9}.
Define the real-valued function
\begin{multline}\label{new4_eqn1}
    f_{M,\xi}(p) \coloneqq \\
    \dfrac{1}{(1+\xi)} \left[ \left(\sum_{n=0}^{M}\dfrac{p_n}{(1+\xi)^n}\right)^2+ \left( \sum_{n=1}^{M} \dfrac{p_n\xi}{(1+\xi)^n} \right)^2 \right],
\end{multline}
for all $p \in \mathbb{R}^{M+1}$. By  \eqref{new9_eqn10} and \eqref{new4_eqn1}, we thus have
\begin{align}\label{new9_eqn12}
    F(\psi_{RA}, \mathcal{T}^{\xi}_{A \to B}(\psi_{RA})) \geq  f_{M,\xi}(p).
\end{align}
The minimizer of the function $f_{M,\xi}$ subject to  $p_{n} \geq 0$ for all $n \in\{0,\ldots,M\},$ 
\begin{equation}\label{new9_eqn13}
    \sum_{n=0}^{M} p_{n}=1, \quad  \sum_{n=0}^{M} n p_{n} \leq E, 
\end{equation}
is the unique point given by $p_0=1-E, p_1=E,$ and $p_n=0$ for all $n \geq 2,$ whenever $E \leq (1+\xi)/(1+3\xi)$. See Proposition~\ref{new3_prop1} in Appendix~\ref{appendix:onedirectional}.
It thus follows from \eqref{new9_eqn10} that the optimal input state to the $M$-truncated energy-constrained channel fidelity is unique, and it is given by \eqref{new7_eqn1}, whenever $E \leq (1+\xi)/(1+3\xi)$.
See Lemma~\ref{appendix_B_lemma3} in Appendix~\ref{appendix:onedirectional}.
From the solution to the $M$-truncated energy-constrained channel fidelity, and  the inequality \eqref{new9_eqn11} in Appendix~\ref{appendix:onedirectional}, it follows that the optimal input state in \eqref{neweqn2} is given by \eqref{new7_eqn1},  whenever $E \leq (1+\xi)/(1+3\xi)$. The uniqueness follows from the uniqueness of the optimal state for the $M$-truncated energy-constrained channel fidelity. See Theorem~\ref{main_thm1} in Appendix~\ref{appendix:onedirectional}.

We compare two classes of experimentally relevant quantum states, namely coherent states and TMSV states, with the optimal state under the given energy constraint. See \cite{S17} for further background on CV quantum information. Let $\vert \alpha \rangle$ denote a coherent state, which is given by 
\begin{equation}\label{new6eqn2}
    \vert \alpha \rangle \coloneqq e^{-\frac{|\alpha|^2}{2}} \sum_{n=0}^{\infty} \dfrac{\alpha^{n}}{\sqrt{n!}} \vert n \rangle.
\end{equation}
The energy of the coherent state $|\alpha \rangle$ is $E=|\alpha|^2,$ and its covariance matrix is $I_2,$ the $2 \times 2$ identity matrix. The covariance matrix of $ \mathcal{T}^{\xi}(|\alpha \rangle\!\langle \alpha|)$ is $(1+2\xi)I_2$. 
Let $\psi(\overline{n})_{RA}=|\psi(\overline{n})\rangle\!\langle\psi(\overline{n})|_{RA}$ be the TMSV state given by
\begin{equation}\label{new6eqn1}
    |\psi(\overline{n}) \rangle_{RA} \coloneqq \dfrac{1}{\sqrt{\overline{n}+1}} \sum_{n=0}^{\infty} \sqrt{\left(\dfrac{\overline{n}}{\overline{n}+1} \right)^n} |n \rangle_R |n\rangle_A.
\end{equation}
The energy of its reduced state is $E=\overline{n},$ and its covariance matrix is
\begin{align}
    V_{\psi(\overline{n}) _{RA}}=
    \begin{bmatrix}
    (2\overline{n}+1)I_2 & 2\sqrt{\overline{n}(\overline{n}+1)} \sigma_z \\
    2\sqrt{\overline{n}(\overline{n}+1)} \sigma_z & 
    (2\overline{n}+1)I_2 
    \end{bmatrix},
\end{align}
where $\sigma_z$ is the Pauli-$z$ matrix.
The covariance matrix of $\mathcal{T}^{\xi}(\psi(\overline{n}) _{RA})$ is given by
\begin{align}
    V_{\mathcal{T}^{\xi}(\psi(\overline{n}) _{RA})}=
    \begin{bmatrix}
    (2\overline{n}+1)I_2 & 2\sqrt{\overline{n}(\overline{n}+1)} \sigma_z \\
    2\sqrt{\overline{n}(\overline{n}+1)} \sigma_z & 
    (2\overline{n}+1+2\xi)I_2 
    \end{bmatrix}.
\end{align}
We then have
\begin{align}
    F\left(|\alpha \rangle\!\langle \alpha|, \mathcal{T}^{\xi}(|\alpha \rangle\!\langle \alpha|)\right)& = \dfrac{2}{\sqrt{\operatorname{Det}\left(2(1+\xi)I_2 \right) }}\\
    & =\dfrac{1}{1+\xi} \label{new12_eqn1},
\end{align}
and also,
\begin{align}
    & \!\!\!\!\! F\left(\psi(\overline{n})_{RA}, \mathcal{T}^{\xi}(\psi(\overline{n}) _{RA})\right) \notag \\
    & = \dfrac{2^2}{\sqrt{\operatorname{Det}\left( V_{\psi(\overline{n})_{RA}} +V_{\mathcal{T}^{\xi}(\psi(\overline{n})_{RA})}   \right) }} \\
   & =\dfrac{1}{1+(2\overline{n}+1)\xi} \\
   &= \dfrac{1}{1+(2E+1)\xi}. \label{new12_eqn2}
\end{align}
These fidelity expressions are evaluated using Eq.~$(4.51)$ of \cite{S17}.
\begin{figure}
  \includegraphics[width=0.48\textwidth]{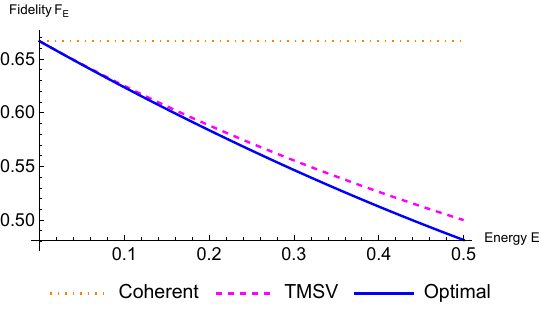}
\caption{\raggedright The graph plots the output fidelity $F_E$ between the ideal channel and an additive-noise channel versus the input energy $E$, corresponding to a coherent state, a TMSV state, and the optimal state (each having input energy $E$). The noise parameter for the additive-noise channel is taken as $\xi=0.5$ and the states have energy $E \in [0,0.5]$. The dotted (orange), dashed (magenta), and solid (blue) lines represent the output fidelity for the coherent state, the TMSV state, and the optimal state, respectively.}
\label{new12fig1}
\end{figure}

In Figure~\ref{new12fig1}, we plot the output fidelity $F_E$ between the ideal channel and an additive-noise channel versus the input energy $E$, corresponding to a coherent state, a TMSV state and the optimal state, with input energy $E \in [0,0.5]$. The noise parameter is taken as $\xi=0.5$.
 In order, the dotted (orange), dashed (magenta), and solid (blue) lines indicate the output fidelity for the coherent state \eqref{new12_eqn1}, the TMSV state \eqref{new12_eqn2}, and the optimal state \eqref{new_eqn41}.
 The graph indicates that coherent and TMSV states are not optimal states in general. Interestingly, however, the TMSV state is very close to being an optimal input state for CV unidirectional teleportation. This was observed in a different regime for the energy constraint, in Figure~2 of \cite{SSW20}.

\section{Optimal input state for CV bidirectional teleportation}

\label{bidirectional}

The CV bidirectional teleportation protocol consists of a two-way transmission of
unknown quantum states between Alice and Bob. One implementation of the protocol, which we consider here, allows for a simulation of two ideal CV unidirectional quantum channels with the help of shared entanglement and LOCC. See \cite{siddiqui2020quantifying} for a discussion of more general implementations. The protocol that we consider here can be thought of as a combination of two CV unidirectional teleportations, one from Alice to Bob and the other from Bob to Alice. 
Let $A$ and $B$ denote the input modes for Alice and Bob, and let $A'$ and $B'$ denote the output modes for Alice and Bob, respectively.
An ideal CV bidirectional teleportation between Alice and Bob is represented by the following unitary swap channel 
\begin{align}\label{swapchannel}
    \mathcal{S}_{AB \to A'B'}(\cdot) \coloneqq \swap (\cdot) \swap^{\dagger},
\end{align}
where the unitary swap operator $\operatorname{SWAP}$ is defined as
\begin{equation}\label{re:swaprelation}
\operatorname{SWAP} \coloneqq \sum_{m, n=0}^{\infty}  \vert m\rangle_{A'}\langle n\vert_A\otimes \vert n\rangle_{B'}\langle m\vert_B.
\end{equation}
Here $\{\vert m\rangle_A\}_{m=0}^\infty $ is the photonic number basis corresponding to the system $A,$ and so on. The swap channel acts on product states by swapping them, i.e., 
\begin{equation}
\mathcal{S}_{AB\to A'B'}(\phi_A \otimes \psi_B)= \psi_{A'} \otimes \phi_{B'}.
\end{equation}
Thus, the swap channel can be thought of as the tensor product of the ideal channels $\mathcal{I}_{A \to B'}$ and  $\mathcal{I}_{B \to A'}$.
An experimental implementation of CV bidirectional teleportation realizes an approximate swap channel given by the tensor product of two additive-noise channels $\mathcal{T}_{A\to B'}^\xi\otimes \mathcal{T}_{B\to A'}^{\xi'}$. 
The Hamiltonian for the composite system $AB$ is the total photon number operator:
\begin{equation}
\hat{n}_{AB} \coloneqq \hat{n}_A \otimes I_B + I_A \otimes \hat{n}_B.
\end{equation}
Given any state $\rho_{AB} \in \mathcal{D}(\mathcal{H}_A \otimes \mathcal{H}_B),$ the inequality
$\Tr (\hat{n}_{AB} \rho_{AB})\leq 2E$
implies that the average photon number in $\rho_{AB}$ over each of the modes $A$ and $B$ is at most $E$.
So, the energy-constrained channel fidelity \eqref{new6_eqn3} between the ideal bidirectional teleportation $\mathcal{S}_{AB \to A'B'}$ and its experimental implementation $\mathcal{T}_{A\to B'}^\xi\otimes \mathcal{T}_{B\to A'}^{\xi'}$ is given by
\begin{multline}\label{new_eqn1}
  F_E(\mathcal{S}_{AB \to A'B'}, \mathcal{T}_{A\to B'}^\xi\otimes \mathcal{T}_{B\to A'}^{\xi'})=  \inf_{\phi_{RAB}:\, \operatorname{Tr}(\hat{n}_{AB}\phi_{AB})\leq 2E}\\
   F\!\left(\mathcal{S}_{AB \to A'B'}(\phi_{RAB}), \mathcal{T}_{A\to B'}^\xi\otimes \mathcal{T}_{B\to A'}^{\xi'}(\phi_{RAB})\right),
\end{multline}
where $\phi_{RAB}$ is a pure state and  $\phi_{AB}=\operatorname{Tr}_R(\phi_{RAB})$. 
As a consequence of the joint phase covariance of $\mathcal{S}_{AB \to A'B'}$ and $\mathcal{T}_{A\to B'}^\xi\otimes \mathcal{T}_{B\to A'}^{\xi'},$ and the arguments given for \eqref{neweqn2}, the infimum in \eqref{new_eqn1} can be recast as
\begin{multline}\label{new_eqn3}
  F_E(\mathcal{S}_{AB \to A'B'}, \mathcal{T}_{A\to B'}^\xi\otimes \mathcal{T}_{B\to A'}^{\xi'})=\\
   \inf_{\psi_{RAB}}
   F\!\left(\mathcal{S}_{AB \to A'B'}(\psi_{RAB}), \mathcal{T}_{A\to B'}^\xi\otimes \mathcal{T}_{B\to A'}^{\xi'}(\psi_{RAB})\right),
\end{multline}
where $\psi_{RAB}=\vert \psi \rangle\!\langle \psi\vert_{RAB}$ is a pure, entangled superposition of twin-Fock states given by 
\begin{align}\label{new7_eqn2}
    \vert \psi \rangle_{RAB}= \sum_{ m,n=0}^{\infty} \lambda_{m,n} \vert m,n\rangle_{R} \vert m,n\rangle_{AB},
\end{align}
such that $\lambda_{m,n} \geq 0$ for all $m$ and $n$, $\sum_{ m,n=0}^{\infty} \lambda_{m,n}^2 = 1$, and 
$\sum_{m, n=0}^{\infty} (m+n) \lambda_{m,n}^2 \leq 2E$.
We further simplify the fidelity expression in \eqref{new_eqn3} as follows. 
Since we are working with CV modes, the systems $A,B,A',B'$ are all isomorphic to each other. So, the energy constrained channel fidelity between $\mathcal{S}_{AB \to A'B'}$ and $\mathcal{T}^{\xi}_{A \to B'} \otimes \mathcal{T}^{\xi'}_{B \to A'}$ must be the same as that of $\mathcal{I}_{A \to A'} \otimes \mathcal{I}_{B \to B'}$ and $\mathcal{T}^{\xi}_{A \to A'} \otimes \mathcal{T}^{\xi'}_{B \to B'}$.
For simplicity of notations, we shall denote any channel $\mathcal{M}_{C \to C'}$ by $\mathcal{M}_{C},$ where $C$ and $C'$ are isomorphic systems.
Recall that an additive-noise channel $\mathcal{T}^{\xi}$
can be written as a composed channel $\mathcal{A}^{1/\eta} \circ \mathcal{L}^{\eta}$ for $\eta=1/(1+\xi)$.
Also, the adjoint of the quantum-limited amplifier is related to the pure-loss channel by $(\mathcal{A}^{1/\eta})^\dagger= \eta \mathcal{L}^{\eta}$ \cite{ISS11}.
For the given pure state in \eqref{new7_eqn2}, we have 
\begin{widetext}
\begin{align}
&F(\mathcal{S}_{AB \to A'B'}(\psi_{RAB}),\mathcal{T}^\xi_{A \to B'}\otimes \mathcal{T}^{\xi'}_{B \to A'}\left(  \psi_{RAB}  \right) ) \nonumber \\
&=  F((\mathcal{I}_A \otimes \mathcal{I}_B)(\psi_{RAB}),(\mathcal{T}_{A}^{\xi}\otimes\mathcal{T}_{B}^{\xi^{\prime}})(\psi_{RAB})) \label{new20eqn5}\\
&  =\operatorname{Tr}[\psi_{RAB}(\mathcal{T}_{A}^{\xi}
\otimes\mathcal{T}_{B}^{\xi^{\prime}})(\psi_{RAB})]\\
&  =\operatorname{Tr}[\psi_{RAB}((\mathcal{A}_{A}^{1/\eta}
\circ\mathcal{L}_{A}^{\eta})\otimes(\mathcal{A}_{B}^{1/\eta^{\prime}%
}\circ\mathcal{L}_{B}^{\eta^{\prime}}))(\psi_{RAB})]\\
&  =\operatorname{Tr}[(\mathcal{A}_{A}^{1/\eta}\otimes\mathcal{A}_{B}^{1/\eta^{\prime}})^{\dag}(\psi_{RAB})(\mathcal{L}_{A}^{\eta
}\otimes\mathcal{L}_{B}^{\eta^{\prime}})(\psi_{RAB})]\\
&  =\eta\eta^{\prime}\operatorname{Tr}[(\mathcal{L}_{A}^{\eta}%
\otimes\mathcal{L}_{B}^{\eta^{\prime}})(\psi_{RAB})(\mathcal{L}%
_{A}^{\eta}\otimes\mathcal{L}_{B}^{\eta^{\prime}})(\psi_{RAB})]\\
&  =\eta\eta^{\prime}\operatorname{Tr}[((\mathcal{L}_{A}^{\eta}%
\otimes\mathcal{L}_{B}^{\eta^{\prime}})(\psi_{RAB}))^{2}]\\
&  =\eta\eta^{\prime}\operatorname{Tr}\left[  \left(  (\mathcal{L}_{A%
}^{\eta}\otimes\mathcal{L}_{B}^{\eta^{\prime}})\left(  \sum_{m,n,m^{\prime
},n^{\prime}=0}^{\infty}\lambda_{m,n}\lambda_{m^{\prime},n^{\prime}%
}|m,n\rangle\!\langle m^{\prime},n^{\prime}|_{R}\otimes|m,n\rangle\!\langle
m^{\prime},n^{\prime}|_{AB}\right)  \right)  ^{2}\right]  \\
&  =\eta\eta^{\prime}\operatorname{Tr}\left[    \left(  \sum
_{m,n,m^{\prime},n^{\prime}=0}^{\infty}\lambda_{m,n}\lambda_{m^{\prime
},n^{\prime}}|m,n\rangle\!\langle m^{\prime},n^{\prime}|_{R}\otimes
\mathcal{L}_{A}^{\eta}(|m\rangle\!\langle m^{\prime}|_{A})\otimes
\mathcal{L}_{B}^{\eta^{\prime}}(|n\rangle\!\langle n^{\prime}|_{B%
})\right)  ^{2}\right]  \\
&  =\eta\eta^{\prime}\operatorname{Tr}\left[
\begin{array}
[c]{c}%
\left(  \sum_{m,n,m^{\prime},n^{\prime}=0}^{\infty}\lambda_{m,n}%
\lambda_{m^{\prime},n^{\prime}}|m,n\rangle\!\langle m^{\prime},n^{\prime}%
|_{R}\otimes\mathcal{L}_{A}^{\eta}(|m\rangle\!\langle m^{\prime}|_{A%
})\otimes\mathcal{L}_{B}^{\eta^{\prime}}(|n\rangle\!\langle n^{\prime
}|_{B})\right)  \times\\
\left(  \sum_{m^{\prime\prime},n^{\prime\prime},m^{\prime\prime\prime
},n^{\prime\prime\prime}=0}^{\infty}\lambda_{m^{\prime\prime},n^{\prime\prime
}}\lambda_{m^{\prime\prime\prime},n^{\prime\prime\prime}}|m^{\prime\prime
},n^{\prime\prime}\rangle\!\langle m^{\prime\prime\prime},n^{\prime\prime\prime
}|_{R}\otimes\mathcal{L}_{A}^{\eta}(|m^{\prime\prime}\rangle\!\langle
m^{\prime\prime\prime}|_{A})\otimes\mathcal{L}_{B}^{\eta^{\prime}%
}(|n^{\prime\prime}\rangle\!\langle n^{\prime\prime\prime}|_{B})\right)
\end{array}
\right]  \\
&  =\eta\eta^{\prime}\sum_{m,n,m^{\prime},n^{\prime}=0}^{\infty}\lambda
_{m,n}^{2}\lambda_{m^{\prime},n^{\prime}}^{2}\operatorname{Tr}\left[
\mathcal{L}_{A}^{\eta}(|m\rangle\!\langle m^{\prime}|_{A})\mathcal{L}%
_{A}^{\eta}(|m^{\prime}\rangle\!\langle m|_{A})\right]
\operatorname{Tr}\left[  \mathcal{L}_{B}^{\eta^{\prime}}(|n\rangle\!\langle
n^{\prime}|_{B})\mathcal{L}_{B}^{\eta^{\prime}}(|n^{\prime}%
\rangle\!\langle n|_{B})\right]  \\
&  =\dfrac{1}{(1+\xi)(1+\xi')}\sum_{m,n,m^{\prime},n^{\prime}=0}^{\infty}\lambda
_{m,n}^{2}\lambda_{m^{\prime},n^{\prime}}^{2}\operatorname{Tr}\left[
\mathcal{L}_{A}^{\frac{1}{1+\xi}}(|m\rangle\!\langle m^{\prime}|_{A})\mathcal{L}%
_{A}^{\frac{1}{1+\xi}}(|m^{\prime}\rangle\!\langle m|_{A})\right]
\operatorname{Tr}\left[  \mathcal{L}_{B}^{\frac{1}{1+\xi'}}(|n\rangle\!\langle
n^{\prime}|_{B})\mathcal{L}_{B}^{\frac{1}{1+\xi'}}(|n^{\prime}%
\rangle\!\langle n|_{B})\right].\label{new20eqn6}
\end{align}

\end{widetext}
Let $p_{m,n}\coloneqq\lambda_{m,n}^2$ for all $m,n \geq 0,$ and
   \begin{align}
      T_{\xi}^{mm'} &\coloneqq     \Tr \left[ \mathcal{L}^{\frac{1}{1+\xi}}_{A} (\vert m \rangle\!\langle m' \vert_{A})  \mathcal{L}^{\frac{1}{1+\xi}}_{A}(\vert m' \rangle\!\langle m \vert_{A}) \right] \label{new_eqn38a},\\
      T_{\xi'}^{nn'} &\coloneqq    \Tr \left[  \mathcal{L}_{B}^{\frac{1}{1+\xi'}}(|n\rangle\!\langle
n^{\prime}|_{B})\mathcal{L}_{B}^{\frac{1}{1+\xi'}}(|n^{\prime}%
\rangle\!\langle n|_{B})\right]\label{new_eqn38b}.
   \end{align}
This gives
   \begin{multline}\label{new_eqn47}
     F\big(\mathcal{S}_{AB \to A'B'}(\psi_{RAB}),\mathcal{T}^\xi_{A \to B'} \otimes \mathcal{T}^{\xi'}_{B \to A'} \left(  \psi_{RAB}  \right) \big)=\\
     \dfrac{1}{(1+\xi)(1+\xi')}\sum_{m, n, m', n'=0}^{\infty}p_{m,n}p_{m' n'}  T_{\xi}^{mm'}  T_{\xi'}^{nn'}.
   \end{multline}
Let $p$ denote the infinite vector $p=(p_{m,n})_{m,n=0}^{\infty}$.  Define
\begin{multline}\label{re:bidf(p)}
 f_{\xi, \xi'}(p)\coloneqq \\ \dfrac{1}{(1+\xi)(1+\xi')} \sum_{m, n, m', n'=0}^{\infty}p_{m,n}p_{m',n'} T_{\xi}^{mm'}  T_{\xi'}^{nn'}.
 \end{multline}
Therefore,  \eqref{new_eqn3} reduces to
the following quadratic optimization problem in terms of an infinite number of variables:
\begin{multline}\label{re:bidirectional}
F_E(\mathcal{S}_{AB \to A'B'}, \mathcal{T}_{A\to B'}^\xi\otimes \mathcal{T}_{B\to A'}^{\xi'})=\\
\begin{cases}
   \begin{aligned}
& \underset{p}{\text{inf}} & & f_{\xi,\xi'}(p) \\
& \text{subject to} & & p_{m,n} \geq 0 \ \ \forall m,n \geq 0, \\
&&& \sum_{ m,n=0}^{\infty} (m+n)p_{m,n} \leq 2E, \\
&&& \sum_{ m,n=0}^{\infty} p_{m,n}=1.
\end{aligned}
\end{cases}
\end{multline}

In this section, we find an optimal input state for the energy-constrained channel fidelity \eqref{re:bidirectional}.  We consider two cases. The first case is when $\xi = \xi'$ and  $2E \leq (1+\xi)/(2+3\xi),$ which corresponds to having identical additive-noise channels in both directions and  states with low energy. The second case is when $\xi'\geq 1$ and $2E \leq \min\{(\xi'^2-1)/(\xi'(3\xi'-1)), (1+\xi)/(2\xi)\},$ which corresponds to experimental implementations with minimum excess noise in one of the quantum channels and low energy states. In both cases, we show that the optimal input state is a finite entangled superposition of twin-Fock states saturating the energy constraint. Furthermore, our method shows that such optimal states  are unique; i.e., there is no other optimal finite entangled superposition of twin-Fock states that achieves this performance. 
Our method is again purely analytical, similar to the unidirectional case, which uses optimization techniques from multivariable calculus. Also, the given constraints on $E,\xi,$ and $\xi'$ are consequences of our analysis in the proofs of Proposition~\ref{new_prop1} and Proposition~\ref{new2_prop1} used to establish the main results.
It still remains to solve \eqref{re:bidirectional} for larger values of $E$; we note here that numerical solutions can be obtained using truncation.

In order to find the solution to the energy-constrained channel fidelity \eqref{re:bidirectional}, we define the $M$-truncated energy-constrained channel fidelity between $\mathcal{S}_{AB\to B'A'}$ and $\mathcal{T}_{A\to B'}^\xi\otimes \mathcal{T}_{B\to A'}^{\xi'}$ as
\begin{multline}\label{new_eqn30}
    F_{E,M}(\mathcal{S}_{AB\to B'A'}, \mathcal{T}_{A\to B'}^\xi\otimes \mathcal{T}_{B\to A'}^{\xi'})\coloneqq \\  \inf_{\psi_{RAB}} 
   F\!\left(\mathcal{S}_{AB\to B'A'}(\psi_{RAB}), \mathcal{T}_{A\to B'}^\xi\otimes \mathcal{T}_{B\to A'}^{\xi'}(\psi_{RAB})\right),
\end{multline}
where the infimum is taken over pure bipartite states $\vert \psi \rangle_{RAB}= \sum_{m,n=0}^{M} \sqrt{p_{m,n}} \vert m,n\rangle_R \vert m,n\rangle_{AB}$ where $p_{m,n} \geq 0$ for all $m,n,$ 
\begin{equation}
    \sum_{m,n=0}^{M} p_{m,n}=1, \hspace{0.5cm}  \sum_{m,n=0}^{M} (m+n) p_{m,n} \leq 2E.
\end{equation}
It turns out that the energy-constrained channel fidelity \eqref{re:bidirectional} can be bounded from above and from below by the $M$-truncated energy-constrained channel fidelity
\eqref{new_eqn30}, which gives a way to approximate the former in terms of the latter. See Proposition~\ref{new_prop2} of Appendix~\ref{app:bidirectional}.

We show in Lemma~\ref{new_lem2} of Appendix~\ref{app:bidir-opt-state-xi'-equal-xi} that the unique optimal input state for the $M$-truncated energy-constrained channel fidelity is given by
\begin{multline}\label{new5_eqn1}
        \vert \psi \rangle_{RAB} = \sqrt{1-2E} \vert 0,0\rangle_{R} \vert 0,0\rangle_{AB}  \\
        +\sqrt{E} \vert 0,1\rangle_{R} \vert 0,1\rangle_{AB}+\sqrt{E}\vert 1,0\rangle_R \vert 1,0\rangle_{AB},
\end{multline}
whenever $\xi=\xi'$ and $2E \leq (1+\xi)/(2+3\xi)$.  We then use the bounds in \eqref{new10_eqn1} to show that \eqref{new5_eqn1} is also an optimal input state for the energy-constrained channel fidelity \eqref{re:bidirectional} under the same conditions $\xi=\xi'$ and $2E \leq (1+\xi)/(2+3\xi),$ and it is unique, as described earlier. See Theorem~\ref{main_thm2} in Appendix~\ref{app:bidir-opt-state-xi'-equal-xi} for a detailed proof.
Using similar methods, we also find an optimal input state for \eqref{re:bidirectional} in the case $\xi' \geq 1$  and $2E \leq \min\{(\xi'^2-1)/(\xi'(3\xi'-1)), (1+\xi)/(2\xi)\}$.
The optimal input state in this case is
\begin{multline}\label{new5_eqn2}
        \vert \psi \rangle_{RAB} = \sqrt{1-2E} \vert 0,0\rangle_{R} \vert 0,0\rangle_{AB}   \\
         +\sqrt{2E-p_{E}} \vert 0,1\rangle_{R} \vert 0,1\rangle_{AB} +\sqrt{p_{E}}\vert 1,0\rangle_R \vert 1,0\rangle_{AB},
\end{multline}
where $p_{E}$ is given by \eqref{new2_eqn10} in Appendix~\ref{app:bidir-opt-state-xi'-xi}.
Again, the state in \eqref{new5_eqn2} is unique. See Theorem~\ref{main_thm3} in Appendix~\ref{app:bidir-opt-state-xi'-xi}.

We compare the optimal state with a tensor product of two coherent states $|\alpha \rangle \otimes |\alpha \rangle$ given by \eqref{new6eqn2} and a tensor product of two TMSV states $\psi(\overline{n})_{RA} \otimes \psi(\overline{n})_{RA}$ given by \eqref{new6eqn1} under the given conditions of Theorem~\ref{main_thm2}.
We find that
\begin{align}
    &F\left(\mathcal{S}\left(|\alpha \rangle\!\langle \alpha| \otimes |\alpha \rangle\!\langle \alpha| \right), \mathcal{T}^{\xi} \otimes \mathcal{T}^{\xi}\left(|\alpha \rangle\!\langle \alpha| \otimes |\alpha \rangle\!\langle \alpha| \right)\right) \nonumber \\ 
    &\hspace{0.5cm} = \left(F\left(|\alpha \rangle\!\langle \alpha|, \mathcal{T}^{\xi}(|\alpha \rangle\!\langle \alpha|)\right) \right)^2 \\
    &\hspace{0.5cm} = \dfrac{1}{(1+\xi)^2}.\label{new20eqn1}
\end{align}
Also, 
\begin{align}
    &F\left(\mathcal{S}\left(\psi(\overline{n})_{RA}  \otimes \psi(\overline{n})_{RA}  \right), \mathcal{T}^{\xi} \otimes \mathcal{T}^{\xi}\left(\psi(\overline{n})_{RA}  \otimes \psi(\overline{n})_{RA}  \right)\right) \nonumber \\ 
    &\hspace{0.5cm} = F\left(\psi(\overline{n})_{RA} , \mathcal{T}^{\xi}(\psi(\overline{n})_{RA} )\right)^2 \\
    &\hspace{0.5cm} = \dfrac{1}{\left(1+(2E+1)\xi \right)^2}.\label{new20eqn2}
\end{align}

Figure~\ref{new12fig2} plots the output fidelity $F_E$ between the ideal swap channel and the tensor product of two identical additive-noise channels versus the input energy $E$, corresponding to a tensor product of coherent, a tensor product of TMSV states and the optimal state, with input energy $E\in [0,0.5]$. The noise parameter is taken as $\xi=0.5$.
In order, the dotted (orange), dashed (magenta), and solid (blue) lines indicate the output fidelity corresponding to the coherent state \eqref{new20eqn1}, the TMSV state \eqref{new20eqn2}, and the optimal state \eqref{new10_eqn2}.
Similar to unidirectional teleportation, neither coherent nor TMSV states are optimal. However, we have proven here that entanglement between the channel uses and a reference system has the optimal performance in distinguishing the two identity channels from the two additive-noise channels (recall \eqref{new5_eqn1}).


\begin{figure}
  \includegraphics[width=0.48\textwidth]{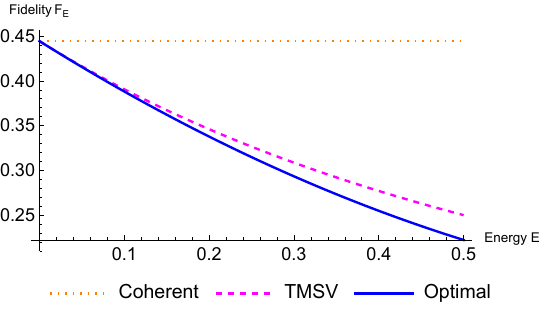}
\caption{\raggedright The graph plots the output fidelity $F_E$ between the ideal swap channel and the tensor product of two identical additive-noise channels versus input energy $E$, corresponding to a tensor product of coherent states, a tensor product of TMSV states, and the optimal state (each having input energy $E$). The noise parameter for the additive-noise channels is taken as $\xi=0.5$ and the states have energy $E \in [0,0.5]$. The dotted (orange), dashed (magenta), and solid (blue) lines represent the output fidelity for the coherent state, the TMSV state, and the optimal state, respectively.}
\label{new12fig2}
\end{figure}

\section{Multiplicativity of energy-constrained channel fidelity}\label{mult}

Let $\mathcal{M}_{i}$ and $\mathcal{N}_{i}$ be quantum channels from quantum subsystem $A_i$ to subsystem $B_i$, and let $\rho_i$ and $\sigma_i$ be quantum states in $A_i$ for $i \in \{1,\ldots, n\}$. The multiplicative property of the fidelity of tensor-product states  follows from the definition in \eqref{new8_eqn1}, i.e.,
\begin{align}
    F\!\left(\bigotimes_{i=1}^n \rho_i,\bigotimes_{i=1}^n \sigma_i \right) 
    = \prod_{i=1}^n F(\rho_i, \sigma_i).
\end{align}
This implies that
\begin{multline}\label{new8_eqn3}
    F\!\left(\left(\bigotimes_{i=1}^n \mathcal{M}_i \right)\left(\bigotimes_{i=1}^n \rho_i\right), \left(\bigotimes_{i=1}^n \mathcal{N}_i \right)\left(\bigotimes_{i=1}^n \sigma_i\right) \right) \\= \prod_{i=1}^n F(\mathcal{M}_i(\rho_i), \mathcal{N}_i(\sigma_i)).
\end{multline}
The energy-constrained channel fidelity between the tensor-product channels $\bigotimes_{i=1}^n \mathcal{M}_i$ and $\bigotimes_{i=1}^n \mathcal{N}_i$ is 
\begin{multline}\label{new8_eqn2}
    F_E\left(\bigotimes_{i=1}^n \mathcal{M}_i, \bigotimes_{i=1}^n \mathcal{N}_i\right) = 
    \inf_{\substack{\psi_{RA_1\cdots A_n}: \Tr(\hat{H}_n \psi_{A_1\cdots A_n}) \leq nE}} \\ F\!\left(\left(\bigotimes_{i=1}^n \mathcal{M}_i\right)(\psi_{RA_1\cdots A_n}), \left(\bigotimes_{i=1}^n \mathcal{N}_i\right)(\psi_{RA_1\cdots A_n})\right).
\end{multline}
In \eqref{new8_eqn2},  $\hat{H}_n$ is the Hamiltonian operator acting on the composite system $A_1\cdots A_n$ given by
\begin{equation}
    \hat{H}_n \coloneqq H_1 \otimes \cdots \otimes I_n+ \cdots + I_1 \otimes \cdots \otimes H_n,
\end{equation}
where $H_1,\ldots,H_n$ are Hamiltonian operators, and $I_1,\ldots, I_n$ are the identity operators for the subsystems $A_1 \cdots A_n$, respectively.
The reference system $R$ can be taken to be $R=R_1\cdots R_n,$ where each reference sub-system $R_i$ is isomorphic to the input subsystem $A_i$.
By taking $\psi_{RA_1\cdots A_n}$ as the $n$-tensor-product state
\begin{equation}
    \psi_{RA_1\cdots A_n} = \psi_{R_1A_1} \otimes \cdots \otimes \psi_{R_nA_n},
\end{equation}
we know from \eqref{new8_eqn3} that
\begin{align}\label{new8_eqn4}
    F\!\left(\left(\bigotimes_{i=1}^n \mathcal{M}_i \right)(\psi_{RA_1\cdots A_n}), \left(\bigotimes_{i=1}^n \mathcal{N}_i \right)(\psi_{RA_1\cdots A_n}) \right)\nonumber \\= \prod_{i=1}^n F(\mathcal{M}_i(\psi_{R_iA_i}), \mathcal{N}_i(\psi_{R_i A_i})).
\end{align}
If each state $\psi_{R_iA_i}$ satisfies the energy constraint $\Tr (H_i \psi_{A_i}) \leq E,$ then $\psi_{RA_1\cdots A_n}$ satisfies the energy constraint in \eqref{new8_eqn2}. Thus, by taking the appropriate infimum on each side of \eqref{new8_eqn4}, we get
\begin{align}\label{new8_eqn5}
    F_E\left(\bigotimes_{i=1}^n \mathcal{M}_i, \bigotimes_{i=1}^n \mathcal{N}_i\right) \leq \prod_{i=1}^n F_E(\mathcal{M}_i, \mathcal{N}_i).
\end{align}
By using the fact that $\mathcal{S}_{AB \to A'B'} \equiv \mathcal{I}_{A \to B'} \otimes \mathcal{I}_{B \to A'},$ from \eqref{new8_eqn5} we get
\begin{multline}\label{new8_eqn6}
    F_E(\mathcal{S}_{AB\to A'B'}, \mathcal{T}_{A \to B'}^{\xi}\otimes \mathcal{T}_{B \to A'}^{\xi'}) \leq \\ 
     F_E(\mathcal{I}_{A\to B'}, \mathcal{T}_{A \to B'}^{\xi})   F_E(\mathcal{I}_{B\to A'},\mathcal{T}_{B \to A'}^{\xi'}).
\end{multline}

The {\it sub-multiplicative} property \eqref{new8_eqn6} is well known. An interesting consequence of Theorems \ref{main_thm1} and \ref{main_thm2} is that the inequality in \eqref{new8_eqn6} can be strict. In particular, we have
\begin{multline}\label{new8_eqn8}
      F_E(\mathcal{S}_{AB \to A'B'}, \mathcal{T}_{A \to B'}^{\xi}\otimes \mathcal{T}_{B \to A'}^{\xi}) < \\ F_E(\mathcal{I}_{A\to B'}, \mathcal{T}_{A \to B'}^{\xi})   F_E(\mathcal{I}_{B\to A'},\mathcal{T}_{B \to A'}^{\xi}),
\end{multline}
whenever $2E \leq (1+\xi)/(2+3\xi)$. This is illustrated in Figure~\ref{new8_plot1}, which plots the difference between the right-hand and the left-hand side of \eqref{new8_eqn8} for $\xi=0.5$. Thus, a consequence of our result on bidirectional teleportation is that the energy-constrained channel fidelity is not multiplicative, i.e., entanglement between a reference and the channel inputs provides a benefit in distinguishing an identity channel from an additive-noise channel, when there is an energy constraint.

\begin{figure}
\centering
  \includegraphics[width=0.48\textwidth]{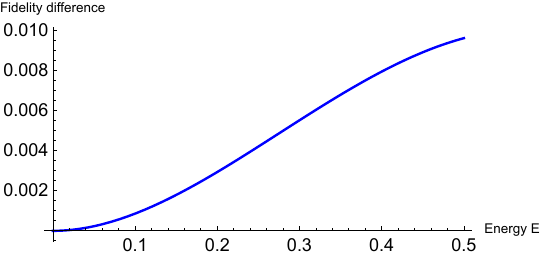}
  \label{new8_plot1b}
\caption{\raggedright The graph plots the difference between the right-hand and the left-hand side of \eqref{new8_eqn8} versus the energy $E\in [0, 0.5]$ for $\xi=0.5$.}\label{new8_plot1}
\end{figure}

\section{Discussion}

\label{discussion}

A characterisation of the performance of any experimental implementation of CV unidirectional teleportation in terms of the energy-constrained channel fidelity is known for arbitrary values of the noise parameter $\xi$ and the energy constraint $E$ \cite{SSW20}.  In our work, we characterise the performance of an experimental implementation of CV bidirectional teleportation under certain conditions on the noise parameters $\xi$ and $\xi'$, as well as the energy constraint $E$. It still remains to solve the problem in the most general case, i.e., for arbitrary values of the noise parameters and every energy constraint. 

It is also an interesting question for future work to consider the following quantity, and which states achieve the optimal value:
\begin{equation}
    F_E(\mathcal{I}^{\otimes n}, (\mathcal{T}^\xi)^{\otimes n}).
    \label{eq:energy-constr-fid-n-parties}
\end{equation}
This quantity applies to a generalization of CV bidirectional teleportation in which there are $n$ parties involved, and the $i$th party is trying to communication quantum information to the $i+1$st party for $i\in\{1,\ldots, n-1\}$ and the $n$th party is trying to do so to the $1$st party. Based on the findings of our paper, in particular the structures of the optimal states \eqref{new7_eqn1} and \eqref{new5_eqn2} for unidirectional and bidirectional teleportation, we suspect that for a sufficiently low energy constraint $E$, the following state optimizes the value in \eqref{eq:energy-constr-fid-n-parties}:
\begin{multline}\label{new20eqn7}
    \sqrt{1-nE}\lvert0 \rangle_R \lvert 00  \cdots 00 \rangle_{A_1A_2 \cdots A_{n-1}A_n}\\ 
    + \sqrt{E} \lvert 1 \rangle_R \lvert 10  \cdots 00 \rangle_{A_1A_2 \cdots A_{n-1}A_n}  \\
    + \sqrt{E}\lvert 2 \rangle_R \lvert 01 \cdots 00 \rangle_{A_1A_2 \cdots A_{n-1}A_n}\\
    \hspace{1cm} \vdots \\
    + \sqrt{E}\lvert n-1 \rangle_R \lvert 00  \cdots 10 \rangle_{A_1A_2 \cdots A_{n-1}A_n} \\
    + \sqrt{E}\lvert n \rangle_R \lvert 00 \cdots 01 \rangle_{A_1A_2 \cdots A_{n-1}A_n}.
\end{multline}
Note that the reduced state on systems $A_1 \cdots A_n$ has total energy $nE$.
To support the conjecture for $n=3$, we compare the output fidelity corresponding to a tensor product of coherent states, a tensor product of TMSV states, and the conjectured optimal state, and show that the conjectured optimal state behaves the same as the optimal states for unidirectional and bidirectional teleportation.

\begin{figure}
  \includegraphics[width=0.48\textwidth]{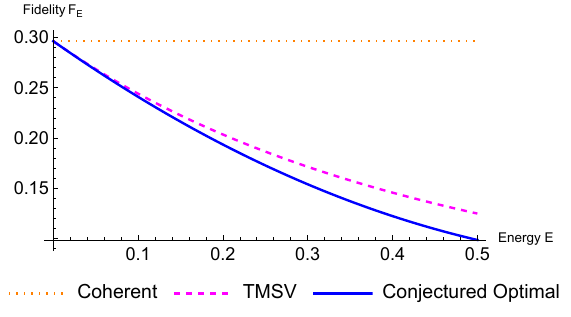}
\caption{\raggedright 
The graph plots the output fidelity $F_E$ between the tensor product of three identity channels and the tensor product of three identical additive-noise channels versus the input energy $E$, corresponding to a tensor product of coherent states, a tensor product of TMSV states, and the conjectured optimal state (each having the input energy $E$). The noise parameter for the additive-noise channels is taken as $\xi=0.5$ and the states have energy $E \in [0,0.5]$. The dotted (orange), dashed (magenta), and solid (blue) lines represent the output fidelity for the coherent state, the TMSV state, and the optimal state, respectively.}
\label{new12fig3}
\end{figure}
For the coherent state $\alpha=|\alpha\rangle\!\langle \alpha|$ given by \eqref{new6eqn2}, we have
\begin{align}
    F\left(\mathcal{I}^{\otimes 3}\left(\alpha^{\otimes 3}  \right), (\mathcal{T}^\xi)^{\otimes 3}\left(\alpha^{\otimes 3} \right)\right) &= \left(F\left(\alpha, \mathcal{T}^\xi(\alpha)\right) \right)^3 \\
    &=\dfrac{1}{(1+\xi)^3}.\label{new20eqn3}
\end{align}
For the TMSV state $\psi(\overline{n})$ given by \eqref{new6eqn1}, we have
\begin{align}
    &F\left(\mathcal{I}^{\otimes 3}\left(\psi(\overline{n})^{\otimes 3} \right), (\mathcal{T}^\xi)^{\otimes 3}\left(\psi(\overline{n})^{\otimes 3} \right)\right) \nonumber \\
    &\hspace{1cm} = \left(F\left(\psi(\overline{n}), \mathcal{T}^{\xi}(\psi(\overline{n}))\right) \right)^3 \\  &\hspace{1cm}=\dfrac{1}{(1+(1+2E)\xi)^3}.\label{new20eqn4}
\end{align}
Let $\phi$ be the conjectured optimal state given by \eqref{new20eqn7} for $n=3$. By following the same arguments given in \eqref{new20eqn5}-\eqref{new20eqn6}, we get
\begin{align}
    &F\left(\mathcal{I}^{\otimes 3}\left(\phi^{\otimes 3} \right), (\mathcal{T}^\xi)^{\otimes 3}\left(\phi^{\otimes 3} \right)\right) \nonumber \\
    &\hspace{1cm} = \dfrac{1}{(1+\xi)^3} \left[1-6\left(\dfrac{\xi }{1+\xi}\right)E+12\left(\dfrac{\xi }{1+\xi}\right)^2E^2\right].\label{new20eqn8}
\end{align}

 In Figure~\ref{new12fig3}, we plot the output fidelity $F_E$ corresponding to  \eqref{new20eqn3}, \eqref{new20eqn4}, and \eqref{new20eqn8} versus the input energy $E \in [0,0.5]$ for $\xi=0.5$. The plot shows similar behavior of the conjectured optimal state as the plots for the optimal states for unidirectional and bidirectional teleportation, which is expected.


\section{Conclusion}

\label{con}

We have characterised the performance of CV unidirectional and bidirectional teleportation in terms of the energy-constrained channel fidelity between ideal CV teleportation and its experimental approximation. Through a purely analytical method, using optimization techniques from multivariable calculus, we explicitly determined the optimal input state for CV unidirectional, as well as bidirectional, teleportation. We showed that, in both the protocols, the optimal input state is a finite entangled superposition of twin-Fock states. Furthermore, the optimal states are unique; i.e., there is no other optimal finite entangled superposition of twin-Fock states.
As an application of our results, we have shown that the energy-constrained channel fidelity of two tensor-product channels is strictly sub-multiplicative.

Another metric to quantify performance of any experimental implementation of CV unidirectional and bidirectional protocol is the Shirokov--Winter energy-constrained diamond distance \cite{S18,W17} between ideal CV teleportation and its experimental approximation. We leave the study of this quantity for future work. Additionally, it remains an intriguing open question to determine optimal input states for multidirectional CV teleportation, which we also leave for future work.


\begin{acknowledgments}

HKM and MMW acknowledge support from the National Science Foundation under Grant No.~2014010. We thank the participants of QuILT Day (May 2022) at Tulane University for feedback on our work, which helped to improve it.
HKM thanks Komal Malik for helping with a technical calculation.
\end{acknowledgments}

\bibliographystyle{unsrt}
\bibliography{RefBT}

\appendix

\onecolumngrid


\section{Optimal state for CV unidirectional teleportation}

\label{appendix:onedirectional}

We shall use the following Kraus representation for the pure-loss channel \cite{ISS11}:
\begin{equation}\label{new3_eqn1}
\mathcal{L}^\eta_{A \to B}(\vert m\rangle\!\langle m'\vert_A)=\sum_{k=0}^{\min\{m, m'\}} \sqrt{\binom{m}{k} \binom{m'}{k}}\eta^{\frac{1}{2}(m+m'-2k)} (1-\eta)^k \vert m-k\rangle\!\langle m'-k\vert_B.
\end{equation}

\begin{proposition}\label{new_appendix_B_prop1}
Let $p=(p_0,p_1,p_2,\ldots)$ be an arbitrary infinite-dimensional probability vector, and let $\psi_{RA}$ be a pure state given by
$\vert \psi \rangle_{RA}= \sum_{n=0}^{\infty} \sqrt{p_n} \vert n\rangle_R \vert n\rangle_A$.
Then
\begin{equation}\label{appendix_B_eqn2}
    F(\mathcal{I}_{A\to B}(\psi_{RA}), \mathcal{T}^{\xi}_{A \to B}(\psi_{RA})) \geq \dfrac{1}{(1+\xi)} \left[ \left(\sum_{n=0}^{M}\dfrac{p_n}{(1+\xi)^n}\right)^2+ \left( \sum_{n=1}^{M} \dfrac{p_n\xi}{(1+\xi)^n} \right)^2 \right].
\end{equation}
The inequality in \eqref{appendix_B_eqn2} is saturated if $p_n = 0$ for all $n \geq 2$.
\end{proposition}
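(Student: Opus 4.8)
The plan is to reduce the fidelity to a Hilbert–Schmidt inner product — exactly the manipulation performed for the bidirectional case in the chain of equalities leading to \eqref{new_eqn47}, now specialized to a single mode — and then to discard manifestly nonnegative terms in the resulting series.

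First, using $A\cong B$ to identify $\mathcal{I}_{A\to B}(\psi_{RA})$ with $\psi_{RA}$, and since $\psi_{RA}$ is pure, $F(\mathcal{I}_{A\to B}(\psi_{RA}),\mathcal{T}^{\xi}_{A\to B}(\psi_{RA})) = \Tr[\psi_{RA}\,\mathcal{T}^{\xi}_{A}(\psi_{RA})]$. Decomposing $\mathcal{T}^{\xi}_{A} = \mathcal{A}^{1/\eta}_{A}\circ\mathcal{L}^{\eta}_{A}$ with $\eta = 1/(1+\xi)$ and using $(\mathcal{A}^{1/\eta})^{\dagger} = \eta\,\mathcal{L}^{\eta}$, this equals $\eta\,\Tr[\sigma^{2}]$, where $\sigma \coloneqq (\id_{R}\otimes\mathcal{L}^{\eta}_{A})(\psi_{RA}) = \sum_{m,m'}\sqrt{p_{m}p_{m'}}\,\vert m\rangle\!\langle m'\vert_{R}\otimes\mathcal{L}^{\eta}(\vert m\rangle\!\langle m'\vert)$. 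Expanding $\Tr[\sigma^{2}]$ and using orthonormality of the Fock basis on $R$ collapses the four sums to two:
\[
F(\mathcal{I}_{A\to B}(\psi_{RA}),\mathcal{T}^{\xi}_{A\to B}(\psi_{RA})) = \frac{1}{1+\xi}\sum_{m,m'=0}^{\infty} p_{m}\,p_{m'}\,T_{\xi}^{mm'},
\]
with $T_{\xi}^{mm'}$ as in \eqref{new_eqn38a}.

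Second, I would evaluate $T_{\xi}^{mm'}$ from the Kraus representation \eqref{new3_eqn1}: the product $\mathcal{L}^{\eta}(\vert m\rangle\!\langle m'\vert)\,\mathcal{L}^{\eta}(\vert m'\rangle\!\langle m\vert)$ carries a double sum over Kraus indices $k,k'$, but $\langle m'-k\vert m'-k'\rangle = \delta_{k,k'}$ forces $k=k'$ and reduces each surviving term to $\binom{m}{k}\binom{m'}{k}\eta^{m+m'-2k}(1-\eta)^{2k}\,\vert m-k\rangle\!\langle m-k\vert$, whose trace is $1$. With $\eta = 1/(1+\xi)$ and $1-\eta = \xi/(1+\xi)$ this gives
\[
T_{\xi}^{mm'} = \frac{1}{(1+\xi)^{m+m'}}\sum_{k=0}^{\min\{m,m'\}}\binom{m}{k}\binom{m'}{k}\xi^{2k},
\]
and hence, after reordering the nonnegative series,
\[
F(\mathcal{I}_{A\to B}(\psi_{RA}),\mathcal{T}^{\xi}_{A\to B}(\psi_{RA})) = \frac{1}{1+\xi}\sum_{k=0}^{\infty}\xi^{2k}\sum_{m,m'\geq k}\binom{m}{k}\binom{m'}{k}\frac{p_{m}\,p_{m'}}{(1+\xi)^{m+m'}}.
\]

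Third, I would drop all terms with $k\geq 2$; these are nonnegative because $p_{n}\geq 0$. The $k=0$ term equals $\big(\sum_{n=0}^{\infty} p_{n}/(1+\xi)^{n}\big)^{2}\geq \big(\sum_{n=0}^{M} p_{n}/(1+\xi)^{n}\big)^{2}$, and the $k=1$ term equals $\xi^{2}\sum_{m,m'\geq 1} m\,m'\,p_{m}\,p_{m'}/(1+\xi)^{m+m'}\geq \xi^{2}\big(\sum_{n=1}^{\infty} p_{n}/(1+\xi)^{n}\big)^{2}\geq \big(\sum_{n=1}^{M} p_{n}\xi/(1+\xi)^{n}\big)^{2}$, where the first inequality uses $m\,m'\geq 1$ for $m,m'\geq 1$ and the last uses $p_{n}\geq 0$. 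Adding these two lower bounds yields \eqref{appendix_B_eqn2}. For the equality claim, if $p_{n}=0$ for all $n\geq 2$ (and $M\geq 1$) then the $k\geq 2$ terms vanish, only $m=m'=1$ survives in the $k=1$ term so that $m\,m'=1$, and each truncated sum coincides with its untruncated counterpart; hence every inequality above is saturated.

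The only genuine computation here is the second step, the closed-form evaluation of $T_{\xi}^{mm'}$; it is a routine but slightly fiddly bookkeeping exercise with the pure-loss Kraus operators. The first step is the single-mode specialization of the fidelity manipulation already displayed in Section~\ref{bidirectional}, and the third step uses only $p_{n}\geq 0$ and $m\,m'\geq 1$.
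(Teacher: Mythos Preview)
Your proof is correct and follows essentially the same route as the paper: reduce the fidelity to a purity via the amplifier--loss factorization of $\mathcal{T}^{\xi}$, expand in the Fock/Kraus basis, keep only the $k=0$ and $k=1$ contributions, and use $n\geq 1$ (equivalently $mm'\geq 1$) to strip the binomial factor in the $k=1$ term. The one cosmetic difference is that the paper invokes the identity $F=\eta\,\Tr[(\mathcal{L}^{1-\eta}(\psi_{A}))^{2}]$ from \cite{SSW20} (complementary loss on the reduced state) and organizes the series as $\sum_{k}\big(\sum_{n\geq k}\cdots\big)^{2}$, whereas you rerun the bidirectional computation to get $\eta\,\Tr[(\mathcal{L}^{\eta}(\psi_{RA}))^{2}]=\tfrac{1}{1+\xi}\sum_{m,m'}p_{m}p_{m'}T_{\xi}^{mm'}$ and then group by $k$; these two expressions are term-by-term equal, so the difference is purely bookkeeping.
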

\begin{proof}
Let $\psi_A = \Tr_R (\psi_{RA})=\sum_{n=0}^{\infty} p_n \vert n\rangle\!\langle n \vert_A$. From \eqref{new3_eqn1} we get
\begin{align}\label{new9_eqn2}
    \mathcal{L}_{A \to B}^{1-\eta}(\psi_{A})= \sum_{n=0}^{\infty} p_n \sum_{k=0}^{n}  \binom{n}{k}(1-\eta)^k \eta^{n-k} \vert k\rangle\!\langle k\vert_B.
\end{align}
For $\eta=1/(1+\xi),$ we have
\begin{eqnarray}
F(\mathcal{I}_{A\to B}(\psi_{RA}), \mathcal{T}_{A \to B}^{\xi}(\psi_{RA})) &=& \eta \Tr((\mathcal{L}_{A \to B}^{1-\eta}(\psi_{A}))^2)\label{re:onefidelity} \\
    &\geq & \eta \Tr\left[\left(\sum_{n=0}^{M} p_n \sum_{k=0}^{n}  \binom{n}{k}(1-\eta)^k \eta^{n-k} \vert k\rangle\!\langle k\vert_B \right)^2 \right] \label{re:onesigmaprev} \\
   &=& \eta \Tr\left[\left(\sum_{k=0}^{M} \left\{\sum_{n=k}^{M} p_n \binom{n}{k}(1-\eta)^k \eta^{n-k}\right\} \vert k\rangle\!\langle k\vert_B \right)^2   \right]\label{re:onesigma} \\
   &=& \eta \sum_{k=0}^{M} \left\{\sum_{n=k}^{M} p_n \binom{n}{k}(1-\eta)^k \eta^{n-k}\right\}^2 \label{new9_eqn3}  \\
      &\geq& \eta \sum_{k=0}^{1} \left\{\sum_{n=k}^{M} p_n \binom{n}{k}(1-\eta)^k \eta^{n-k}\right\}^2 \label{new5_eqn4}\\
      &=& \eta\left(\sum_{n=0}^{M}p_n \eta^n\right)^2+ \eta\left( \sum_{n=1}^{M} p_n n(1-\eta)\eta^{n-1}\right)^2 \label{new9_eqn4} \\
      &\geq&  \eta\left(\sum_{n=0}^{M}p_n \eta^n\right)^2+ \eta\left( \sum_{n=1}^{M} p_n (1-\eta)\eta^{n-1}\right)^2 \label{new5_eqn3}\\
      &=&  \dfrac{1}{(1+\xi)} \left[ \left(\sum_{n=0}^{M}\dfrac{p_n}{(1+\xi)^n}\right)^2+ \left( \sum_{n=1}^{M} \dfrac{p_n\xi}{(1+\xi)^n} \right)^2 \right].
\end{eqnarray}
The equality \eqref{re:onefidelity} is given in \cite{SSW20}, the inequality \eqref{re:onesigmaprev} follows by truncating the infinite sum in \eqref{new9_eqn2} to $M,$ the equality \eqref{re:onesigma} is obtained by interchanging the indices of the sum in \eqref{re:onesigmaprev}, and \eqref{new9_eqn3} follows by definition of trace. We get \eqref{new5_eqn4} by truncating the outer sum in \eqref{new9_eqn3} to $1,$ and the inequality \eqref{new5_eqn3} is obtained by replacing $n$ with $1$ in the multiples of $p_n$ in the second term of \eqref{new9_eqn4}. We use the relation $\eta = 1/(1+\xi)$ in the last equality. Furthermore, each inequality is saturated if $p_n=0$ for all $n \geq 2$.
\end{proof}

\begin{proposition}\label{new3_prop1}
The minimum value of $f_{M,\xi}$, as defined in \eqref{new4_eqn1}, subject to $p_{n} \geq 0$ for all  $n\in\{0,\ldots,M\},$
\begin{equation}\label{new4_eqn45}
    \sum_{n=0}^{M} p_{n}=1, \quad  \sum_{n=0}^{M} n p_{n} \leq E, 
\end{equation}
is attained at the unique point $p=(p_n)_{n=0}^M$ given by $p_{0}=1-E, \ p_{1}=E,$ and $p_{n}=0$ for all $n\in \{2,\ldots,M\},$ whenever $E \leq (1+\xi)/(1+3\xi)$.
\end{proposition}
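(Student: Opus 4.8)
The plan is to establish the stronger claim that $f_{M,\xi}(p)\ge f_{M,\xi}(p^{\ast})$ for every feasible $p$, with equality only at $p^{\ast}:=(1-E,E,0,\dots,0)$, by a direct estimate rather than through Lagrange multipliers. Set $q:=1/(1+\xi)\in(0,1)$, so that $\xi q=1-q$ and $\xi^{2}q^{2}=(1-q)^{2}$, and let $S_{0}:=\sum_{n=0}^{M}q^{n}p_{n}$ and $S_{1}:=\sum_{n=1}^{M}q^{n}p_{n}$. Then $f_{M,\xi}(p)=q\,(S_{0}^{2}+\xi^{2}S_{1}^{2})$, a convex quadratic in $p$; evaluating at $p^{\ast}$ gives $S_{0}^{\ast}:=1-(1-q)E$ and $S_{1}^{\ast}:=qE$. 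Since $\xi>0$, the hypothesis $E\le(1+\xi)/(1+3\xi)<1$ forces $0\le E<1$, so $p^{\ast}$ is feasible.

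First I would record, for every feasible $p$, the two one-sided bounds $S_{0}\ge S_{0}^{\ast}$ and $S_{1}\le S_{1}^{\ast}$, together with formulas for the slacks. Writing $S_{0}=1-\sum_{n\ge1}(1-q^{n})p_{n}$ and using $0\le 1-q^{n}\le(1-q)n$ with $\sum_{n\ge1}np_{n}\le E$ gives the first; $q^{n}\le q$ for $n\ge1$ together with $\sum_{n\ge1}p_{n}\le\sum_{n\ge1}np_{n}\le E$ gives the second. More precisely, with $A:=E-\sum_{n\ge1}np_{n}\ge0$ and the coefficients $c_{n}:=(1-q)n-(1-q^{n})$, $d_{n}:=qn-q^{n}$, a short computation gives $\Delta_{0}:=S_{0}-S_{0}^{\ast}=(1-q)A+\sum_{n\ge2}c_{n}p_{n}\ge0$ and $\Delta_{1}:=S_{1}^{\ast}-S_{1}=qA+\sum_{n\ge2}d_{n}p_{n}\ge0$ (the $n=1$ terms drop since $c_{1}=d_{1}=0$, and one checks $c_{n},d_{n}>0$ for $n\ge2$, with $c_{n}$ strictly increasing and $c_{2}=(1-q)^{2}$).

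The heart of the argument is the identity $f_{M,\xi}(p)-f_{M,\xi}(p^{\ast})=q\big[\,2\big(S_{0}^{\ast}\Delta_{0}-\xi^{2}S_{1}^{\ast}\Delta_{1}\big)+\Delta_{0}^{2}+\xi^{2}\Delta_{1}^{2}\,\big]$. As the last two terms are nonnegative, it suffices to prove $S_{0}^{\ast}\Delta_{0}\ge\xi^{2}S_{1}^{\ast}\Delta_{1}$. Because $\Delta_{0}$ and $\Delta_{1}$ are nonnegative linear combinations of $A$ and of the $p_{n}$ ($n\ge2$), this follows once the matching coefficients obey (a) $S_{0}^{\ast}(1-q)\ge\xi^{2}q\,S_{1}^{\ast}$ and (b) $S_{0}^{\ast}c_{n}\ge\xi^{2}S_{1}^{\ast}d_{n}$ for each $n\ge2$. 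Using $S_{0}^{\ast}=1-(1-q)E$ and $S_{1}^{\ast}=qE$, inequality (a) reduces to $E\le(1+\xi)/(2\xi)$. For (b) I would use the identity $c_{n}+d_{n}=n-1$, which gives $d_{n}/c_{n}=(n-1)/c_{n}-1$; since $c_{n}/(n-1)=(1-q)-q\,(1-q^{n-1})/(n-1)$ and $(1-q^{m})/m$ is strictly decreasing in $m$ (it equals $1-q$ times a running average of $1,q,q^{2},\dots$), the ratio $d_{n}/c_{n}$ is strictly decreasing, so (b) is tightest at $n=2$. There $c_{2}=(1-q)^{2}$, $d_{2}=q(2-q)$, and (b) becomes $qS_{0}^{\ast}\ge(2-q)S_{1}^{\ast}$, i.e.\ $E\le1/(3-2q)=(1+\xi)/(1+3\xi)$, which is exactly the hypothesis; since $(1+\xi)/(1+3\xi)\le(1+\xi)/(2\xi)$, the hypothesis also gives (a). This yields $f_{M,\xi}(p)\ge f_{M,\xi}(p^{\ast})$. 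For uniqueness, equality forces $\Delta_{0}^{2}+\xi^{2}\Delta_{1}^{2}=0$, hence $\Delta_{0}=\Delta_{1}=0$; because $1-q>0$ and $c_{n}>0$ for $n\ge2$, this forces $A=0$ and $p_{n}=0$ for all $n\ge2$, whence $p_{1}=\sum_{n\ge1}np_{n}=E$ and $p_{0}=1-E$, i.e.\ $p=p^{\ast}$.

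The step I expect to be the main obstacle is precisely reconciling the two clean one-sided bounds, which push $f_{M,\xi}$ in opposite directions ($S_{0}$ wants to be larger, $S_{1}$ smaller): a crude term-by-term comparison of $S_{0}^{\ast}\Delta_{0}$ against $\xi^{2}S_{1}^{\ast}\Delta_{1}$ using only $S_{0}^{\ast}\ge\xi S_{1}^{\ast}$ fails, since in fact $c_{n}<\xi d_{n}$ for $n\ge2$. The correct device is to carry the slacks $A$ and $(p_{n})_{n\ge2}$ explicitly and then observe that the monotonicity $d_{n}/c_{n}\downarrow$ collapses the infinite family of coefficient inequalities to the single binding one at $n=2$ — which is exactly where the constant $(1+\xi)/(1+3\xi)$ enters.
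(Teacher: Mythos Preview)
Your argument is correct and takes a genuinely different route from the paper's proof. The paper fixes an energy level $E_{0}\le E$, eliminates $p_{0},p_{1}$ via the two equality constraints, and shows that each partial derivative $\partial g/\partial p_{k}$ is strictly positive on the nonnegative orthant for $k\ge 2$; it then checks separately that the resulting minimum value is strictly decreasing in $E_{0}$, so that the energy constraint is saturated. You instead compare an arbitrary feasible $p$ directly with $p^{\ast}$ via the exact decomposition $f_{M,\xi}(p)-f_{M,\xi}(p^{\ast})=q\big[2(S_{0}^{\ast}\Delta_{0}-\xi^{2}S_{1}^{\ast}\Delta_{1})+\Delta_{0}^{2}+\xi^{2}\Delta_{1}^{2}\big]$, expressing both slacks $\Delta_{0},\Delta_{1}$ as nonnegative combinations of the energy gap $A$ and the higher $p_{n}$, and then reducing the required linear inequality $S_{0}^{\ast}\Delta_{0}\ge\xi^{2}S_{1}^{\ast}\Delta_{1}$ to a family of coefficient bounds. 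The monotonicity $d_{n}/c_{n}\downarrow$ (equivalently $c_{n}/(n-1)\uparrow$), which you derive from the decreasing-average identity $c_{n}/(n-1)=(1-q)-q(1-q^{n-1})/(n-1)$, collapses this family to the single binding case $n=2$, yielding exactly $E\le(1+\xi)/(1+3\xi)$. Both proofs locate the threshold at $n=2$, but yours does so globally and in one stroke (the energy-slack coefficient and the $p_{n}$ coefficients are handled uniformly, with the $A$-inequality subsumed by the weaker bound $E\le(1+\xi)/(2\xi)$), whereas the paper's derivative computation entangles the constant term with the $E_{0}$-dependence and requires the separate Part~2. Your approach also gives a quantitative lower bound on the gap $f_{M,\xi}(p)-f_{M,\xi}(p^{\ast})$ and makes the uniqueness argument immediate from $\Delta_{0}=\Delta_{1}=0$.
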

\begin{proof}
We divide the proof into two parts. Let $E_0 \in [0,E]$. In the first part, we show that the minimum of $f_{M,\xi},$ subject to $p_{n} \geq 0$ for all  $n \in \{0,\ldots,M\}$ and  the equality constraints
\begin{equation}\label{new9_eqn6}
    \sum_{n=0}^{M} p_{n}=1, \quad  \sum_{n=0}^{M} n p_{n} = E_0,
\end{equation}  
is uniquely attained at $p_0=1-E_0, p_1=E_0,$ and $p_n=0$ for all $n\in \{2,\ldots, M\}$. In the second part, we show that the minimum value obtained in the first part is a strictly decreasing function of $E_0$.

\emph{Part 1.}
Let $p \in \mathbb{R}^{M+1}$ be any vector satisfying the equality constraints \eqref{new9_eqn6},
which gives
\begin{equation}\label{re:7}
    p_0 = 1-E_0+\sum_{n=2}^{M} (n-1) p_n, \quad p_1 = E_0-\sum_{n=2}^{M} n p_n.
\end{equation}
Substitute the values of $p_0$ and $p_1$ in \eqref{new4_eqn1}, and use the relation $\eta= 1/(1+\xi) \in [0,1]$ to get
\begin{equation}\label{new4_eqn2}
    f_{M,\xi}(p)=\eta\left(1-(1-\eta)E_0+\sum_{n=2}^{M} [n(1-\eta)-1+\eta^n]p_n \right)^2+ \eta (1-\eta)^2\left(E_0 - \sum_{n=2}^{M} (n-\eta^{n-1})p_n \right)^2.
\end{equation}
Let $g:\mathbb{R}^{M-1} \to \mathbb{R}$ be the function defined by
\begin{align}
    g(x_2,\ldots,x_M) \coloneqq\eta\left(1-(1-\eta)E_0+\sum_{n=2}^{M} [n(1-\eta)-1+\eta^n]x_n \right)^2+ \eta (1-\eta)^2\left(E_0 - \sum_{n=2}^{M} (n-\eta^{n-1})x_n \right)^2.
\end{align}
From \eqref{new4_eqn2}, we thus have $f_{M,\xi}(p)=g(p_2,\ldots,p_M)$. We show that $g$ is a strictly increasing function in each variable over 
\begin{equation}
\mathbb{R}^{M-1}_+=\{(x_2,\ldots,x_m): \forall k \in \{2,\ldots, M\}, x_k \geq 0 \}.
\end{equation}
This implies that the unique global minimizer of $g$ over $\mathbb{R}^{M-1}_+$ is at the origin.
For all $k \in \{2,\ldots, M\}$, differentiate $g$ partially with respect to $x_k$. We get
\begin{multline}\label{appendix_B_eqn4}
  \frac{\partial g(x_2,\ldots,x_M)}{\partial x_k}=2\eta(\eta^k-k\eta +k-1) \left[1-E_0+\eta E_0+ \sum_{n=2}^{M} \left[n(1-\eta)-1+\eta^n\right] x_n\right]\\+2\eta (1-\eta)^2(\eta^{k-1}-k)\left[E_0 +\sum_{n=2}^{M}  (\eta^{n-1}-n)x_n\right].
\end{multline}
Simplifying \eqref{appendix_B_eqn4} gives
\begin{multline}\label{appendix_B_eqn6}
  \frac{\partial g(x_2,\ldots,x_M)}{\partial x_k}= 2 \eta \bigg[ \big[k(1-\eta)-(1-\eta^{k})\big]\big[1-(1-\eta) E_0\big]- (k-\eta^{k-1})(1-\eta)^2E_0 \bigg] \\
  +2 \eta  \sum_{n=2}^{M} \bigg[\big[k(1-\eta) -(1-\eta^k)\big]\big[n(1-\eta)-(1-\eta^n)\big]+ (1-\eta)^2(k-\eta^{k-1})(n-\eta^{n-1}) \bigg]x_n.
\end{multline}
The coefficients of $x_n$ are positive because for all $k \geq 2,$
\begin{equation}\label{appendix_B_eqn11}
 k(1-\eta)-(1-\eta^k)=(1-\eta) \sum_{i=0}^{k-1} (1-\eta^i) > 0.
\end{equation}
Also, the remaining term in \eqref{appendix_B_eqn6} is positive:
\begin{align}
    &2 \eta \bigg[ \big[k(1-\eta)-(1-\eta^{k})\big]\big[1-(1-\eta) E_0\big]- (k-\eta^{k-1})(1-\eta)^2E_0 \bigg] \nonumber\\
    &\hspace{1cm} \geq 2 \eta \left[ \big[k(1-\eta)-(1-\eta^{k})\big]\left(1- \frac{1-\eta}{3-2\eta} \right)- \frac{(k-\eta^{k-1})(1-\eta)^2}{3-2\eta} \right] \label{appendix_B_eqn12} \\
   &\hspace{1cm}= \frac{2 \eta}{3-2\eta} \bigg[ \big[k(1-\eta)-1+\eta^{k}\big](2-\eta) -(k-\eta^{k-1})(1-\eta)^2 \bigg] \\
   &\hspace{1cm} = \frac{2 \eta}{3-2\eta} \bigg[k(1-\eta)-1+\eta^{k}+\big[k(1-\eta)-1+\eta^{k}\big](1-\eta) -k(1-\eta)^2+\eta^{k-1}(1-\eta)^2 \bigg] \\
   &\hspace{1cm} = \frac{2 \eta}{3-2\eta} \bigg[k(1-\eta)-1+\eta^{k}+k(1-\eta)^2-(1-\eta)+\eta^{k}(1-\eta) -k(1-\eta)^2+\eta^{k-1}(1-\eta)^2 \bigg] \\
   &\hspace{1cm} = \frac{2 \eta}{3-2\eta} \bigg[k(1-\eta)+\eta^{k}+\eta+\eta^{k-1}(1-\eta) \bigg] \\
   &\hspace{1cm}= \frac{2 \eta}{3-2\eta} \bigg[k(1-\eta)+\eta+\eta^{k-1} \bigg] \\
   & \hspace{1cm} > 0 \label{appendix_B_eqn13}.
\end{align}
The inequality \eqref{appendix_B_eqn12} follows from the fact that $E_0 \leq E \leq (1+\xi)/(1+3\xi) = 1/(3-2\eta),$
and \eqref{appendix_B_eqn13} follows from \eqref{appendix_B_eqn11}.
We thus have $\partial g(x_2,\ldots,x_M)/\partial x_k > 0,$ and hence
$g$ is a strictly increasing function of each of its variables in $\mathbb{R}^{M-1}_+$.

\emph{Part 2.} We know from the first part of the proof that the minimum value of $f_{M, \xi}$ subject to the constraints \eqref{re:7} is
\begin{equation}\label{new9_eqn7}
    f_{M,\xi}(1-E_0,E_0,0,\ldots,0)= \frac{1}{(1+\xi)}\left[ 1-2\left(\dfrac{\xi}{1+\xi} \right)E_0+2\left(\dfrac{\xi}{1+\xi} \right)^2 E_0^2  \right].
\end{equation}
The quadratic polynomial $1-2\left(\xi/(1+\xi) \right)E_0+2\left(\xi/(1+\xi) \right)^2 E_0^2$
is a strictly decreasing function in the interval $[0, (1+\xi)/(2\xi)],$ and from the hypothesis we have $E_0 \leq E \in [0, (1+\xi)/(2\xi)]$. 
Therefore, the minimum value of $f_{M,\xi}$ subject to the constraints \eqref{new4_eqn45} is obtained at $p_0=1-E, p_1=E,$ and  $p_k=0$ for all $k \geq 2$.
\end{proof}
\begin{lemma}\label{appendix_B_lemma3}
The $M$-truncated energy-constrained channel fidelity \eqref{new_eqn39} has the unique optimal state given by
\eqref{new7_eqn1},
whenever $E \leq (1+\xi)/(1+3\xi)$.
Furthermore, 
\begin{equation}
    F_{E,M}(\mathcal{I}_{A\to B}, \mathcal{T}^{\xi}_{A \to B}) = \frac{1}{(1+\xi)}\left[ 1-2\left(\dfrac{\xi}{1+\xi} \right)E+2\left(\dfrac{\xi}{1+\xi} \right)^2 E^2  \right].
\end{equation}
 In particular, $F_{E,M}(\mathcal{I}_{A\to B}, \mathcal{T}^{\xi}_{A \to B})$ is independent of $M$.
\end{lemma}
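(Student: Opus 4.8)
The plan is to deduce the lemma directly from Proposition~\ref{new_appendix_B_prop1} and Proposition~\ref{new3_prop1}. First I would establish the lower bound
$F_{E,M}(\mathcal{I}_{A\to B}, \mathcal{T}^{\xi}_{A \to B}) \geq \frac{1}{(1+\xi)}\left[ 1-2\left(\frac{\xi}{1+\xi}\right)E+2\left(\frac{\xi}{1+\xi}\right)^2 E^2 \right]$. Indeed, for any feasible state $\psi_{RA}$ of the form \eqref{new9_eqn9} with probability vector $p=(p_0,\ldots,p_M)$ satisfying \eqref{new9_eqn13}, Proposition~\ref{new_appendix_B_prop1} gives $F(\mathcal{I}_{A\to B}(\psi_{RA}), \mathcal{T}^{\xi}_{A \to B}(\psi_{RA})) \geq f_{M,\xi}(p)$, and Proposition~\ref{new3_prop1} gives $f_{M,\xi}(p) \geq f_{M,\xi}(1-E,E,0,\ldots,0)$, which by \eqref{new9_eqn7} equals the claimed expression. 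Taking the infimum over all such $\psi_{RA}$ yields the lower bound.

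Next I would show the bound is attained by the state in \eqref{new7_eqn1}. This state has probability vector $p_0=1-E$, $p_1=E$, and $p_n=0$ for $n\geq 2$, which is feasible for \eqref{new9_eqn13} because $\sum_n n p_n = E \leq E$. Since $p_n=0$ for all $n\geq 2$, the inequality of Proposition~\ref{new_appendix_B_prop1} is saturated, so the output fidelity for this state equals $f_{M,\xi}(1-E,E,0,\ldots,0)$, i.e.\ exactly the claimed value. Hence $F_{E,M}(\mathcal{I}_{A\to B}, \mathcal{T}^{\xi}_{A \to B})$ equals the stated expression and the infimum is a minimum. Because that expression contains no dependence on $M$, independence of $M$ is immediate.

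For uniqueness, I would take any optimal state $\psi'_{RA}$ with probability vector $p'$, so that $F(\mathcal{I}_{A\to B}(\psi'_{RA}), \mathcal{T}^{\xi}_{A \to B}(\psi'_{RA})) = F_{E,M}(\mathcal{I}_{A\to B}, \mathcal{T}^{\xi}_{A \to B})$. Chaining $F_{E,M} = F(\mathcal{I}_{A\to B}(\psi'_{RA}), \mathcal{T}^{\xi}_{A \to B}(\psi'_{RA})) \geq f_{M,\xi}(p') \geq f_{M,\xi}(1-E,E,0,\ldots,0) = F_{E,M}$ forces $f_{M,\xi}(p') = f_{M,\xi}(1-E,E,0,\ldots,0)$, so $p'$ attains the minimum of $f_{M,\xi}$ over the constraint set. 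By the uniqueness clause of Proposition~\ref{new3_prop1}, $p'_0=1-E$, $p'_1=E$, and $p'_n=0$ for all $n\geq 2$, i.e.\ $\psi'_{RA}$ is the state in \eqref{new7_eqn1}.

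I do not expect a genuine obstacle, since the analytic content — the fidelity lower bound and the strict-monotonicity argument pinning down the minimizer of $f_{M,\xi}$ — is already carried out in Propositions~\ref{new_appendix_B_prop1} and~\ref{new3_prop1}. The only point requiring a little care is the uniqueness chain: one should note that it does \emph{not} require the Proposition~\ref{new_appendix_B_prop1} bound to be tight at $p'$ a priori, only the equality $f_{M,\xi}(p') = \min f_{M,\xi}$; once Proposition~\ref{new3_prop1} identifies $p'$ as having $p'_n=0$ for $n\geq 2$, tightness of the first inequality follows automatically.
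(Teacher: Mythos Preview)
Your proposal is correct and follows essentially the same approach as the paper: both deduce the lower bound by combining Propositions~\ref{new_appendix_B_prop1} and~\ref{new3_prop1}, verify attainment at the state \eqref{new7_eqn1} via the saturation clause of Proposition~\ref{new_appendix_B_prop1}, and invoke the uniqueness in Proposition~\ref{new3_prop1}. Your uniqueness argument is spelled out a bit more explicitly than the paper's one-line reference, but the logic is identical.
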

\begin{proof}
Let us consider any pure bipartite state
$\vert \phi\rangle_{RA}= \sum_{n=0}^{M} \sqrt{p_n} \vert n\rangle_R \vert n\rangle_A$ with $p_n \geq 0$ for all $n \in \{0,\ldots,M\},$ such that 
\begin{equation}
    \sum_{n=0}^{M} p_n=1, \hspace{0.5cm}  \sum_{n=0}^{M} n p_n \leq E.
\end{equation}
From Propositions~\ref{new_appendix_B_prop1} and \ref{new3_prop1}, we have
\begin{equation}\label{appendix_B_eqn7}
    F(\phi_{RA}, \mathcal{T}_{A \to B}^{\xi}(\phi_{RA})) \geq f_{M,\xi}(1-E,E,0,\ldots,0)= \frac{1}{(1+\xi)}\left[ 1-2\left(\dfrac{\xi}{1+\xi} \right)E+2\left(\dfrac{\xi}{1+\xi} \right)^2 E^2  \right].
\end{equation}
By taking the infimum in \eqref{appendix_B_eqn7} over $\phi_{RA},$ and by definition \eqref{new_eqn39}, we get
\begin{equation}\label{appendix_B_eqn9}
    F_{E,M}(\mathcal{I}_{A\to B}, \mathcal{T}^{\xi}_{A \to B}) \geq \frac{1}{(1+\xi)}\left[ 1-2\left(\dfrac{\xi}{1+\xi} \right)E+2\left(\dfrac{\xi}{1+\xi} \right)^2 E^2  \right].
\end{equation}
Also, the inequality in \eqref{appendix_B_eqn7} is saturated for the state in \eqref{new7_eqn1}.
This means that \eqref{appendix_B_eqn9} is actually an equality. The uniqueness of the optimal state follows from Proposition~\ref{new3_prop1}.
\end{proof}

\begin{theorem}\label{main_thm1}
The energy-constrained channel fidelity \eqref{neweqn2} has an optimal input state given by \eqref{new7_eqn1},
whenever $E \leq (1+\xi)/(1+3\xi)$.
Moreover, the optimal state is unique in the sense that there is no other optimal finite entangled superposition of twin-Fock states.  
The value of the energy-constrained channel fidelity is
\begin{equation}\label{new_eqn41}
     F_E(\mathcal{I}_{A\to B}, \mathcal{T}_{A\to B}^\xi) = \frac{1}{(1+\xi)}\left[ 1-2\left(\dfrac{\xi}{1+\xi} \right)E+2\left(\dfrac{\xi}{1+\xi} \right)^2 E^2  \right].
\end{equation}
\end{theorem}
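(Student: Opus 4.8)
The plan is to reduce Theorem~\ref{main_thm1} to the already-solved $M$-truncated problem of Lemma~\ref{appendix_B_lemma3} by sandwiching the infimum in \eqref{neweqn2}. First I would record the easy direction: every twin-Fock superposition supported on $\{0,\dots,M\}$ is a particular superposition of the form allowed in \eqref{neweqn2}, so the feasible set of $F_{E,M}$ is contained in that of $F_E$, whence $F_E(\mathcal{I}_{A\to B},\mathcal{T}^\xi_{A\to B}) \le F_{E,M}(\mathcal{I}_{A\to B},\mathcal{T}^\xi_{A\to B})$ for every $M$. By Lemma~\ref{appendix_B_lemma3} the right-hand side equals $\tfrac{1}{1+\xi}\big[1-2\tfrac{\xi}{1+\xi}E+2(\tfrac{\xi}{1+\xi})^2E^2\big]$ and is independent of $M$, and the state \eqref{new7_eqn1} (being finitely supported and feasible) attains it; so \eqref{new7_eqn1} is a feasible input and $F_E$ is at most this value.

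The substance is the matching lower bound: no genuinely infinite superposition of twin-Fock states beats \eqref{new7_eqn1}. Here I would invoke Proposition~\ref{new_appendix_B_prop1}, which holds verbatim for an arbitrary infinite probability vector $p=(p_n)_{n=0}^\infty$ and every truncation level $M$; letting $M\to\infty$ and using that the partial sums increase to the full sums (the latter being finite since $\eta=1/(1+\xi)\in(0,1)$ and $\sum_n np_n\le E$), one obtains
\[
F(\mathcal{I}_{A\to B}(\psi_{RA}),\mathcal{T}^\xi_{A\to B}(\psi_{RA})) \ \ge\ \frac{1}{1+\xi}\left[\left(\sum_{n=0}^\infty\frac{p_n}{(1+\xi)^n}\right)^2+\left(\sum_{n=1}^\infty\frac{\xi\,p_n}{(1+\xi)^n}\right)^2\right],
\]
the $M=\infty$ analogue of $f_{M,\xi}$. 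It then remains to minimize this functional over all infinite probability vectors with $\sum_n np_n\le E$. I would do this by extending the two-part argument of Proposition~\ref{new3_prop1} to countably many variables: parametrize by $E_0=\sum_n np_n\in[0,E]$, eliminate $p_0,p_1$ as in \eqref{re:7}, and show that the resulting function $g$ of the tail $(p_n)_{n\ge 2}$ is strictly increasing in each coordinate on the nonnegative orthant under the hypothesis $E\le (1+\xi)/(1+3\xi)$ — the partial-derivative computation culminating in \eqref{appendix_B_eqn6}–\eqref{appendix_B_eqn13} goes through with series in place of finite sums, all absolutely convergent. Hence the infimum is attained only at the zero tail, i.e. $p_0=1-E_0$, $p_1=E_0$, $p_n=0$ ($n\ge2$); and then the polynomial \eqref{new9_eqn7} in $E_0$ is strictly decreasing on $[0,(1+\xi)/(2\xi)]\supseteq[0,E]$, forcing $E_0=E$. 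This yields the lower bound $F_E\ge \tfrac{1}{1+\xi}\big[1-2\tfrac{\xi}{1+\xi}E+2(\tfrac{\xi}{1+\xi})^2E^2\big]$.

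Combining the two bounds gives equality, so \eqref{new7_eqn1} is an optimal input state and the value of $F_E$ is \eqref{new_eqn41}, obtained by substituting $p_0=1-E$, $p_1=E$ into $f_{M,\xi}$ as in \eqref{new9_eqn7}. Uniqueness among finite entangled superpositions of twin-Fock states is then immediate from Lemma~\ref{appendix_B_lemma3}: any such state is supported on some $\{0,\dots,M\}$, so attaining $F_E=F_{E,M}$ forces it to coincide with \eqref{new7_eqn1}; in fact the analysis above shows that equality in Proposition~\ref{new_appendix_B_prop1} requires $p_n=0$ for $n\ge2$ and that $f_{\infty,\xi}$ is minimized only at $p=(1-E,E,0,\dots)$, which even rules out optimal infinitely supported superpositions.

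The step I expect to be the main obstacle is making the passage to infinitely many variables fully rigorous: justifying the $M\to\infty$ limit inside Proposition~\ref{new_appendix_B_prop1}, differentiation under the (now infinite) summation in \eqref{appendix_B_eqn4}–\eqref{appendix_B_eqn6}, and the claim that a feasible state with infinite support cannot undercut the truncated optimum — i.e. that the infimum in \eqref{neweqn2} is genuinely controlled by the $M$-truncated values. This is precisely what the cited inequality \eqref{new9_eqn11} encapsulates, and it is where the convergence bookkeeping, rather than any new idea, is concentrated; once it is in place, uniqueness and the explicit value follow with no further work.
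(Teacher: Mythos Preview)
Your proposal is correct, and the upper bound plus the uniqueness argument for finite superpositions are exactly as in the paper. The lower bound, however, is obtained differently. The paper does \emph{not} extend Propositions~\ref{new_appendix_B_prop1}--\ref{new3_prop1} to the infinite case; instead it quotes a truncation inequality from \cite{SSW20} (the bound you cite as \eqref{new9_eqn11}),
\[
1-\Big[2\sqrt{\tfrac{E}{M+1}}+\sqrt{1-F_{E,M}}\Big]^2 \ \le\ F_E \ \le\ F_{E,M},
\]
which comes from a gentle-measurement/projection argument applied to an arbitrary (possibly infinitely supported) feasible state, and then simply sends $M\to\infty$ using that $F_{E,M}$ is the $M$-independent constant of Lemma~\ref{appendix_B_lemma3}. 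Your route---pass to the $M\to\infty$ limit in \eqref{appendix_B_eqn2} and then redo the monotonicity analysis of \eqref{appendix_B_eqn4}--\eqref{appendix_B_eqn13} with absolutely convergent series---is equally valid and has the virtue of being self-contained (no external reference) and of yielding the slightly stronger conclusion that no \emph{infinitely} supported twin-Fock superposition is optimal either. The paper's route trades that extra information for brevity: once \eqref{new9_eqn11} is in hand, the whole proof is two lines. One small clarification: the inequality \eqref{new9_eqn11} is not a repackaging of your infinite-variable minimization, but a genuinely different mechanism (state truncation plus sine-distance triangle inequality), so the ``convergence bookkeeping'' you flag as the obstacle is in fact bypassed rather than encapsulated by it.
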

\begin{proof}
From the inequalities (B31) of \cite{SSW20}, we have  
\begin{equation}\label{new9_eqn8}
    1-\left[2 \sqrt{\frac{E}{M+1}} + \sqrt{1-F_{E,M}(\mathcal{I}_{A\to B}, \mathcal{T}_{A\to B}^\xi)} \right]^2 \leq F_E(\mathcal{I}_{A\to B}, \mathcal{T}_{A\to B}^\xi) \leq F_{E,M}(\mathcal{I}_{A\to B}, \mathcal{T}_{A\to B}^\xi).
\end{equation}
By Lemma~\ref{appendix_B_lemma3}, we thus get
\begin{multline}\label{new9_eqn11}
    1-\left[2 \sqrt{\frac{E}{M+1}} + \sqrt{1-\frac{1}{(1+\xi)}\left( 1-2\left(\dfrac{\xi}{1+\xi} \right)E +2\left(\dfrac{\xi}{1+\xi} \right)^2 E^2  \right)} \right]^2 \\ \leq F_E(\mathcal{I}_{A\to B}, \mathcal{T}_{A\to B}^\xi)  \leq \frac{1}{(1+\xi)}\left[ 1-2\left(\dfrac{\xi}{1+\xi} \right)E+2\left(\dfrac{\xi}{1+\xi} \right)^2 E^2  \right].
\end{multline}
\eqref{new_eqn41} is obtained by taking the limit $M \to \infty,$ and the optimal state is given by \eqref{new7_eqn1}, which follows from the proof of Lemma~\ref{appendix_B_lemma3}.

Any finite entangled superposition of twin-Fock states that is optimal for $F_E(\mathcal{I}_{A\to B}, \mathcal{T}_{A\to B}^\xi)$ is also optimal for $F_{E,M}(\mathcal{I}_{A\to B}, \mathcal{T}_{A\to B}^\xi)$ for large $M$. We know by Lemma~\ref{appendix_B_lemma3} that $F_{E,M}(\mathcal{I}_{A\to B}, \mathcal{T}_{A\to B}^\xi)$ has the same unique optimal state for large $M$. This implies the uniqueness of the optimal state \eqref{new7_eqn1} in the given sense.
\end{proof}

\section{Optimal state for the bidirectional teleportation protocol}

\label{app:bidirectional}

The proof of the following result is based on the ideas of \cite[Appendix B]{SSW20} and \cite[Proposition 2]{sharma2020characterizing}.
\begin{proposition} \label{new_prop2}
The energy-constrained channel fidelity \eqref{re:bidirectional} and its truncated version \eqref{new_eqn30} satisfy the inequalities
\begin{multline}\label{new10_eqn1}
    1- \left[2 \sqrt{1- \left(1-\frac{2E}{M+1} \right)^2} + \sqrt{1- F_{E,M}(\mathcal{S}_{AB \to A'B'}, \mathcal{T}_{A\to B'}^\xi\otimes \mathcal{T}_{B\to A'}^{\xi'})} \right]^2 \\ \leq F_{E}(\mathcal{S}_{AB \to A'B'}, \mathcal{T}_{A\to B'}^\xi\otimes \mathcal{T}_{B\to A'}^{\xi'}) 
    \leq F_{E,M}(\mathcal{S}_{AB \to A'B'}, \mathcal{T}_{A\to B'}^\xi\otimes \mathcal{T}_{B\to A'}^{\xi'}).
\end{multline}
\end{proposition}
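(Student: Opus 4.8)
The plan is to prove the two inequalities by very different means: the right-hand one is an immediate consequence of feasible-set inclusion, whereas the left-hand one requires a truncation-and-redistribution argument in the spirit of \cite[Appendix~B]{SSW20} and \cite[Proposition~2]{sharma2020characterizing}. For the upper bound, note that any state admissible for the truncated problem \eqref{new_eqn30} is a twin-Fock superposition whose amplitudes vanish outside the box $\{0\le m,n\le M\}$ and that obeys exactly the normalization and photon-number constraints appearing in \eqref{re:bidirectional}; hence the feasible set of \eqref{new_eqn30} is contained in that of \eqref{re:bidirectional}, and an infimum over a larger set is no larger, so $F_{E}\le F_{E,M}$.

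For the lower bound, abbreviate $\mathcal{S}:=\mathcal{S}_{AB\to A'B'}$ and $\mathcal{T}:=\mathcal{T}^{\xi}_{A\to B'}\otimes\mathcal{T}^{\xi'}_{B\to A'}$, and write $C(\rho,\sigma)=\sqrt{1-F(\rho,\sigma)}$ for the sine distance. It suffices to show that \emph{every} admissible twin-Fock superposition $|\psi\rangle_{RAB}=\sum_{m,n\ge 0}\lambda_{m,n}|m,n\rangle_R|m,n\rangle_{AB}$ satisfies
\[
C\!\left(\mathcal{S}(\psi_{RAB}),\mathcal{T}(\psi_{RAB})\right)\le 2\sqrt{1-\left(1-\tfrac{2E}{M+1}\right)^2}+\sqrt{1-F_{E,M}},
\]
because taking the infimum over $\psi_{RAB}$ and rearranging (using $F\ge 0$) then yields the stated bound. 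First I would build a nearby admissible truncated state by projecting $|\psi\rangle$ onto the box and pouring the discarded weight into the zero-energy mode: with $q:=\sum_{0\le m,n\le M}\lambda_{m,n}^2$, set
\[
|\psi'\rangle_{RAB}:=\sqrt{\lambda_{0,0}^2+(1-q)}\;|0,0\rangle_R|0,0\rangle_{AB}+\sum_{\substack{0\le m,n\le M\\(m,n)\ne(0,0)}}\lambda_{m,n}\,|m,n\rangle_R|m,n\rangle_{AB},
\]
which one checks is a twin-Fock superposition with non-negative amplitudes, is normalized, and has $\Tr(\hat n_{AB}\psi'_{AB})=\sum_{(m,n)\ne(0,0)}(m+n)\lambda_{m,n}^2\le 2E$, hence is admissible for \eqref{new_eqn30}.

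Next I would estimate the overlap. Since $m>M$ or $n>M$ forces $m+n\ge M+1$, Markov's inequality against the energy constraint gives $1-q\le\frac{1}{M+1}\sum_{m,n}(m+n)\lambda_{m,n}^2\le\frac{2E}{M+1}$, and using $\sqrt{\lambda_{0,0}^2+(1-q)}\ge\lambda_{0,0}$ one gets $\langle\psi|\psi'\rangle=(q-\lambda_{0,0}^2)+\lambda_{0,0}\sqrt{\lambda_{0,0}^2+(1-q)}\ge q\ge 1-\tfrac{2E}{M+1}$; for $M$ in the relevant regime (where $0\le 1-\tfrac{2E}{M+1}\le 1$, in particular for the $M\to\infty$ limit applied later) this gives $F(\psi,\psi')\ge(1-\tfrac{2E}{M+1})^2$, i.e.\ $C(\psi,\psi')\le\sqrt{1-(1-\tfrac{2E}{M+1})^2}$. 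Finally, since the sine distance satisfies the triangle inequality and is contractive under quantum channels \cite{rastegin2006sine}, applying $\mathcal{S}$ and $\mathcal{T}$ (with $\mathcal{I}_R$ acting on $R$) gives
\begin{align*}
C\!\left(\mathcal{S}(\psi),\mathcal{T}(\psi)\right)&\le C\!\left(\mathcal{S}(\psi),\mathcal{S}(\psi')\right)+C\!\left(\mathcal{S}(\psi'),\mathcal{T}(\psi')\right)+C\!\left(\mathcal{T}(\psi'),\mathcal{T}(\psi)\right)\\
&\le 2\,C(\psi,\psi')+C\!\left(\mathcal{S}(\psi'),\mathcal{T}(\psi')\right),
\end{align*}
and since $\psi'$ is admissible for \eqref{new_eqn30} we have $F(\mathcal{S}(\psi'),\mathcal{T}(\psi'))\ge F_{E,M}$, i.e.\ $C(\mathcal{S}(\psi'),\mathcal{T}(\psi'))\le\sqrt{1-F_{E,M}}$; combining with the previous estimate yields exactly the claimed inequality.

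The step I expect to be the main obstacle is the construction of $\psi'$: the obvious candidate, the renormalized projection $\Pi_M|\psi\rangle/\sqrt{q}$, can overshoot the energy budget $2E$ after renormalization, which is why the discarded amplitude must instead be absorbed into the zero-energy mode $|0,0\rangle$; one then has to verify feasibility of this redistributed state, the clean bound $F(\psi,\psi')\ge q^{2}$, and the matching of the Markov estimate to the precise exponent $(1-\frac{2E}{M+1})^{2}$ in the statement. The concluding triangle-inequality and data-processing steps are routine, and for $M$ so small that the square-root arguments degenerate the left inequality is vacuous, its left-hand side being non-positive while $F_E\ge 0$.
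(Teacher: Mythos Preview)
Your proof is correct and follows the same overall architecture as the paper's---a Markov-type truncation estimate, then the triangle inequality and data processing for the sine distance---but your construction of the comparison state is more careful than the paper's. The paper uses the renormalized projection $\psi^M=\Pi^M\psi\,\Pi^M/\Tr(\Pi^M\psi)$ and simply asserts that it obeys the energy constraint $\Tr(\hat n_{AB}\psi^M_{AB})\le 2E$; as you correctly anticipated, this can fail (e.g.\ $M=2$, $\lambda_{2,2}^2=\lambda_{3,0}^2=\tfrac12$, $E=1.8$ gives truncated energy $4>3.6=2E$), because dividing by $q<1$ may inflate the in-box energy past the budget. Your device of pouring the excised weight into the zero-energy mode $|0,0\rangle$ makes feasibility for \eqref{new_eqn30} automatic. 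A second minor benefit: your direct overlap bound yields $F(\psi,\psi')\ge q^2\ge(1-\tfrac{2E}{M+1})^2$, matching the exponent in the statement exactly, whereas the paper's route through Fuchs--van~de~Graaf and the gentle-measurement lemma gives only $F(\psi,\psi^M)\ge(1-\sqrt{2E/(M+1)})^2$, which by itself is too weak for the next displayed step (although the sharper $F(\psi,\psi^M)=q\ge 1-\tfrac{2E}{M+1}$, available because both states are pure, would close that gap). Both approaches suffice for the paper's eventual use ($M\to\infty$), but yours removes these two small loose ends.
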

\begin{proof}
By definition, we have
\begin{align}
    F_{E}(\mathcal{S}_{AB \to A'B'}, \mathcal{T}_{A\to B'}^\xi\otimes \mathcal{T}_{B\to A'}^{\xi'}) \leq F_{E,M}(\mathcal{S}_{AB \to A'B'}, \mathcal{T}_{A\to B'}^\xi\otimes \mathcal{T}_{B\to A'}^{\xi'}).
\end{align}
We now establish the inequality in the other direction.
Let $\Pi_{AB}^M$ be the $(M+1)^2$-dimensional projection operator defined as
\begin{equation}
    \Pi_{AB}^M \coloneqq \sum_{m,n=0}^{M} \vert m,n \rangle\!\langle m,n \vert_{AB}.
\end{equation}
Let $\psi_{RAB}$ be an arbitrary pure state in \eqref{new7_eqn2}. We then have
\begin{align}
    \Tr{\left(\Pi_{AB}^M \psi_{RAB} \right)} &= 1- \sum_{\substack{m,n=0 \\ \max\{m,n\} \geq M+1}}^{\infty} \langle m,n \vert \psi_{AB} \vert m,n \rangle  \\
    &= 1- \sum_{\substack{m,n=0 \\ \max\{m,n\} \geq M+1}}^{\infty} \lambda_{m,n}^2  \\
    &\geq 1- \sum_{\substack{m,n=0 \\ \max\{m,n\} \geq M+1}}^{\infty}\left( \frac{m+n}{M+1}\right) \lambda_{m,n}^2  \label{new13_eqn1}\\
    &\geq 1- \frac{2E}{M+1} \label{new_eqn31}.
\end{align}
\eqref{new13_eqn1} follows from the fact $(m+n)/(M+1) \geq 1,$ and \eqref{new_eqn31} is a consequence of the constraint $\sum_{m,n=0}^{\infty} (m+n) \lambda_{m,n}^2 \leq 2E$.
Let $\psi^M_{RAB}$ be the truncated state given by
\begin{equation}
    \psi^M_{RAB} \coloneqq \dfrac{\Pi_{AB}^M \psi_{RAB} \Pi_{AB}^M}{\Tr{\left(\Pi_{AB}^M \psi_{RAB} \right)}}.
\end{equation}
We have
\begin{align}
    F(\psi_{RAB}, \psi_{RAB}^M) & \geq \left(1-\frac{1}{2} \|\psi_{RAB}- \psi_{RAB}^M \|_1 \right)^2 \label{new13_eqn2}  \\
    & \geq \left( 1- \sqrt{\frac{2E}{M+1}} \right)^2 \label{new_eqn32} .
\end{align}
The inequality \eqref{new13_eqn2} follows from \eqref{new_eqn50}, and \eqref{new_eqn32} follows from the gentle measurement lemma (see, e.g., \cite[Lemma 9.4.1]{wilde2011classical}).
We have 
\begin{align}
   & C(\mathcal{S}_{AB \to A'B'}(\psi_{RAB}), \mathcal{T}_{A\to B'}^\xi\otimes \mathcal{T}_{B\to A'}^{\xi'}(\psi_{RAB}))\nonumber \\
    & \leq C(\mathcal{S}_{AB \to A'B'}(\psi_{RAB}), \mathcal{S}_{AB \to A'B'}(\psi_{RAB}^M)) + C(\mathcal{S}_{AB \to A'B'}(\psi_{RAB}^M), \mathcal{T}_{A\to B'}^\xi\otimes \mathcal{T}_{B\to A'}^{\xi'}(\psi_{RAB}^M)) \nonumber \\
    &\hspace{0.5cm} + C(\mathcal{T}_{A\to B'}^\xi\otimes \mathcal{T}_{B\to A'}^{\xi'}(\psi_{RAB}^M), \mathcal{T}_{A\to B'}^\xi\otimes \mathcal{T}_{B\to A'}^{\xi'}(\psi_{RAB})) \label{new13_eqn3} \\
    &\leq 2 C(\psi_{RAB}, \psi_{RAB}^M)+ C(\mathcal{S}_{AB \to A'B'}(\psi_{RAB}^M), \mathcal{T}_{A\to B'}^\xi\otimes \mathcal{T}_{B\to A'}^{\xi'}(\psi_{RAB}^M)) \label{new13_eqn4}\\
    &= 2\sqrt{1-F(\psi_{RAB}, \psi_{RAB}^M)}+\sqrt{1-F(\mathcal{S}_{AB \to A'B'}(\psi_{RAB}^M), \mathcal{T}_{A\to B'}^\xi\otimes \mathcal{T}_{B\to A'}^{\xi'}(\psi_{RAB}^M))}\\
    &\leq 2 \sqrt{1- \left( 1- \frac{2E}{M+1} \right)^2} + \sqrt{1-F_{E,M}(\mathcal{S}_{AB \to A'B'}, \mathcal{T}_{A\to B'}^\xi\otimes \mathcal{T}_{B\to A'}^{\xi'})} \label{new13_eqn5}.
\end{align}
The inequalities \eqref{new13_eqn3} and \eqref{new13_eqn4} are consequences of the triangle inequality and monotonicity of the sine distance, respectively; \eqref{new13_eqn5} follows from \eqref{new_eqn32} and the fact that
 $\psi_{RAB}^M$ is a legitimate finite dimensional state that satisfies the energy constraint $\Tr{\left( \hat{n}_{AB}\psi_{AB}^M \right)} \leq 2E$. 
The inequality \eqref{new13_eqn5} is true for arbitrary $\psi_{RAB}$ in \eqref{new7_eqn2}. So, we get
\begin{align}
    C_E(\mathcal{S}_{AB \to A'B'}, \mathcal{T}_{A\to B'}^\xi\otimes \mathcal{T}_{B\to A'}^{\xi'}) & \leq 2 \sqrt{1- \left( 1- \frac{2E}{M+1} \right)^2} + \sqrt{1-F_{E,M}(\mathcal{S}_{AB \to A'B'}, \mathcal{T}_{A\to B'}^\xi\otimes \mathcal{T}_{B\to A'}^{\xi'})}\label{new_eqn37}.
\end{align}
By squaring and then rearranging \eqref{new_eqn37} gives the desired inequality.
\end{proof}



\bigskip 

Define $f_{M,\xi, \xi'}: \mathbb{R}^{(M+1)^2} \to \mathbb{R}$ as
\begin{align}
 f_{M,\xi,\xi'}(p)&\coloneqq  \dfrac{1}{(1+\xi)(1+\xi')}\sum_{m, n, m', n'=0}^{M}p_{m,n}p_{m', n'} T_{\xi}^{mm'}  T_{\xi'}^{nn'}.
 \end{align}
We recall from \eqref{new_eqn47} that for any {\it probability vector} $p$  in $\mathbb{R}^{(M+1)^2},$ a pure state of the form $\vert \phi \rangle_{RAB} =\sum_{m,n=0}^{M} \sqrt{p_{m,n}} \vert m,n \rangle_R \vert m,n \rangle_{AB}$ 
satisfies
\begin{equation}\label{new2_eqn11}
    F(\mathcal{S}_{AB \to A'B'}(\phi_{RAB}), \mathcal{T}_{A\to B'}^\xi\otimes \mathcal{T}_{B\to A'}^{\xi'}(\phi_{RAB})) = f_{M,\xi,\xi'}(p).
\end{equation}
 By \eqref{new_eqn47} and \eqref{new_eqn30} we thus have
\begin{equation}\label{new_eqn48}
    F_{E,M}(\mathcal{S}_{AB \to A'B'}, \mathcal{T}_{A\to B'}^\xi\otimes \mathcal{T}_{B\to A'}^{\xi'}) = \inf_{p} 
   f_{M,\xi, \xi'}(p),
\end{equation}
where the minimum is taken over probability vectors $p$ in $\mathbb{R}^{(M+1)^2}$ satisfying $\sum_{m,n=0}^{M} (m+n) p_{m,n} \leq 2E$.
Let
\begin{equation}
D_M\coloneqq \{(m,n,m',n') \in \mathbb{Z}^{4}_{+}:  m,n,m',n' \leq M, m'+n' \geq 2, m+n \geq 2\},  
\end{equation}
where $\mathbb{Z}^{4}_+ \subset \mathbb{Z}^4$ is the set of $4$-tuples of non-negative integers.
We can rewrite $f_{M, \xi, \xi'}$ as
\begin{align}
(1+\xi)(1+\xi') f_{M,\xi, \xi'}(p)&= \sum_{\substack{m,n,m',n'=0 \\ (m,n,m',n') \notin D_M}}^M  p_{m,n}p_{m',n'} T_{\xi}^{mm'}  T_{\xi'}^{nn'} +\sum_{\substack{m,n,m',n'=0 \\ (m,n,m',n') \in D_M}}^M p_{m,n}p_{m', n'} T_{\xi}^{mm'}  T_{\xi'}^{nn'} \nonumber \\\nonumber\\
 &= \sum_{\substack{m,n,m',n'=0 \\ m+n \leq 1 \\ m'+n'\leq 1}}^M p_{m,n}p_{m',n'} T_{\xi}^{mm'}  T_{\xi'}^{nn'} 
 +  \sum_{\substack{m,n,m',n'=0 \\m+n \leq 1\\ m'+n' \geq 2}}^M p_{m,n}p_{m',n'} T_{\xi}^{mm'}  T_{\xi'}^{nn'}  \nonumber \\
 &\hspace{0.5cm}+ \sum_{\substack{m,n,m',n'=0 \\ m+n \geq 2\\ m'+n' \leq 1}}^M p_{m,n}p_{m',n'} T_{\xi}^{mm'}  T_{\xi'}^{nn'}  + \sum_{\substack{m,n,m',n'=0 \\ (m,n,m',n') \in D_M}}^M p_{m,n}p_{m',n'} T_{\xi}^{mm'}  T_{\xi'}^{nn'}. \label{new3_eqn2}
 \end{align}
From \eqref{new3_eqn1}, using $\eta=1/(1+\xi)$ we get
\begin{equation}\label{re:bidirectionaltrace}
  \mathcal{L}^{\frac{1}{1+\xi}}(\vert m\rangle\!\langle m'\vert)\mathcal{L}^{\frac{1}{1+\xi}}(\vert m'\rangle\!\langle m\vert)=
 \sum_{k=0}^{\min\{m, m'\}}   \binom{m}{k} \binom{m'}{k} \dfrac{\xi^{2k}}{(1+\xi)^{m+m'}}  \vert m-k\rangle\!\langle m-k \vert.
\end{equation}
By taking the trace on both sides of \eqref{re:bidirectionaltrace}, and from \eqref{new_eqn38a} we get
\begin{align}  
T_{\xi}^{mm'} 
&=\sum_{k=0}^{\min\{m, m'\}}\binom{m}{k} \binom{m'}{k} \dfrac{\xi^{2k}}{(1+\xi)^{m+m'}}.\label{re:biditrace(a)}
\end{align}
In particular, $T^{0k}_{\xi}= 1/(1+\xi)^k$ and $T^{1k}_{\xi} =  1/(1+\xi)^{k+1} +k \xi^2/(1+\xi)^{k+1}$ for all $k \geq 0$.
Similarly 
\begin{align}  
T_{\xi'}^{nn'}
&=\sum_{k=0}^{\min\{n, n'\}}\binom{n}{k} \binom{n'}{k} \dfrac{\xi'^{2k}}{(1+\xi')^{n+n'}}.\label{re:biditrace(b)}
\end{align}
We also note that  $T^{mm'}_{\xi}=T^{m'm}_{\xi}$ and  $T^{nn'}_{\xi'}=T^{n'n}_{\xi'}$. From \eqref{new3_eqn2} we thus get
\begin{multline}
 f_{M,\xi, \xi'}(p) =\dfrac{1}{(1+\xi)(1+\xi')} \left[ \sum_{\substack{m,n,m',n'=0 \\ m+n \leq 1 \\ m'+n'\leq 1}}^M p_{m,n}p_{m',n'} T_{\xi}^{mm'}  T_{\xi'}^{nn'}  
 + 2\sum_{\substack{m,n,m',n'=0 \\m+n \leq 1\\ m'+n' \geq 2}}^M p_{m,n}p_{m',n'} T_{\xi}^{mm'}  T_{\xi'}^{nn'}  \right. \\ \left.+ \sum_{\substack{m,n,m',n'=0 \\ (m,n,m',n') \in D_M}}^M  p_{m,n}p_{m', n'} T_{\xi}^{mm'}  T_{\xi'}^{nn'}  \right].\label{re:bidirectionalfunction}
 \end{multline}

\subsection{Optimal input state for the energy-constrained channel fidelity for $\xi = \xi'$}

\label{app:bidir-opt-state-xi'-equal-xi}

\begin{proposition}\label{new_prop1}
The minimum of $f_{M,\xi,\xi},$ subject to $p_{m,n} \geq 0$ for all $m,n\in\{0,\ldots,M\},$
\begin{equation}\label{new_eqn45}
    \sum_{m,n=0}^{M} p_{m,n}=1, \quad  \sum_{m,n=0}^{M} (m+n) p_{m,n} \leq 2E,
\end{equation}
is attained at the unique point $p=(p_{m,n})_{m,n=0}^M$ in $\mathbb{R}^{(M+1)^2},$ given by
\begin{equation}\label{new12_eqn3}
    p_{m,n}= \begin{cases}
       1-2E & \text{ if } m=n=0, \\
       E & \text{ if } m+n=1, \\
       0 & \text{ if } m+n \geq 2,
    \end{cases}
\end{equation}
whenever $2E \leq (1+\xi)/(2+3\xi)$. 
\end{proposition}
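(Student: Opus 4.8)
The plan is to mirror the two-part structure of the proof of Proposition~\ref{new3_prop1}: first fix the energy value $E_0\in[0,2E]$ and minimize $f_{M,\xi,\xi}$ over the slice $\{p\geq 0:\ \sum_{m,n}p_{m,n}=1,\ \sum_{m,n}(m+n)p_{m,n}=E_0\}$, and afterwards optimize over $E_0$. On that slice I would use the two equality constraints to eliminate $p_{0,0}$ and the sum $p_{1,0}+p_{0,1}$, keeping as free parameters the ``high'' variables $q=(p_{m,n})_{m+n\geq 2}$ and the imbalance $\delta\coloneqq p_{1,0}-p_{0,1}$, so that $p_{1,0}+p_{0,1}=E_0-\sum_{m+n\geq 2}(m+n)q_{m,n}$ and $p_{0,0}=1-E_0+\sum_{m+n\geq 2}(m+n-1)q_{m,n}$. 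Then $f_{M,\xi,\xi}$ becomes a quadratic $G(\delta,q)$ on a feasible region that contains the origin and for which lowering any coordinate $q_{m,n}$ toward $0$ preserves feasibility (since $|\delta|\leq E_0-\sum_{m+n\geq 2}(m+n)q_{m,n}$). I would then prove: (i) $G(\delta,q)>G(\delta,0)$ for every feasible $(\delta,q)$ with $q\neq0$; (ii) $G(\delta,0)>G(0,0)$ for $\delta\neq0$; (iii) $E_0\mapsto G(0,0)$ is strictly decreasing on $[0,2E]$. These three facts force the unique global minimizer to be the point in \eqref{new12_eqn3}.

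For (i): since $f_{M,\xi,\xi}$ is quadratic, each $\partial G/\partial q_{m,n}$ is affine in $(\delta,q)$ and equals its value at $q=0$ plus $\sum_{m'+n'\geq2}c_{(m,n),(m',n')}\,q_{m',n'}$ with $q$-independent coefficients. Using the explicit formulas $T^{0k}_\xi=(1+\xi)^{-k}$, $T^{1k}_\xi=(1+k\xi^2)(1+\xi)^{-k-1}$ and the symmetry $T^{mm'}_\xi=T^{m'm}_\xi$ from \eqref{re:biditrace(a)}--\eqref{re:biditrace(b)}, I expect $c_{(m,n),(m',n')}\geq0$ (the analogue of the positivity in \eqref{appendix_B_eqn11}), so it remains to check that the $q=0$ value of $\partial G/\partial q_{m,n}$ is positive for all feasible $\delta$. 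Evaluating at $q=0$, i.e.\ at $p=(1-E_0,\tfrac{E_0+\delta}{2},\tfrac{E_0-\delta}{2},0,\dots)$, the chain rule gives an expression of the form
\[
\left.\frac{\partial G}{\partial q_{m,n}}\right|_{q=0}=\Psi(m+n,E_0)+\frac{\delta\,(m-n)\,\xi^2}{(1+\xi)^{m+n+3}},
\]
where $\Psi$ is the $\delta$-independent part assembled from $\partial f/\partial p_{0,0}$, $\partial f/\partial p_{1,0}+\partial f/\partial p_{0,1}$, and the $\delta$-independent part of $\partial f/\partial p_{m,n}$ at the symmetric state (one checks that these combine into something depending only on $m+n$ and $E_0$), and $\Psi$ is of the one-dimensional type handled in \eqref{appendix_B_eqn12}--\eqref{appendix_B_eqn13}. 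Since $|\delta|\leq E_0\leq2E$ and $|m-n|\leq m+n$, the cross term is bounded in absolute value by $2E(m+n)\xi^2(1+\xi)^{-m-n-3}$, and positivity of the whole expression should then follow by the same kind of estimate as in Proposition~\ref{new3_prop1}, now invoking the hypothesis $2E\leq(1+\xi)/(2+3\xi)$ (equivalently $2E\leq1/(3-\eta)$ with $\eta=1/(1+\xi)$).

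For (ii): when $q=0$ only the first bracket of \eqref{re:bidirectionalfunction} survives, and a short computation using $p_{1,0}p_{0,1}=(E_0^2-\delta^2)/4$ gives $G(\delta,0)=G(0,0)+\tfrac{\xi^2}{2(1+\xi)^4}\,\delta^2$, strictly minimized at $\delta=0$ since $\xi>0$. For (iii): along the slice-minimizers, $G(0,0)=f_{M,\xi,\xi}(1-E_0,\tfrac{E_0}{2},\tfrac{E_0}{2},0,\dots)$ is an explicit quadratic polynomial in $E_0$, and a direct differentiation shows $\tfrac{d}{dE_0}G(0,0)$ has the same sign as $-\tfrac{2\xi}{1+\xi}+\tfrac{3\xi^2}{(1+\xi)^2}E_0$, which is negative for $E_0<\tfrac{2(1+\xi)}{3\xi}$; since $2E\leq\tfrac{1+\xi}{2+3\xi}\leq\tfrac{2(1+\xi)}{3\xi}$, this holds throughout $[0,2E]$. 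Combining (i)--(iii), the minimum over the whole feasible set is attained uniquely at $E_0=2E$, $\delta=0$, $q=0$, which is \eqref{new12_eqn3}. \textbf{The main obstacle} is step (i): unlike the unidirectional case there is an extra $\delta$-dependent cross term that must be controlled uniformly over $|\delta|\leq E_0$, and one must carry the combinatorial weights $T^{mm'}_\xi$ through the estimate; this is exactly the point where the constraint $2E\leq(1+\xi)/(2+3\xi)$ is used.
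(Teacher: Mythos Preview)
Your plan is correct and essentially parallel to the paper's own proof: the paper also fixes the energy at $2E_0$, eliminates two low variables, shows the reduced function is strictly increasing in each high variable $p_{r,s}$ ($r+s\ge2$), then minimizes over the remaining low variable, and finally shows the slice minimum is strictly decreasing in $E_0$. The one structural difference is the choice of free ``imbalance'' coordinate: the paper eliminates $p_{0,0}$ and $p_{0,1}$ and keeps $p_{1,0}$ free, whereas you eliminate $p_{0,0}$ and $p_{1,0}+p_{0,1}$ and keep $\delta=p_{1,0}-p_{0,1}$. The paper's asymmetric choice has a small payoff: the $p_{1,0}$--coefficient in $\partial f/\partial p_{r,s}$ turns out to be $(1-\eta)^2[(r+s)+(r-s)\eta^{r+s-1}]\ge0$ by inspection, so no separate bound on the imbalance variable is needed at that stage; in your coordinates you instead have to control the $\delta$--cross term, which (as you correctly compute) works out to $\delta(r-s)\xi^2(1+\xi)^{-(r+s+3)}$ and can indeed be absorbed using $|\delta|\le E_0$ and the hypothesis $2E\le(1+\xi)/(2+3\xi)$.

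The real gap is your step (i), and specifically the sentence ``I expect $c_{(m,n),(m',n')}\ge0$ (the analogue of the positivity in \eqref{appendix_B_eqn11}).'' This is \emph{not} analogous to the one--line identity \eqref{appendix_B_eqn11}; it is the bidirectional analogue of the paper's $\Gamma_{m,n,r,s,\eta}>0$, which the paper proves via a page of inequalities using $T^{mr}_\xi\ge\eta^{m+r}$, $T^{ns}_\xi\ge\eta^{n+s}\bigl(1+ns(1-\eta)^2/\eta^2\bigr)$, and a careful factoring (see \eqref{coeeqn}--\eqref{new15eqn1}). In fact one can check that in your parametrization $c_{(r,s),(m,n)}=\eta^2\bigl(\Gamma_{m,n,r,s,\eta}+\Gamma_{n,m,s,r,\eta}\bigr)$, so your expectation is true---but establishing it requires exactly the $\Gamma$--calculation the paper carries out. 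Likewise your ``$\Psi>|\text{cross term}|$'' estimate can be completed and yields precisely $E_0\le 1/(3-\eta)$, matching the paper's \eqref{new_eqn42}--\eqref{new13_eqn7}. So the outline is sound, but the two places you write ``expect'' and ``should follow'' are where all of the work lies, and that work is identical in substance to the paper's.
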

\begin{proof}
We divide the proof into two parts, similar to the proof of Proposition~\ref{new3_prop1}. Let $E_0 \in [0,E]$. In the first part, we show that the minimum of $f_{M,\xi,\xi},$ subject to $ p_{m,n} \geq 0$ for all $m,n\in\{0,\ldots,M\},$ and the equality constraints
\begin{equation}\label{new12_eqn6}
    \sum_{m,n=0}^{M} p_{m,n}=1, \quad  \sum_{m,n=0}^{M} (m+n) p_{m,n} = 2E_0,
\end{equation}
is uniquely attained at $p$ as given in \eqref{new12_eqn3}. In the second part, we show that the minimum value obtained in the first part is a strictly decreasing function of $E_0$. It then follows that the minimizer of $f_{M,\xi,\xi}$ subject to \eqref{new_eqn45} is given by \eqref{new12_eqn3}.

\emph{Part 1.}
In \eqref{re:bidirectionalfunction}, substitute the values of $T_{\xi}^{mm'}$ and $T_{\xi}^{nn'}$ from \eqref{re:biditrace(a)} and \eqref{re:biditrace(b)} to get
 \begin{multline}\label{new_eqn14}
 f_{M,\xi,\xi}(p) =\eta^2 \Bigg[ p_{0,0}^2+ 2\eta p_{0,0}p_{0,1} + 2\eta p_{0,0}p_{1,0} + 2\eta^2 p_{0,1}p_{1,0}+(\eta^2 + (1-\eta)^2)(p_{0,1}^2  + p_{1,0}^2)   \\
 + 2 \sum_{\substack{m,n=0 \\ m+n \geq 2}}^M \left( \eta^{m+n}  p_{0,0}  +  \eta^m (\eta^{n+1}+n \eta^{n-1} (1-\eta)^2) p_{0,1}   + \eta^n (\eta^{m+1}+m \eta^{m-1} (1-\eta)^2) p_{1,0}   \right)p_{m,n}   \\
+  \sum_{\substack{m,n,m',n'=0 \\ (m,n,m',n') \in D_M}}^M p_{m,n}p_{m', n'}T_{\xi}^{mm'}  T_{\xi}^{nn'} \Bigg]. 
 \end{multline}
 Here we used the relation $\eta=1/(1+\xi)$ for making the following calculations convenient.
From \eqref{new12_eqn6}, we can write $p_{0,0}$ and $p_{0,1}$ as
\begin{align}
  p_{0,1}&=2E_0-p_{1,0}-\sum_{\substack{m,n=0 \\ m+n \geq 2}}^{M} (m+n) p_{m,n}, \label{re:energycondi}\\
  p_{0,0}&=1-2E_0+\sum_{\substack{m,n=0 \\ m+n \geq 2}}^{M} (m+n-1) p_{m,n} \label{new_eqn15}.
\end{align}
Using the relations \eqref{re:energycondi} and \eqref{new_eqn15}, we can treat $p_{0,0},p_{0,1}$ as dependent variables so that
$f_{M,\xi,\xi}(p)$ is a function of $(M+1)^2-2$ independent variables
$\{p_{r,s}: r,s\in \{0,\ldots, M\}\}\backslash \{(0,0),(0,1)\}$.
We have
\begin{equation}
 \frac{\partial p_{0,0}}{\partial p_{r,s}} =r+s-1,\quad  \frac{\partial p_{0,1}}{\partial p_{r,s}}=-(r+s).
\end{equation}
We now argue that a necessary condition for a minimizer is $p_{r,s}=0$ whenever $r+s \geq 2$.
Differentiate $f_{M,\xi,\xi}$ partially with respect to $p_{r,s}$ for $r+s \geq 2$.
We get
\begin{multline}
\dfrac{1}{2\eta^2} \pdv{f_{M,\xi,\xi}(p)}{p_{r,s}} = (r+s-1)p_{0,0}+\eta \left[-(r+s)p_{0,0}+(r+s-1)p_{0,1}\right]+\eta \left[ (r+s-1)- \eta(r+s) \right]p_{1,0}  \\
 -\left[\eta^2+(1-\eta)^2\right](r+s)p_{0,1} +  \eta^{r+s} p_{0,0} + \eta^r \left[\eta^{s+1}+s \eta^{s-1} (1-\eta)^2\right] p_{0,1}   +  \eta^s \left[\eta^{r+1}+r \eta^{r-1} (1-\eta)^2 \right]  p_{1,0}   \\
 + \sum_{\substack{m,n=0 \\ m+n \geq 2}}^M \left[ (r+s-1) \eta^{m+n} - (r+s) \eta^m (\eta^{n+1}+n \eta^{n-1} (1-\eta)^2) + T^{mr}_{\xi}T^{ns}_{\xi}  \right] p_{m,n}.
 \end{multline}
Further simplification gives
\begin{align}\label{new_eqn34}
\dfrac{1}{2\eta^2} \pdv{f_{M,\xi,\xi}(p)}{p_{r,s}} &= \left[(r+s-1)-\eta(r+s)+\eta^{r+s} \right] p_{0,0}\nonumber\\
&\hspace{0.5cm}+ \left[\eta (r+s-1)-(\eta^2+(1-\eta)^2)(r+s)+\eta^{r+s+1}+s \eta^{r+s-1}(1-\eta)^2 \right]p_{0,1}  \nonumber \\
&\hspace{0.5cm} + \left[\eta( (r+s-1)- \eta(r+s))+\eta^{r+s+1}+r \eta^{r+s-1}(1-\eta)^2 \right]p_{1,0} \nonumber \\
&\hspace{0.5cm} + \sum_{\substack{m,n=0 \\ m+n \geq 2}}^M \left[ (r+s-1) \eta^{m+n} - (r+s)(  \eta^{m+n+1}+n \eta^{m+n-1}(1-\eta)^2) + T^{mr}_{\xi}T^{ns}_{\xi}  \right] p_{m,n}.
 \end{align}
Substitute the values of $p_{0,0}$ and $p_{0,1}$ in \eqref{new_eqn34} from \eqref{re:energycondi} and \eqref{new_eqn15},  and simplify to get
\begin{align}\label{new_eqn35}
\dfrac{1}{2\eta^2} \pdv{f_{M,\xi,\xi}(p)}{p_{r,s}}
&=(r+s)(1-\eta)-(1-\eta^{r+s}) -2E_0 (1-\eta)  \left[(2(r+s)-s\eta^{r+s-1})(1 - \eta)-(1-\eta^{r+s})\right]\nonumber \\  
&\hspace{0.5cm} + (1-\eta)^2 \left[(r+s)+(r-s)\eta^{r+s-1} \right]p_{1,0}
 + \sum_{\substack{m,n=0 \\ m+n \geq 2}}^M \Gamma_{m,n,r,s,\eta} p_{m,n},
 \end{align}
where
\begin{align}\label{coeeqn}
    \Gamma_{m,n,r,s,\eta} &= (1-\eta)^2\left[2(m+n)(r+s)-\dfrac{1-\eta^{r+s}}{1-\eta} (m+n)-\dfrac{1-\eta^{m+n}}{1-\eta} (r+s) \right]+1 - \eta^{m+n}-\eta^{r+s}+ T^{mr}_{\eta}T^{ns}_{\eta}\nonumber \\
    &\hspace{0.5cm} -(1-\eta)^2\left[s(m+n)\eta^{r+s-1} +n(r+s)\eta^{m+n-1} \right].
\end{align}
The following arguments show that $\Gamma_{m,n,r,s,\eta}>0.$ We have
\begin{align}
    T^{mr}_{\eta} &\geq \eta^{m+r},\label{coeqn2} \\
    T^{ns}_{\eta} &\geq \eta^{n+s} \left[1+ ns\left( \dfrac{1-\eta}{\eta} \right)^2 \right]\label{coeqn3} .
\end{align}
This implies
\begin{align}
     T^{mr}_{\eta}  T^{ns}_{\eta} &\geq \eta^{m+n} \eta^{r+s} \left[1+ [(r+s)-r][(m+n)-m]\left( \dfrac{1-\eta}{\eta} \right)^2 \right].\label{coeqn5}
\end{align}
Let $\alpha=r+s$ and $\beta=m+n.$ Using the inequality \eqref{coeqn5}, from \eqref{coeeqn} we get
\begin{align}
     \Gamma_{m,n,r,s,\eta}& \geq   (1-\eta)^2\left[2\alpha \beta-\dfrac{1-\eta^{\alpha}}{1-\eta} \beta-\dfrac{1-\eta^{\beta}}{1-\eta} \alpha \right]+1 - \eta^{\beta}-\eta^{\alpha} +\eta^{\alpha + \beta} \left[1+ (\alpha-r)(\beta-m)\left( \dfrac{1-\eta}{\eta} \right)^2 \right] \nonumber \\
     &\hspace{0.5cm}-(1-\eta)^2\left[(\alpha-r)\beta \eta^{\alpha-1} +(\beta-m)\alpha \eta^{\beta-1} \right] \\
     &= (1-\eta)^2\left[2 \alpha \beta-\dfrac{1-\eta^{\alpha}}{1-\eta} \beta-\dfrac{1-\eta^{\beta}}{1-\eta} \alpha \right]+1 - \eta^{\beta}-\eta^{\alpha}+\eta^{\alpha + \beta} \nonumber  \\
     & \hspace{0.5cm}+(1-\eta)^2 \left[    (\alpha-r)(\beta-m)\eta^{\alpha+\beta-2}-(\alpha-r)\beta\eta^{\alpha-1} -(\beta-m)\alpha\eta^{\beta-1} \right] \\
     &= (1-\eta)^2\left[2\alpha \beta-\dfrac{1-\eta^{\alpha}}{1-\eta} \beta-\dfrac{1-\eta^{\beta}}{1-\eta} \alpha \right]+(1 - \eta^{\alpha})(1-\eta^{\beta}) \nonumber  \\
     &\hspace{0.5cm} +(1-\eta)^2 \left[ (\alpha \beta-\beta r-\alpha m+\beta m)\eta^{\alpha-1}\eta^{\beta-1}-(\alpha \beta -\beta r)\eta^{\alpha-1} -(\alpha \beta-\alpha m)\eta^{\beta-1} \right]\\
     &= (1-\eta)^2\left[2\alpha \beta-\dfrac{1-\eta^{\alpha}}{1-\eta} \beta-\dfrac{1-\eta^{\beta}}{1-\eta} \alpha \right]+(1 - \eta^{\alpha})(1-\eta^{\beta})+ (1-\eta)^2 \alpha \beta\left[ \eta^{\alpha-1}\eta^{\beta-1}-\eta^{\alpha-1}-\eta^{\beta-1} \right] \nonumber \\
     &\hspace{0.5cm}  +(1-\eta)^2 \left[\beta r\eta^{\alpha-1}(1-\eta^{\beta-1}) + \alpha m \eta^{\beta-1}(1-\eta^{\alpha-1})+  rm \eta^{\alpha-1}\eta^{\beta-1} \right] \\
     &\geq (1-\eta)^2\left[2\alpha \beta-\dfrac{1-\eta^{\alpha}}{1-\eta} \beta-\dfrac{1-\eta^{\beta}}{1-\eta} \alpha \right]+(1 - \eta^{\alpha})(1-\eta^{\beta})+ (1-\eta)^2 \alpha \beta\left[ \eta^{\alpha-1}\eta^{\beta-1}-\eta^{\alpha-1}-\eta^{\beta-1} \right]\\
     &= (1-\eta)^2\left[\alpha \beta (1+\eta^{\alpha-1}\eta^{\beta-1}-\eta^{\alpha-1}-\eta^{\beta-1})+\alpha \beta -\dfrac{1-\eta^{\alpha}}{1-\eta} \beta-\dfrac{1-\eta^{\beta}}{1-\eta} \alpha +\left(\dfrac{1 - \eta^{\alpha}}{1-\eta} \right)\left(\dfrac{1-\eta^{\beta}}{1-\eta}\right)\right]\\
     &= (1-\eta)^2\left[\alpha\beta(1-\eta^{\alpha-1})(1-\eta^{\beta-1})+\left(\alpha- \sum_{i=0}^{\alpha-1} \eta^i  \right) \left( \beta- \sum_{i=0}^{\beta-1} \eta^i  \right)\right] >0.
\end{align}
Now, the constant term is non-negative if
\begin{align}\label{new_eqn42}
    2E_0 (1-\eta) \left[(2\alpha-s\eta^{\alpha-1})(1 - \eta)-(1-\eta^{\alpha})\right] \leq \alpha(1-\eta)-(1-\eta^{\alpha}).
\end{align}
Using the fact that $s \geq 0,$ the inequality \eqref{new_eqn42}  holds if
\begin{align}\label{new_eqn43}
    2E_0 (1-\eta) \left[2\alpha(1 - \eta)-(1-\eta^{\alpha})\right] \leq \alpha(1-\eta)-(1-\eta^{r+s}).
\end{align}
By basic real analysis it is easy to verify that the coefficient of $E_0$ in \eqref{new_eqn43} is positive. 
So, the inequality \eqref{new_eqn43} is equivalent to
\begin{align}\label{new_eqn51}
    2E_0  &\leq \dfrac{\alpha(1-\eta)-(1-\eta^{\alpha})}{(1-\eta) \left[2\alpha(1 - \eta)-(1-\eta^{\alpha})\right]}.
\end{align}
The right-hand side expression in \eqref{new_eqn51} is an increasing function of $\alpha \geq 2$. To verify this, it suffices to show that
\begin{align}
   \dfrac{\alpha(1-\eta)-(1-\eta^\alpha)}{(1-\eta)[2\alpha(1-\eta)-(1-\eta^\alpha)]}- \dfrac{2(1-\eta)-(1-\eta^2)}{(1-\eta)[4(1-\eta)-(1-\eta^2)]} &\geq 0 \Longleftrightarrow  \\  
   \dfrac{\alpha(1-\eta)-(1-\eta^\alpha)}{2\alpha(1-\eta)-(1-\eta^\alpha)}-\dfrac{1-\eta}{3-\eta} &\geq 0 \Longleftrightarrow \\
  1- \dfrac{\alpha(1-\eta)}{2\alpha(1-\eta)-(1-\eta^\alpha)}-\dfrac{1-\eta}{3-\eta}&\geq 0 \Longleftrightarrow \\
 \dfrac{2}{3-\eta}-\dfrac{\alpha(1-\eta)}{2\alpha(1-\eta)-(1-\eta^\alpha)} &\geq 0 \Longleftrightarrow \\
   4 \alpha (1-\eta)-2(1-\eta^\alpha)-\alpha (1-\eta)(3-\eta) &\geq 0 \Longleftrightarrow  \\
      4\alpha-4\alpha\eta-2+2\eta^\alpha- [3\alpha-4\alpha\eta+\alpha\eta^2 ]&\geq 0 \Longleftrightarrow  \\
     \alpha-2+2\eta^\alpha-\alpha\eta^2 &\geq 0 \Longleftrightarrow \\
   \alpha(1-\eta^2)-2(1-\eta^\alpha) &\geq 0 \Longleftrightarrow \\
 \alpha(1+\eta)-2\left(\dfrac{1-\eta^\alpha}{1-\eta} \right) &\geq 0 \Longleftrightarrow \\
  \alpha+\alpha\eta-2-2\sum_{i=1}^{\alpha-1} \eta^i &\geq 0 \Longleftrightarrow \\
 \alpha+\alpha\eta-2 -2(\alpha-1) \eta +2 \eta\sum_{i=1}^{\alpha-1}(1- \eta^{i-1}) &\geq 0 \Longleftrightarrow  \\
 \alpha-\alpha \eta -2+2\eta +2 \eta\sum_{i=1}^{\alpha-1}(1- \eta^{i-1}) &\geq 0 \Longleftrightarrow  \\
 (\alpha-2)(1-\eta)+2\sum_{i=1}^{\alpha-1} \eta(1-\eta^{i-1}) \geq 0. \label{new15eqn1}
\end{align}
The inequality \eqref{new15eqn1} holds because $\alpha=r+s\geq 2.$
A sufficient condition on $E$ for the inequality \eqref{new_eqn43} is thus obtained by keeping $2E$ not more than the minimum value of the right hand side of \eqref{new_eqn51} which is attained for $\alpha=2.$ 
This is given by 
\begin{equation}\label{new13_eqn7}
     2E \leq \dfrac{1}{3-\eta}=\dfrac{1+\xi}{2+3\xi}.
\end{equation}

We have thus shown that $\partial f_{M,\xi,\xi}(p)/\partial p_{r,s} > 0,$ whenever $p_{r,s} > 0$ and $r+s \geq 2$.
So, $f_{M,\xi, \xi}$ is a strictly increasing function of the variables $p_{r,s}$
such that $r+s \geq 2$.
Let $q$ be a minimizer of $f_{M,\xi,\xi},$ which exists because $f_{M,\xi,\xi}$ is a continuous function over the compact set \eqref{new12_eqn6}.
From the necessary condition derived earlier, we must have $q_{r,s}=0$ for all $r+s \geq 2$. 
Thus, we get 
\begin{equation}
    q_{0,0}=1-2E_0, \quad q_{0,1}=2E_0-q_{1,0},
\end{equation}
from \eqref{re:energycondi} and \eqref{new_eqn15}.
This gives 
\begin{align}\label{neq_eqn56}
f_{M,\xi,\xi}(q)= \eta^2 \bigg[1-4(1-\eta)E_0+4(1-4\eta+2\eta^2)E_0^2 - 4(1-\eta)^2E_0 q_{1,0} + 2(1-\eta)^2 q_{1,0}^2 \bigg]. 
 \end{align}
This is a convex polynomial in $q_{1,0},$
which has the unique minimizer $q_{1,0}=E_0$. In other words, the minimizer of $f_{M,\xi,\xi},$ subject to the constraints \eqref{new12_eqn6}, is given by $q_{0,0}=1-2E_0,$ $q_{0,1}=q_{1,0}=E_0,$ and $q_{m,n}=0$ for all $m,n$ with $m+n \geq 2$.

\emph{Part 2.}
By evaluating  \eqref{neq_eqn56} at the minimizer $q$ obtained in the first part, and re-substituting $\eta=1/(1+\xi),$ we get 
\begin{align}\label{new12_eqn5}
    f_{M,\xi,\xi}(q)=\dfrac{1}{(1+\xi)^2}\left[1-4 \left(\dfrac{\xi}{1+\xi} \right)E_0+6\left(\dfrac{\xi}{1+\xi} \right)^2E_0^2   \right],
\end{align}
which is a strictly decreasing function of $E_0$ in the interval 
$\left[0, (1+\xi)/(3\xi)\right]$. Also, we have $E_0 \leq E \in \left[0, (1+\xi)/(3\xi)\right],$ which follows from the hypothesis $E \leq (1+\xi)/(2+3\xi) < (1+\xi)/(3\xi)$. This completes the proof.
\end{proof}




\begin{lemma}\label{new_lem2}
The $M$-truncated energy-constrained channel fidelity \eqref{new_eqn30} has the unique optimal state given by \eqref{new5_eqn1}, whenever $\xi = \xi'$ and $2E \leq (1+\xi)/(2+3\xi)$.
Furthermore, we have
\begin{align}\label{new_eqn26}
    F_{E,M}(\mathcal{S}_{AB \to A'B'}, \mathcal{T}_{A\to B'}^\xi\otimes \mathcal{T}_{B\to A'}^{\xi}) &= \dfrac{1}{(1+\xi)^2}\left[1-4 \left(\dfrac{\xi}{1+\xi} \right)E+6\left(\dfrac{\xi}{1+\xi} \right)^2E^2   \right].
\end{align}
In particular, the the right hand side of \eqref{new_eqn26} is independent of the truncation parameter $M$.
\end{lemma}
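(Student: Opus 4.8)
The plan is to read the lemma off directly from Proposition~\ref{new_prop1} together with the reformulation in \eqref{new_eqn48}. First I would recall that, by \eqref{new_eqn47}, \eqref{new2_eqn11}, and \eqref{new_eqn48}, evaluating $F_{E,M}(\mathcal{S}_{AB \to A'B'}, \mathcal{T}_{A\to B'}^\xi\otimes \mathcal{T}_{B\to A'}^{\xi})$ is exactly the problem of minimizing $f_{M,\xi,\xi}(p)$ over probability vectors $p\in\mathbb{R}^{(M+1)^2}$ subject to $p_{m,n}\geq 0$, $\sum_{m,n=0}^{M}p_{m,n}=1$, and $\sum_{m,n=0}^{M}(m+n)p_{m,n}\leq 2E$; moreover, each such feasible $p$ corresponds, via $\vert\phi\rangle_{RAB}=\sum_{m,n=0}^{M}\sqrt{p_{m,n}}\vert m,n\rangle_R\vert m,n\rangle_{AB}$, to a unique pure state of the admitted form, and conversely.

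Next I would invoke Proposition~\ref{new_prop1}, which states precisely that, under the hypothesis $\xi=\xi'$ and $2E\leq (1+\xi)/(2+3\xi)$, this constrained minimization is attained at the unique point $p$ given by \eqref{new12_eqn3}, namely $p_{0,0}=1-2E$, $p_{0,1}=p_{1,0}=E$, and $p_{m,n}=0$ for all $m+n\geq 2$. Translating this probability vector back into a pure state via \eqref{new2_eqn11} yields exactly the state displayed in \eqref{new5_eqn1}, and the uniqueness of the optimal state is inherited from the uniqueness of the minimizer of $f_{M,\xi,\xi}$ over the feasible set.

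Finally, to obtain the closed form \eqref{new_eqn26}, I would evaluate $f_{M,\xi,\xi}$ at that minimizer; this computation has already been carried out inside the proof of Proposition~\ref{new_prop1}. Setting $E_0=E$ in \eqref{new12_eqn5} gives $f_{M,\xi,\xi}(p)=\dfrac{1}{(1+\xi)^2}\big[1-4(\xi/(1+\xi))E+6(\xi/(1+\xi))^2E^2\big]$, which is the claimed value, and since this expression does not involve $M$, the stated $M$-independence is immediate.

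As for difficulty, there is essentially no obstacle at this stage: the entire analytic content — the strict monotonicity of $f_{M,\xi,\xi}$ in the high-occupation variables $p_{r,s}$ with $r+s\geq 2$, the reduction to a convex quadratic in $p_{1,0}$, and the passage from the equality-constrained to the inequality-constrained problem — is already packaged in Proposition~\ref{new_prop1}. The only point requiring a moment's care is verifying that the infimum over the inequality-constrained feasible set equals the value at $E_0=E$ rather than at some smaller $E_0$; this is exactly the role of Part~2 of Proposition~\ref{new_prop1}, which shows that the minimum value is strictly decreasing in $E_0$ on an interval containing $[0,E]$, so that the energy constraint is necessarily saturated at the optimum.
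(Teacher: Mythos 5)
Your proposal is correct and follows essentially the same route as the paper: both reduce the $M$-truncated fidelity to the constrained minimization of $f_{M,\xi,\xi}$ via \eqref{new2_eqn11}/\eqref{new_eqn48}, invoke Proposition~\ref{new_prop1} for the unique minimizer and its value \eqref{new12_eqn5} at $E_0=E$, and inherit uniqueness and $M$-independence from there. Your closing remark correctly locates the only delicate point (saturation of the energy constraint) in Part~2 of Proposition~\ref{new_prop1}, which is exactly how the paper handles it.
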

\begin{proof}
Let $\psi_{RAB}$ be a pure state given by $\vert \psi \rangle_{RAB}=  \sum_{m,n=0}^{M} \sqrt{p_{m,n}} \vert m,n\rangle_R \vert m,n\rangle_{AB},$ where $p=(p_{m,n})_{m,n=0}^M$ is a probability vector such that $\sum_{m,n=0}^{M} (m+n) p_{m,n} \leq 2E$.
From \eqref{new2_eqn11}, we have
\begin{align}\label{new_eqn25}
    F(\mathcal{S}_{AB \to A'B'}(\phi_{RAB}), \mathcal{T}^{\xi}_{A \to B'} \otimes \mathcal{T}^{\xi}_{B \to A'} (\phi_{RAB}))= f_{M,\eta, \eta}(p).
\end{align}
By Proposition~\ref{new_prop1}, for $p$ in \eqref{new12_eqn3}, we have
\begin{align}\label{new_eqn49}
     F(\mathcal{S}_{AB \to A'B'}(\psi_{RAB}), \mathcal{T}_{A\to B'}^\xi\otimes \mathcal{T}_{B\to A'}^{\xi}(\psi_{RAB})) & \geq f_{M,\xi,\xi}(p) = \dfrac{1}{(1+\xi)^2}\left[1-4 \left(\dfrac{\xi}{1+\xi} \right)E+6\left(\dfrac{\xi}{1+\xi} \right)^2E^2   \right].
\end{align}
By taking infimum in \eqref{new_eqn49} over $\psi_{RAB},$ we get
\begin{align}\label{new12_eqn26}
    F_{E,M}(\mathcal{S}_{AB \to A'B'}, \mathcal{T}_{A\to B'}^\xi\otimes \mathcal{T}_{B\to A'}^{\xi}) &\geq \dfrac{1}{(1+\xi)^2}\left[1-4 \left(\dfrac{\xi}{1+\xi} \right)E+6\left(\dfrac{\xi}{1+\xi} \right)^2E^2   \right].
\end{align}
The inequality in \eqref{new_eqn49} is saturated for the state \eqref{new5_eqn1}, which corresponds to the minimizer of $f_{M,\xi,\xi}$ by Proposition~\ref{new_prop1}.
Therefore, the inequality \eqref{new12_eqn26} is actually an equality.
The uniqueness of the optimal state follows from Proposition~\ref{new_prop1}.
\end{proof}

\begin{theorem}\label{main_thm2}
The energy-constrained channel fidelity \eqref{re:bidirectional} has an optimal input state given by \eqref{new5_eqn1},
whenever $\xi=\xi'$ and $2E \leq (1+\xi)/(2+3\xi)$. Moreover, the optimal state is unique in the sense that there is no other optimal finite entangled superposition of twin-Fock states. The energy-constrained channel fidelity is given by 
\begin{equation} \label{new10_eqn2}
    F_E(\mathcal{S}_{AB \to A'B'}, \mathcal{T}_{A\to B'}^\xi\otimes \mathcal{T}_{B\to A'}^{\xi})=\dfrac{1}{(1+\xi)^2}\left[1-4 \left(\dfrac{\xi}{1+\xi} \right)E+6\left(\dfrac{\xi}{1+\xi} \right)^2E^2   \right].
\end{equation}

\end{theorem}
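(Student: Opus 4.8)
The plan is to mirror the argument used for Theorem~\ref{main_thm1}, substituting the bidirectional ingredients for the unidirectional ones. First I would invoke Proposition~\ref{new_prop2}, which sandwiches the energy-constrained channel fidelity \eqref{re:bidirectional} between the lower bound $1-\big[2\sqrt{1-(1-2E/(M+1))^2}+\sqrt{1-F_{E,M}(\mathcal{S}_{AB\to A'B'},\mathcal{T}^\xi_{A\to B'}\otimes\mathcal{T}^{\xi}_{B\to A'})}\big]^2$ and the upper bound $F_{E,M}(\mathcal{S}_{AB\to A'B'},\mathcal{T}^\xi_{A\to B'}\otimes\mathcal{T}^{\xi}_{B\to A'})$. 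Then I would substitute the closed-form value of $F_{E,M}$ supplied by Lemma~\ref{new_lem2}, namely $\tfrac{1}{(1+\xi)^2}\big[1-4(\xi/(1+\xi))E+6(\xi/(1+\xi))^2E^2\big]$, which holds for every $M\geq 1$ under the hypotheses $\xi=\xi'$ and $2E\leq(1+\xi)/(2+3\xi)$, and is in particular independent of the truncation parameter~$M$.

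Next I would pass to the limit $M\to\infty$. Since $1-2E/(M+1)\to 1$, the term $\sqrt{1-(1-2E/(M+1))^2}\to 0$, so both the lower and upper bounds converge to the same value; the squeeze theorem then yields \eqref{new10_eqn2}. The statement about the optimal input state follows by observing that the state in \eqref{new5_eqn1} is supported on the twin-Fock indices $\{(0,0),(0,1),(1,0)\}$, hence belongs to every $M$-truncated family with $M\geq 1$, and by Lemma~\ref{new_lem2} (equivalently Proposition~\ref{new_prop1}) it achieves output fidelity equal to $F_{E,M}=F_E$; therefore it is optimal for the untruncated problem \eqref{re:bidirectional}.

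For uniqueness, I would argue as follows. Let $\vert\psi\rangle_{RAB}$ be any finite entangled superposition of twin-Fock states that is optimal for $F_E(\mathcal{S}_{AB\to A'B'},\mathcal{T}^\xi_{A\to B'}\otimes\mathcal{T}^{\xi}_{B\to A'})$. Being a finite superposition, it is supported on indices $(m,n)$ with $m,n\leq M_0$ for some finite $M_0$, so for every $M\geq M_0$ it is a feasible input in the definition \eqref{new_eqn30} and attains the value $F_E=F_{E,M}$; hence it is optimal for the $M$-truncated problem. Since Lemma~\ref{new_lem2} asserts that \eqref{new5_eqn1} is the \emph{unique} optimal state for $F_{E,M}$, the state $\vert\psi\rangle_{RAB}$ must coincide with \eqref{new5_eqn1}.

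The conceptual heavy lifting — establishing strict positivity of $\partial f_{M,\xi,\xi}(p)/\partial p_{r,s}$ for $r+s\geq 2$, which forces the minimizer to the support $\{(0,0),(0,1),(1,0)\}$, together with the convexity in $p_{1,0}$ pinning down $p_{1,0}=E$ — has already been done in Proposition~\ref{new_prop1}. So the remaining work in this theorem is essentially bookkeeping: checking that the constraint $2E\leq(1+\xi)/(2+3\xi)$ is precisely the hypothesis required by Proposition~\ref{new_prop1} and by Lemma~\ref{new_lem2}, and that the limiting procedure and the finite-support reduction for uniqueness are valid. I expect the only genuinely delicate point is making the uniqueness reduction airtight: a finite superposition optimal for $F_E$ is not \emph{a priori} optimal for $F_{E,M}$ unless $F_E=F_{E,M}$, and this equality is exactly what the limiting argument together with the $M$-independence of $F_{E,M}$ guarantees.
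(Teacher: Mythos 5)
Your proposal is correct and follows essentially the same route as the paper's own proof: substitute the $M$-independent value of $F_{E,M}$ from Lemma~\ref{new_lem2} into the sandwich bounds of Proposition~\ref{new_prop2}, let $M\to\infty$, and deduce uniqueness by noting that any optimal finite superposition is feasible (and hence optimal, since $F_E=F_{E,M}$) for the $M$-truncated problem for all sufficiently large $M$, where Lemma~\ref{new_lem2} guarantees a unique optimizer. Your explicit flagging of why the finite-support reduction is valid is in fact slightly more careful than the paper's terse statement of the same step.
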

\begin{proof}
The equality  \eqref{new10_eqn2} follows directly by substituting the value of $F_{E,M}(\mathcal{S}_{AB \to A'B'}, \mathcal{T}_{A\to B'}^\xi\otimes \mathcal{T}_{B\to A'}^{\xi})$ in \eqref{new10_eqn1}, and then by taking limit the limit $M \to \infty$.
It thus follows by Lemma~\ref{new_lem2} that the state in \eqref{new5_eqn1} is an optimal state.

Any optimal finite entangled superposition of twin-Fock states for $F_{E}(\mathcal{S}_{AB \to A'B'}, \mathcal{T}_{A\to B'}^\xi\otimes \mathcal{T}_{B\to A'}^{\xi})$ is also optimal for $F_{E,M}(\mathcal{S}_{AB \to A'B'}, \mathcal{T}_{A\to B'}^\xi\otimes \mathcal{T}_{B\to A'}^{\xi})$ for large $M$. Moreover, for all $M,$ $F_{E,M}(\mathcal{S}_{AB \to A'B'}, \mathcal{T}_{A\to B'}^\xi\otimes \mathcal{T}_{B\to A'}^{\xi})$ has the same unique optimal state given by \eqref{new5_eqn1}, which follows from Lemma~\ref{new_lem2}. This implies the uniqueness of the optimal state for $F_{E}(\mathcal{S}_{AB \to A'B'}, \mathcal{T}_{A\to B'}^\xi\otimes \mathcal{T}_{B\to A'}^{\xi}),$ in the given sense.
\end{proof}

\subsection{Optimal input state for energy-constrained channel fidelity for $\xi' \geq 1$ and arbitrary $\xi$}

\label{app:bidir-opt-state-xi'-xi}

\begin{proposition}\label{new2_prop1}
The minimum of $f_{M,\eta,\eta'},$ subject to  $p_{m,n} \geq 0$ for all $m,n\in \{0,\ldots, M\},$ and the equality constraints
\begin{equation}\label{new2_eqn45}
    \sum_{m,n=0}^{M} p_{m,n}=1, \quad  \sum_{m,n=0}^{M} (m+n) p_{m,n} = 2E,
\end{equation}
is attained at the unique $p=(p_{m,n})_{m,n=0}^M$ in $\mathbb{R}^{(M+1)^2},$ given by
\begin{equation}\label{new2_eqn8}
    p_{m,n}= \begin{cases}
       1-2E & \text{ if } m=n=0, \\
       2E-p_{E} & \text{ if } m=0,n=1, \\
       0 & \text{ if } m+n \geq 2,
    \end{cases}
\end{equation}
whenever $\xi' \geq 1$ and $2E \leq (\xi'^2-1)/(\xi'(3\xi'-1))$. 
In \eqref{new2_eqn8}, we have
\begin{equation}\label{new2_eqn10}
  p_{E}=  \begin{cases}
   0 & \text{ if } 2E < \dfrac{(\xi - \xi') (1+\xi')}{\xi'^2 (1+\xi)}, \\
   2E & \text{ if } 2E < \dfrac{(\xi'-\xi)(1+\xi)(1+\xi')}{\xi^2 (1+\xi')^2+ (\xi-\xi')^2}, \\
   \dfrac{(\xi-\xi')(1+\xi)(1+\xi')+2E \xi'^2 (1+\xi)^2}{2 \left((\xi-\xi')^2+\xi \xi' (1+\xi)(1+\xi') \right)} & \text{ otherwise}.
\end{cases}
\end{equation}
\end{proposition}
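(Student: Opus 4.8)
The plan is to follow the two-part scheme used for Proposition~\ref{new_prop1}, now carrying the two distinct parameters $\eta=1/(1+\xi)$ and $\eta'=1/(1+\xi')$ throughout. First I would use the two equality constraints in \eqref{new2_eqn45} to eliminate $p_{0,0}$ and $p_{0,1}$ as dependent variables exactly as in \eqref{re:energycondi}--\eqref{new_eqn15} (with $2E_0$ replaced by $2E$), so that $f_{M,\xi,\xi'}$ becomes a smooth function of the $(M+1)^2-2$ independent coordinates $\{p_{r,s}\}\setminus\{(0,0),(0,1)\}$ on the compact feasible set, and a minimizer exists.

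\emph{Step 1: the support of any minimizer lies in $\{(0,0),(0,1),(1,0)\}$.} For each $(r,s)$ with $r+s\geq 2$ I would compute $\partial f_{M,\xi,\xi'}/\partial p_{r,s}$ from the analogue of \eqref{new_eqn14} with the traces $T_\xi^{mm'}$, $T_{\xi'}^{nn'}$ of \eqref{re:biditrace(a)}--\eqref{re:biditrace(b)}, substitute the dependent-variable expressions, and organize the result as
\begin{equation*}
  \frac{1}{2\eta\eta'}\,\frac{\partial f_{M,\xi,\xi'}(p)}{\partial p_{r,s}} = c_{r,s} + d_{r,s}\,p_{1,0} + \sum_{\substack{m,n=0\\ m+n\geq 2}}^{M}\Gamma_{m,n,r,s}\,p_{m,n}.
\end{equation*}
Using $T_\xi^{mr}\geq \eta^{m+r}$ and $T_{\xi'}^{ns}\geq \eta'^{n+s}\bigl[1+ns((1-\eta')/\eta')^2\bigr]$ (and the symmetric bounds), as in \eqref{coeqn2}--\eqref{coeqn5}, together with the telescoping identities $k(1-\eta)-(1-\eta^k)=(1-\eta)\sum_{i=0}^{k-1}(1-\eta^i)$, I would show $\Gamma_{m,n,r,s}>0$, following the chain \eqref{coeeqn}--\eqref{new15eqn1} carried out with two parameters. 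Next I would check $d_{r,s}\geq 0$, so that $c_{r,s}+d_{r,s}p_{1,0}\geq c_{r,s}$ on the feasible set, and reduce $c_{r,s}\geq 0$, just as in \eqref{new_eqn42}--\eqref{new_eqn51}, to an inequality $2E\leq h(\eta,\eta';r,s)$; a monotonicity analysis in the spirit of \eqref{new_eqn51}--\eqref{new15eqn1} should then show that, under the hypothesis $\xi'\geq 1$ (with $\xi$ arbitrary), the infimum of $h(\eta,\eta';r,s)$ over $\{(r,s):r+s\geq 2\}$ is attained at $(r,s)=(0,2)$ and equals $(\xi'^2-1)/(\xi'(3\xi'-1))$. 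Hence $\partial f_{M,\xi,\xi'}/\partial p_{r,s}>0$ whenever $p_{r,s}>0$ and $r+s\geq 2$, so any minimizer has $p_{m,n}=0$ for all $m+n\geq 2$, and then \eqref{re:energycondi}--\eqref{new_eqn15} give $p_{0,0}=1-2E$, $p_{0,1}=2E-p_{1,0}$ with $p_{1,0}\in[0,2E]$.

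\emph{Step 2: the reduced one-dimensional problem.} Writing $t=p_{1,0}$ and expanding the analogue of \eqref{neq_eqn56}, the objective becomes the explicit quadratic
\begin{equation*}
  (1+\xi)(1+\xi')\,f_{M,\xi,\xi'} = a\,t^2 + b\,t + c,\qquad a = 2\bigl(\eta^2+\eta'^2-\eta\eta'-\eta-\eta'+1\bigr),
\end{equation*}
with $b,c$ explicit in $E,\eta,\eta'$. Viewing $\eta^2-(\eta'+1)\eta+(\eta'^2-\eta'+1)$ as a quadratic in $\eta$, its discriminant is $-3(1-\eta')^2\leq 0$, so $a>0$ and the objective is strictly convex in $t$. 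Minimizing over $[0,2E]$ gives the unique minimizer $p_E=\max\{0,\min\{2E,-b/(2a)\}\}$; a direct computation then identifies $-b/(2a)\leq 0$ with the first case of \eqref{new2_eqn10}, $-b/(2a)\geq 2E$ with the second case, and $-b/(2a)$ itself with the closed form in the third case. Uniqueness of the global minimizer follows by combining strict convexity on $[0,2E]$ with the strict positivity of the partial derivatives established in Step~1.

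The hard part will be Step~1. In the symmetric case the binding constraint came uniformly from the value $r+s=2$, whereas here the asymmetry $\eta\neq\eta'$ forces one to keep the two channels' contributions separate, and the crucial point — that when $\xi'\geq 1$ and $\xi$ is unconstrained the relevant threshold on $2E$ is governed by $(r,s)=(0,2)$ and hence depends on $\xi'$ alone — requires a careful comparison, over all $(r,s)$ with $r+s\geq 2$, of the thresholds produced by $c_{r,s}\geq 0$. The positivity of the asymmetric coefficients $\Gamma_{m,n,r,s}$, by contrast, is essentially the bookkeeping of the symmetric estimate \eqref{coeeqn}--\eqref{new15eqn1} done with two distinct parameters, and Step~2 is a routine, if lengthy, quadratic minimization.
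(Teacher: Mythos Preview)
Your proposal follows the template of Proposition~\ref{new_prop1} literally, computing $\partial f_{M,\xi,\xi'}/\partial p_{r,s}$ for the full objective and then bounding the $\Gamma$-coefficients. The paper takes a genuinely different and much shorter route for the asymmetric case: it first observes that all the terms in \eqref{re:bidirectionalfunction} are nonnegative, so dropping every summand with $m+n\geq 2$ or $m'+n'\geq 2$ yields a lower bound $g_{M,\eta,\eta'}(p)\leq f_{M,\xi,\xi'}(p)$ that depends on $p$ only through $p_{0,0},p_{0,1},p_{1,0}$. One then differentiates $g$, not $f$, with respect to $p_{r,s}$ (through the constraint relations \eqref{re:energycondi}--\eqref{new_eqn15}); the resulting coefficients are elementary expressions in $\eta,\eta'$ alone, and the hypothesis $\xi'\geq 1$ (i.e.\ $\eta'\leq 1/2$) makes the coefficients of $p_{1,0}$ and of each $p_{m,n}$ with $m+n\geq 2$ manifestly nonnegative. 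The constant term is $\geq 0$ iff $2E\leq \bigl[(r+s)(1-\eta')-1\bigr]/\bigl[(1-\eta')(2(r+s)(1-\eta')-1)\bigr]$, and this function of $r+s$ is increasing (its derivative in $r+s$ is a perfect square), so the binding case is $r+s=2$, giving exactly the threshold $(\xi'^2-1)/(\xi'(3\xi'-1))$. Hence any minimizer of $g$ has $p_{r,s}=0$ for $r+s\geq 2$; at such points $f=g$, so the same point minimizes $f$. Step~2 is then the same quadratic minimization in $p_{1,0}$ you describe.

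The advantage of the paper's trick is that it entirely sidesteps the two-parameter $\Gamma_{m,n,r,s}$ estimate you flag as the hard part, and it makes transparent why the threshold depends only on $\xi'$ (the function $g$ involves $\eta$ only through the innocuous factor in the $p_{1,0}$ coefficient). Your direct approach is plausible in outline, but calling the asymmetric $\Gamma$-positivity ``essentially bookkeeping'' is optimistic: the symmetric argument in \eqref{coeeqn}--\eqref{new15eqn1} exploits $\eta=\eta'$ in several places (e.g.\ the factorizations leading to \eqref{appendix_B_eqn11}-type identities), and you would also have to verify $d_{r,s}\geq 0$ with two parameters, which is not automatic. If you want to carry out your plan, the cleanest fix is to adopt the paper's lower-bound device and abandon the full $\partial f/\partial p_{r,s}$ computation.
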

\begin{proof}
From \eqref{re:bidirectionalfunction}, and using the relations $\eta=1/(1+\xi)$ and $\eta'=1/(1+\xi'),$ we get
\begin{align}\label{new2_eqn1}
 f_{M,\xi, \xi'}(p) \geq \dfrac{1}{(1+\xi)(1+\xi')} \sum_{\substack{m,n,m',n'=0 \\m+n \leq 1, \\ m'+n'\leq 1}}^M  p_{m,n}p_{m',n'} T_{\xi}^{mm'}  T_{\xi'}^{nn'}.
 \end{align}
Simplify the right-hand side of  \eqref{new2_eqn1} using $\eta=1/(1+\xi)$ and $\eta'=1/(1+\xi')$ to get
\begin{align}\label{new2_eqn2}
 f_{M,\xi, \xi'}(p) \geq \eta \eta' \left[ p_{0,0}^2+2 p_{0,0}(\eta' p_{0,1}+\eta p_{1,0})+2 \eta \eta' p_{0,1}p_{1,0}+ (\eta'^2+(1-\eta')^2)p_{0,1}^2+ (\eta^2+(1-\eta)^2)p_{1,0}^2\right].
 \end{align}
Let us denote the right-hand side of \eqref{new2_eqn2} by $g_{M,\eta,\eta'}:\mathbb{R}^{(M+1)^2} \to \mathbb{R},$
\begin{align}
    g_{M,\eta, \eta'}(p)& \coloneqq \eta \eta' \left[ p_{0,0}^2+2 p_{0,0}(\eta' p_{0,1}+\eta p_{1,0})+2 \eta \eta' p_{0,1}p_{1,0}+ (\eta'^2+(1-\eta')^2)p_{0,1}^2+ (\eta^2+(1-\eta)^2)p_{1,0}^2\right].
\end{align}
We thus have
\begin{align}\label{new2_eqn3}
 f_{M,\eta, \eta'}(p) \geq  g_{M,\eta, \eta'}(p).
 \end{align}
We will show that $g_{M,\eta,\eta'}$ has a unique minimizer, and then show that it is also the minimizer of $f_{M,\xi,\xi'}$.
Differentiate $g_{M,\eta,\eta'}$ with respect to $p_{r,s}$ for $r+s \geq 2$. We get
\begin{multline}\label{new2_eqn4}
\dfrac{1}{2 \eta \eta'}\pdv{g_{M,\eta,\eta'}(p)}{p_{r,s}} = \left((r+s)(1-\eta')-1 \right)p_{0,0}- \left((r+s)(1-2\eta')(1-\eta')+\eta' \right)p_{0,1} +\eta \left((r+s)(1-\eta')-1 \right)p_{1,0}.
 \end{multline}
Substitute the values of $p_{0,0}$ and $p_{0,1}$ from \eqref{re:energycondi} and \eqref{new_eqn15} in \eqref{new2_eqn4},  and simplify to get
\begin{align}\label{new2_eqn5}
\dfrac{1}{2 \eta \eta'}\pdv{g_{M,\eta,\eta'}(p)}{p_{r,s}} &= (r+s)(1-\eta')-1 - 2E\left((1-\eta')(2(r+s)(1-\eta')-1) \right) \nonumber \\
&+ \left[\eta ((r+s)(1-\eta')-1)+(r+s)(1-2\eta')(1-\eta')\right]p_{1,0}\nonumber\\
&+\sum_{\substack{m,n=0 \\ m+n \geq 2}}^M \left[(1-\eta')\left(2(r+s)(1-\eta')-1 \right)(m+n)+(r+s)(1-\eta')(2(1-\eta')-1)+\eta' \right]p_{m,n}.
 \end{align}
The coefficients of $p_{m,n}$ are positive which follows from the assumption $\eta' = 1/(1+\xi') \leq 1/2$. The coefficient of $p_{1,0}$ is non-negative. Also, the remaining term of \eqref{new2_eqn5} is non-negative if
\begin{align}\label{new2_eqn6}
 2E \leq \dfrac{(r+s)(1-\eta')-1}{(1-\eta')(2(r+s)(1-\eta')-1)}.
\end{align}
The right-hand side of equation \eqref{new2_eqn6} is an increasing function of $r+s \geq 2$, which follows because its derivative with respect to $r+s$ is given by $[1-2(r+s)(1-\eta')]^{-2}$ and thus is non-negative for all $r+s$. 
Therefore, a sufficient condition on $E$ for $\partial g_{M,\eta,\eta'}(p)/\partial p_{r,s} > 0$ is obtained from \eqref{new2_eqn6} by substituting $r+s=2$ in the right-hand side.
A sufficient condition is
\begin{align}
 2E &\leq \dfrac{2(1-\eta')-1}{(1-\eta')(4(1-\eta')-1)} \\
 &= \dfrac{1-2\eta'}{(1-\eta')(3-4\eta')} \\
 &= \dfrac{\xi'^2-1}{\xi'(3\xi'-1)},
\end{align}
which holds by hypothesis. So, $g_{M,\eta, \eta'}$ is a strictly increasing function of $p_{r,s}$ for all $r,s$ with $r+s \geq 2$. Therefore, a necessary condition for any of its minimizer $p \in \mathbb{R}^{(M+1)^2}$ is $p_{r,s}=0$ for all $r,s$ with $r+s \geq 2$. 
Also, we know by \eqref{re:bidirectionalfunction} that $ f_{M,\xi, \xi'}=g_{M,\eta, \eta'}$ at such points. Therefore, it follows by \eqref{new2_eqn3} that the minimum value of $f_{M,\eta, \eta'}$ is obtained at a point $p$ for which $p_{r,s}=0$ for all $r,s$ with $r+s \geq 2$. Let $q$ be such a point in the feasible region. From \eqref{new_eqn45} we have $q_{0,0}=1-2E,$
$q_{0,1}=2E-q_{1,0}$. Note that $q_{1,0} \in [0,2E]$. This gives
\begin{align}\label{new2_eqn7}
f_{M,\xi, \xi'}(q) &=g_{M,\eta,\eta'}(q) \\
&= (1-2E)^2+2(1-2E)(\eta' (2E-q_{1,0})+\eta q_{1,0})+2 \eta \eta' (2E-q_{1,0})q_{1,0} \nonumber \\
&\hspace{0.5cm} + (\eta'^2+(1-\eta')^2)(2E-q_{1,0})^2+ (\eta^2+(1-\eta)^2)q_{1,0}^2 \\
&= (1-2E)(1-2E+4\eta'E)+4E^2(\eta'^2 + (1-\eta')^2) \nonumber \\
&\hspace{0.5cm}+ \left[2(1-2E)(\eta-\eta')+4\eta\eta'E-4E(\eta'^2 + (1-\eta')^2)  \right]q_{1,0}+ 2\left[(\eta-\eta')^2+(1-\eta)(1-\eta') \right]q_{1,0}^2\label{new12_eqn7}.
 \end{align}
It is a quadratic polynomial in $q_{1,0},$ and the coefficient of $q_{1,0}^2$ is positive. The global minimum of a quadratic polynomial $c+bx+ax^2$ with $a>0$ is attained by $x=-b/(2a)$. Also, the polynomial is a decreasing function of $x < -b/(2a)$ and an increasing function of $x >-b/(2a)$. So, the minimum of $c+bx+ax^2$ over $[0,2E]$ occurs at
\begin{equation}\label{new2_eqn9}
  x=  \begin{cases}
   0 & \text{ if } -b/2a < 0, \\
   2E & \text{ if } -b/2a  > 2E, \\
   -\dfrac{b}{2a} & \text{ otherwise}.
\end{cases}
\end{equation}
By comparing the quadratic polynomial \eqref{new12_eqn7} with $c+bx+ax^2,$ and from \eqref{new2_eqn9}, the minimizer of $f_{M,\xi, \xi'}$ is given by \eqref{new2_eqn8}.
\end{proof}


\begin{lemma}\label{new2_lem2}
The $M$-truncated energy-constrained channel fidelity \eqref{new_eqn30} has the unique solution given by  \eqref{new5_eqn2},
whenever $\xi' \geq 1$ and $2E \leq \min\{(\xi'^2-1)/(\xi'(3\xi'-1)), (1+\xi)/(2\xi)\}$.
Also, the energy-constrained channel fidelity is
\begin{align}\label{new2_eqn26}
    F_{E,M}(\mathcal{S}_{AB\to B'A'}, \mathcal{T}_{A\to B'}^\xi\otimes \mathcal{T}_{B\to A'}^{\xi'}) =
    \begin{cases}
     1-4\left(\dfrac{\xi'}{1+\xi'}\right)E+8\left(\dfrac{\xi'}{1+\xi'}\right)^2E^2 & \text{ if } 2E < \dfrac{(\xi - \xi') (1+\xi')}{\xi'^2 (1+\xi)}, \\
   1-4\left(\dfrac{\xi}{1+\xi}\right)E+8\left(\dfrac{\xi}{1+\xi}\right)^2E^2 & \text{ if } 2E < \dfrac{(\xi'-\xi)(1+\xi)(1+\xi')}{\xi^2 (1+\xi')^2+ (\xi-\xi')^2}, \\
   \dfrac{4ac-b^2}{4a} & \text{ otherwise},
\end{cases}
\end{align}
where
\begin{align}
    a&=\dfrac{2(\xi'-\xi)^2}{(1+\xi)^2(1+\xi')^2}+\dfrac{2\xi\xi'}{(1+\xi)(1+\xi')},\\
    b&=\dfrac{2(\xi'-\xi)}{(1+\xi')(1+\xi')}+\dfrac{4 \xi' \left(\xi (1-\xi')-2\xi' \right)E}{(1+\xi)(1+\xi')^2},\\
    c&=1-\dfrac{4\xi'E}{(1+\xi')}+\dfrac{8(\xi'E)^2}{(1+\xi')^2}.
\end{align}
 In particular, the the right hand side of \eqref{new2_eqn26} is independent of the truncation parameter $M$.
\end{lemma}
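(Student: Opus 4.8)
The plan is to mirror the proofs of Lemma~\ref{appendix_B_lemma3} and Lemma~\ref{new_lem2}, the only new ingredient being that Proposition~\ref{new2_prop1} is stated for the \emph{equality}-constrained problem and must be lifted to the inequality-constrained one. First I would fix an arbitrary feasible pure state $\vert \psi \rangle_{RAB} = \sum_{m,n=0}^{M} \sqrt{p_{m,n}}\,\vert m,n\rangle_R \vert m,n\rangle_{AB}$ with $p$ a probability vector obeying $\sum_{m,n=0}^{M}(m+n)p_{m,n} \le 2E$. By \eqref{new2_eqn11}, $F(\mathcal{S}_{AB \to A'B'}(\psi_{RAB}), \mathcal{T}_{A\to B'}^\xi \otimes \mathcal{T}_{B\to A'}^{\xi'}(\psi_{RAB})) = f_{M,\xi,\xi'}(p)$. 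Setting $2E_0 \coloneqq \sum_{m,n}(m+n)p_{m,n} \le 2E$, I would invoke Proposition~\ref{new2_prop1} with energy budget $2E_0$ in place of $2E$; its hypotheses hold since $\xi' \ge 1$ and $2E_0 \le 2E \le (\xi'^2-1)/(\xi'(3\xi'-1))$. This yields $f_{M,\xi,\xi'}(p) \ge V(E_0)$, where $V(E_0)$ denotes the minimum value, i.e.\ the right-hand side of \eqref{new2_eqn26} with every occurrence of $E$ replaced by $E_0$ (and the three thresholds in \eqref{new2_eqn10} adjusted accordingly).

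The second and central step is to prove that $V$ is non-increasing on $[0,E]$, so that $f_{M,\xi,\xi'}(p) \ge V(E_0) \ge V(E)$ for every feasible $p$, and hence $F_{E,M}(\mathcal{S}_{AB\to B'A'}, \mathcal{T}_{A\to B'}^\xi \otimes \mathcal{T}_{B\to A'}^{\xi'}) \ge V(E)$. On each of the three regimes of \eqref{new2_eqn10}, $V(E_0)$ is an explicit quadratic in $E_0$ (for the third regime, $c - b^2/(4a)$ with $a$ constant, $b$ affine in $E_0$, and $c$ quadratic in $E_0$). I would (i) check continuity of $V$ across the two thresholds; (ii) verify that the thresholds are ordered compatibly with $2E \le (\xi'^2-1)/(\xi'(3\xi'-1))$; and (iii) show that on $[0,E]$ each active quadratic branch lies on the decreasing side of its vertex. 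It is at (iii) that the bound $2E \le (1+\xi)/(2\xi)$ enters: it places $E$ on the decreasing side of the $\xi$-branch $1 - 4\tfrac{\xi}{1+\xi}E_0 + 8\bigl(\tfrac{\xi}{1+\xi}\bigr)^2 E_0^2$, whose vertex sits at $E_0 = (1+\xi)/(4\xi)$ --- exactly the analogue of the second part of the proofs of Propositions~\ref{new3_prop1} and~\ref{new_prop1}, with the other hypothesis playing the role of their ``Part~1''.

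For the matching upper bound, I would take $p$ to be the minimizer supplied by Proposition~\ref{new2_prop1} at energy $2E$, i.e.\ \eqref{new2_eqn8} with $p_E$ as in \eqref{new2_eqn10}; the associated state is exactly \eqref{new5_eqn2}, it saturates the energy constraint, and \eqref{new2_eqn11} gives $f_{M,\xi,\xi'}(p) = V(E)$, so $F_{E,M} = V(E)$. Substituting $q_{1,0}=p_E$ (three cases) into the quadratic \eqref{new12_eqn7} and simplifying reproduces \eqref{new2_eqn26} in the $(a,b,c)$-parametrization, and the independence of $M$ is then manifest. For uniqueness: any optimal finite entangled superposition of twin-Fock states must, by Step~1 together with the strict form of the monotonicity from Step~2, saturate the energy constraint ($E_0 = E$), and then by the uniqueness clause of Proposition~\ref{new2_prop1} it must coincide with \eqref{new5_eqn2}.

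I expect the main obstacle to be Step~2: the value function is genuinely piecewise, with case boundaries depending on both $\xi$ and $\xi'$, so one has to order the three thresholds correctly, confirm continuity of $V$ across them, and establish monotonicity branch by branch --- and it is here that both hypotheses on $E$ are used (one to apply Proposition~\ref{new2_prop1} on each sub-budget $2E_0$, the other to control the $\xi$-branch). A secondary, purely computational nuisance is the algebraic identification of the closed form coming from $c - b^2/(4a)$ with the displayed expression in \eqref{new2_eqn26}, and checking that lowering the energy parameter of Proposition~\ref{new2_prop1} from $E$ to an arbitrary $E_0 \in [0,E]$ preserves all of its hypotheses, which it does since the right-hand sides of its threshold conditions are only being compared against a smaller number.
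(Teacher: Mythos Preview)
Your plan is correct and follows the same overall architecture as the paper: lower-bound an arbitrary feasible state via Proposition~\ref{new2_prop1} applied at its actual energy $2E_0$, push from $E_0$ down to $E$ by monotonicity of the value function, then match the lower bound with the explicit state \eqref{new5_eqn2}. The uniqueness argument you sketch is also the right one (and in fact more explicit than the paper's own proof, which does not spell it out).

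The one place where the paper's execution differs from yours, and which directly addresses the obstacle you flag in Step~2, is the monotonicity argument. Rather than analysing the piecewise value function $V(E_0)$ branch by branch, ordering the thresholds, and checking continuity across them, the paper observes that the quadratic in $q_{1,0}$ from \eqref{new12_eqn7} has coefficients $h_0,h_1,h_2$ that are each decreasing in the energy parameter on the relevant range (the bound $E < 1/(4(1-\eta'))$, which follows from $2E \le (\xi'^2-1)/(\xi'(3\xi'-1))$, controls $h_1$ and $h_2$; $h_0$ is constant). Hence for every $x \ge 0$ one has $h_2(E_0)+h_1(E_0)x+h_0(E_0)x^2 \ge h_2(E)+h_1(E)x+h_0(E)x^2$, and since $p_{E_0} \in [0,2E_0] \subseteq [0,2E]$ one may then minimise the right-hand side over $x \in [0,2E]$ to land at $V(E)$, \emph{without} needing to know which regime $E_0$ falls into. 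This bypasses entirely the threshold-ordering and continuity checks you anticipate; the case split is only on the regime of $E$, not of $E_0$. The hypothesis $2E \le (1+\xi)/(2\xi)$ is still needed, exactly where you say, to handle the $p_{E_0}=2E_0$ branch (the paper's Case~2). Your direct approach would also work, but the coefficient-wise argument is shorter.
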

\begin{proof}
Let $\phi_{RAB}$ be any pure state given by $\vert \phi \rangle_{RAB}=  \sum_{m,n=0}^{M} \sqrt{q_{m,n}} \vert m,n\rangle_R \vert m,n\rangle_{AB},$ where $q=(q_{m,n})_{m,n=0}^M$ is a probability vector such that $\sum_{m,n=0}^{M} (m+n) q_{m,n} =2E_0 \leq 2E,$ and let $\psi_{RAB}$ be the pure state given in \eqref{new5_eqn2}.
We know from \eqref{new2_eqn11} that 
\begin{equation}
    F(\mathcal{S}_{AB \to A'B'}(\phi_{RAB}), \mathcal{T}^{\xi}_{A \to B'} \otimes \mathcal{T}^{\xi'}_{B \to A'} (\phi_{RAB})) = f_{M,\xi,\xi'}(q).
\end{equation}
By Proposition~\ref{new2_prop1}, we thus get
\begin{align}\label{new2_eqn25}
      F(\mathcal{S}_{AB \to A'B'}(\phi_{RAB}), \mathcal{T}^{\xi}_{A \to B'} \otimes \mathcal{T}^{\xi'}_{B \to A'} (\phi_{RAB})) \geq    \eta \eta'\left(h_2(E_0)+h_1(E_0)p_{E_0}+h_0(E_0)p_{E_0}^2 \right),
\end{align}
where $\eta=1/(1+\xi),$ $\eta'=1/(1+\xi'),$  $p_{E_0}$ is given by \eqref{new2_eqn10}, and
 $h_0,h_1,h_2$ are polynomial functions defined by
 \begin{align}
     h_0(x)&=2((\eta-\eta')^2+(1-\eta)(1-\eta')),\\
     h_1(x)&=2(\eta-\eta')+4(1-\eta')(2\eta'-1-\eta)x,\\
     h_2(x)&=1-4(1-\eta')x+8(1-\eta')^2x^2.
 \end{align}
The polynomials $h_0(x),h_1(x), h_2(x)$ are decreasing functions of $x$ for $x < 1/(4(1-\eta'))$.
From the given hypothesis, and the relation $\xi'=(1-\eta')/\eta',$ we have
\begin{align}
    E &\leq \dfrac{\xi'^2-1}{2\xi'(3\xi'-1)}\\
    &=\dfrac{1-2\eta'}{2(1-\eta')(3-4\eta')} \\
    & = \dfrac{1}{2(1-\eta')} \left(1- \dfrac{2(1-\eta')}{3-4\eta'} \right)\\
    &< \dfrac{1}{2(1-\eta')} \left(1- \dfrac{2(1-\eta')}{4-4\eta'} \right)\\
    &= \dfrac{1}{4(1-\eta')}.
\end{align}
This implies $h_i(E_0) \geq h_i(E)$ for $i \in \{0,1,2\}$.

Our proof is divided into the three cases, based on the conditions in \eqref{new2_eqn26}. We use
Proposition  \ref{new2_prop1} in each case. 

\emph{Case 1.}
Suppose $2E < (\xi - \xi') (1+\xi')/(\xi'^2 (1+\xi))$.

The minimum value of the right-hand side term of \eqref{new2_eqn25} occurs at $p_{1,0}=0$. We thus get
\begin{align}\label{new2_eqn13}
    F(\mathcal{S}_{AB \to A'B'}(\phi_{RAB}), \mathcal{T}^{\xi}_{A \to B'} \otimes \mathcal{T}^{\xi'}_{B \to A'} (\phi_{RAB})) & \geq \eta \eta' h_2(E_0) \\
    & \geq \eta \eta' h_2(E)\\
    & = F(\mathcal{S}_{AB \to A'B'}(\psi_{RAB}), \mathcal{T}^{\xi}_{A \to B'} \otimes \mathcal{T}^{\xi'}_{B \to A'} (\psi_{RAB})).
\end{align} 

\emph{Case 2.}
Suppose $2E < (\xi'-\xi)(1+\xi)(1+\xi')/(\xi^2 (1+\xi')^2+ (\xi-\xi')^2)$.

The minimum value of the right-hand side term of \eqref{new2_eqn25} occurs at $p_{1,0}=2E$. 
By substituting $p_{1,0}=2E_0$ in \eqref{new2_eqn25} and then by simplifying, we get
\begin{align}\label{new2_eqn14}
     F(\mathcal{S}_{AB \to A'B'}(\phi_{RAB}), \mathcal{T}^{\xi}_{A \to B'} \otimes \mathcal{T}^{\xi'}_{B \to A'} (\phi_{RAB})) \geq \eta \eta' (1-4E_0(1-\eta)+8(1-{\eta})^2 E_0^2).
\end{align}
The polynomial $1-4(1-\eta)x+8(1-{\eta})^2 x^2$ is decreasing for $x \leq 1/(4(1-\eta))$. Since we have $E\leq (1+\xi)/(4\xi)=1/(4(1-\eta)),$ by \eqref{new2_eqn14} we get
\begin{align}
     F(\mathcal{S}_{AB \to A'B'}(\phi_{RAB}), \mathcal{T}^{\xi}_{A \to B'} \otimes \mathcal{T}^{\xi'}_{B \to A'} (\phi_{RAB})) &\geq \eta \eta' ( 1-4(1-\eta)E+8(1-{\eta})^2 E^2) \\
   & =  F(\mathcal{S}_{AB \to A'B'}(\psi_{RAB}), \mathcal{T}^{\xi}_{A \to B'} \otimes \mathcal{T}^{\xi'}_{B \to A'} (\psi_{RAB})).
\end{align}

\emph{Case 3.}
In this case we have
\begin{equation}
2E \geq \max \left\{\frac{(\xi - \xi') (1+\xi')}{\xi'^2 (1+\xi)},  \frac{(\xi'-\xi)(1+\xi)(1+\xi')}{\xi^2 (1+\xi')^2+ (\xi-\xi')^2}\right\}.    
\end{equation}

Now, for all $x \geq 0,$ we have
\begin{align}
    h_2(E_0)+h_1(E_0)x+h_0(E_0)x^2 &\geq  h_2(E)+h_1(E)x+h_0(E)x^2 \label{new13_eqn8} \\ 
    &\geq h_2(E)+h_1(E)p_{E}+h_0(E)p_E^2. \label{new13_eqn9}
\end{align}
In \eqref{new13_eqn8}, we used the fact that $h_i(E_0) \geq h_i(E)$ for all $i \in \{0,1,2\};$  \eqref{new13_eqn9} follows because the global minimum of the polynomial $h_2(E)+h_1(E)x+h_0(E)x^2$ is attained at $x=p_E$. We thus have
\begin{align}\label{new12_eqn8}
    h_2(E_0)+h_1(E_0)p_{E_0}+h_0(E_0)p_{E_0}^2 \geq h_2(E)+h_1(E)p_{E}+h_0(E)p_E^2.
\end{align}
From \eqref{new2_eqn25} and \eqref{new12_eqn8}, we thus get
\begin{align}
     F(\mathcal{S}_{AB \to A'B'}(\phi_{RAB}), \mathcal{T}^{\xi}_{A \to B'} \otimes \mathcal{T}^{\xi'}_{B \to A'} (\phi_{RAB})) &\geq F(\mathcal{S}_{AB \to A'B'}(\psi_{RAB}), \mathcal{T}^{\xi}_{A \to B'} \otimes \mathcal{T}^{\xi'}_{B \to A'} (\psi_{RAB})).
\end{align}

In all the cases, we proved that
\begin{align}
    F(\mathcal{S}_{AB \to A'B'}(\phi_{RAB}), \mathcal{T}^{\xi}_{A \to B'} \otimes \mathcal{T}^{\xi'}_{B \to A'} (\phi_{RAB})) \geq F(\mathcal{S}_{AB \to A'B'}(\psi_{RAB}), \mathcal{T}^{\xi}_{A \to B'} \otimes \mathcal{T}^{\xi'}_{B \to A'} (\psi_{RAB})).
\end{align}
This means that the state in \eqref{new5_eqn2} is optimal. Also, the value of the energy-constrained channel fidelity \eqref{new2_eqn26} can be obtained by direct substitution of the minimizer $p_E$ from \eqref{new2_eqn10}.

\end{proof}

\begin{theorem}\label{main_thm3}
An optimal state for the energy-constrained channel fidelity \eqref{re:bidirectional} is given by \eqref{new5_eqn2},
whenever  $\xi' \geq 1$ and $2E \leq \min\{(\xi'^2-1)/(\xi'(3\xi'-1)), (1+\xi)/(2\xi)\}$. Moreover, the optimal state is unique in the sense that there is no other optimal finite entangled superposition of twin-Fock states. The value of the energy-constrained channel fidelity $F_E(\mathcal{S}_{AB \to A'B'}, \mathcal{T}_{A\to B'}^\xi\otimes \mathcal{T}_{B\to A'}^{\xi'})$ is the same as the value of any of its $M$-truncated counterpart given in \eqref{new2_eqn26}.
\end{theorem}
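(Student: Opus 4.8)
The plan is to mirror the proofs of Theorems~\ref{main_thm1} and \ref{main_thm2}, combining the truncation sandwich of Proposition~\ref{new_prop2} with the explicit solution of the $M$-truncated problem established in Lemma~\ref{new2_lem2}. First I would invoke the two-sided bound \eqref{new10_eqn1}, which holds for arbitrary noise parameters $\xi,\xi'$, and substitute into it the value of $F_{E,M}(\mathcal{S}_{AB\to A'B'}, \mathcal{T}_{A\to B'}^\xi\otimes\mathcal{T}_{B\to A'}^{\xi'})$ given by \eqref{new2_eqn26}. Since that value does not depend on $M$, letting $M\to\infty$ drives the lower-bound correction term $2\sqrt{1-(1-2E/(M+1))^2}$ to zero, so both sides of \eqref{new10_eqn1} converge to the same constant. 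This identifies $F_E(\mathcal{S}_{AB\to A'B'}, \mathcal{T}_{A\to B'}^\xi\otimes\mathcal{T}_{B\to A'}^{\xi'})$ with the expression in \eqref{new2_eqn26}.

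Next, for optimality, I would observe that the state in \eqref{new5_eqn2} is a finite entangled superposition of twin-Fock states, hence a feasible input for $F_{E,M}$ for every $M\geq 1$; by Lemma~\ref{new2_lem2} it attains $F_{E,M}$, and by the first step $F_{E,M}$ equals $F_E$. Therefore \eqref{new5_eqn2} attains $F_E$ and is an optimal input state.

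For uniqueness, I would argue exactly as in Theorem~\ref{main_thm2}. Let $\vert\psi\rangle_{RAB}=\sum_{m,n=0}^{M_0}\sqrt{q_{m,n}}\vert m,n\rangle_R\vert m,n\rangle_{AB}$ be any optimal finite entangled superposition of twin-Fock states, supported up to level $M_0$. For every $M\geq M_0$ it is a feasible point for the $M$-truncated problem attaining value $F_E=F_{E,M}$, hence it is an optimal input for $F_{E,M}$. By Lemma~\ref{new2_lem2}, $F_{E,M}$ has a unique optimal state, namely \eqref{new5_eqn2}; consequently $\vert\psi\rangle_{RAB}$ must coincide with \eqref{new5_eqn2}. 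The stated value of the fidelity is then exactly \eqref{new2_eqn26}.

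The genuinely hard work has already been carried out in Proposition~\ref{new2_prop1} (the constrained quadratic minimization with the piecewise minimizer $p_E$ from \eqref{new2_eqn10}) and Lemma~\ref{new2_lem2} (transferring it to the truncated channel fidelity and checking $M$-independence of \eqref{new2_eqn26}). The remaining step is a pure limiting argument; the only point demanding a little care is verifying that the hypotheses $\xi'\geq 1$ and $2E\leq\min\{(\xi'^2-1)/(\xi'(3\xi'-1)),(1+\xi)/(2\xi)\}$ assumed here are precisely those required by Lemma~\ref{new2_lem2}, so that \eqref{new2_eqn26} is valid for all $M$, and confirming that taking $M\to\infty$ poses no issue in the three-case formula for $F_{E,M}$ — which it does not, since that formula is manifestly independent of $M$.
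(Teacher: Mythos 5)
Your proposal is correct and follows essentially the same route as the paper, which proves Theorem~\ref{main_thm3} by repeating the argument of Theorem~\ref{main_thm2} with Lemma~\ref{new2_lem2} in place of Lemma~\ref{new_lem2}: the sandwich bound of Proposition~\ref{new_prop2} together with the $M$-independence of \eqref{new2_eqn26} pins down $F_E$ in the limit $M\to\infty$, and uniqueness follows because any optimal finite superposition is optimal for the $M$-truncated problem for all sufficiently large $M$. Your write-up is, if anything, more explicit than the paper's own one-line proof.
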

\begin{proof}
The proof follows similar arguments as given in the proof of Theorem~\ref{main_thm2}, and using Lemma~\ref{new2_lem2}.
\end{proof}


\end{document}